\DeclareMathAlphabet{\bit}{OML}{cmm}{b}{it}
\def\proof{{\bf Proof. }}
\def\endproof{\hfill$\blacksquare$}
\def\vect{\mathrm{vec}}
\def\ad{\mathrm{ad}}           % nice less than or equal to sign
\def\<{\leqslant}           % nice less than or equal to sign
\def\>{\geqslant}           % nice larger than or equal to sign
\def\d{\partial}
\def\wh{\widehat}
\def\wt{\widetilde}
\def\Re{\mathrm{Re}}   % real part
\def\Im{\mathrm{Im}}   % imaginary part
\def\cH{\mathcal{H}}   % Hardy space
\def\mA{\mathbb{A}}    % space of real antisymmetric matrices
\def\mZ{\mathbb{Z}}    % set of integers
\def\mR{\mathbb{R}}    % real line
\def\mC{\mathbb{C}}    % complex plane
\def\Tr{\mathrm{Tr}}       % matrix trace
\def\rT{\mathrm{T}}        % matrix transpose
\def\bS{\mathbf{S}}
\def\bE{\mathbf{E}}    % expectation
\def\rprod{\mathop{\overrightarrow{\prod}}}
\def\bra{{\langle}}
\def\ket{{\rangle}}
\def\Bra{\left\langle}
\def\Ket{\right\rangle}
\def\re{\mathrm{e}}        % number e
\def\rd{\mathrm{d}}        % differential
\def\bJ{\mathbf{J}}
\def\sj{\mathsf{j}}
\def\br{\mathbf{r}}
\def\x{\times}
\def\ox{\otimes}
\def\op{\oplus}
\def\cZ{{\mathcal Z}}
\def\cX{\mathcal{X}}
\def\cK{\mathcal{K}}
\def\cC{\mathcal{C}}
\def\cI{\mathcal{I}}
\def\cP{\mathcal{P}}
\def\cQ{\mathcal{Q}}
\def\cA{\mathcal{A}}
\def\cE{\mathcal{E}}
\def\bL{\mathbf{L}}
\def\mH{\mathbb{H}}
\def\mS{\mathbb{S}}
\def\mT{\mathbb{T}}
\def\mZ{\mathbb{Z}}
\def\eps{\epsilon}
\def\Ups{\Upsilon}
\def\diag{\mathop{\mathrm{diag}}}    % diagonal matrix
\begin{document}

\begin{frontmatter}
%\runtitle{Insert}  % Running title for regular
                                              % papers but only if the title
                                              % is over 5 words. Running title
                                              % is not shown in output.

\title{Direct Coupling Coherent Quantum Observers with Discounted Mean Square Performance
Criteria and Penalized Back-action\thanksref{footnoteinfo}} % Title, preferably not more
                                                % than 10 words.

\thanks[footnoteinfo]{This work is supported 
by the Air Force Office of Scientific Research (AFOSR) under agreement number FA2386-16-1-4065 and the Australian Research Council under grant DP180101805. A brief version \cite{VP_2016} of this paper was presented at the IEEE 2016 Conference on Norbert Wiener in the 21st Century.
}

\author[ANU]{Igor G. Vladimirov}\ead{igor.g.vladimirov@gmail.com},\qquad    % Add the
\author[ANU]{Ian R. Petersen}\ead{i.r.petersen@gmail.com}               % e-mail address

\address[ANU]{Australian National University, Canberra, Australia}  % Please supply

\begin{keyword}                           % Five to ten keywords,
Quantum harmonic oscillator; direct coupling; coherent quantum filtering; observer back-action; discounted mean square optimality; Hamiltonian matrices; Lie algebra. % chosen from the IFAC
\end{keyword}                             % keyword list or with the
                                          % help of the Automatica
                                          % keyword wizard

\begin{abstract}                          % Abstract of not more than 200 words.
This paper is concerned with quantum harmonic oscillators consisting of a quantum plant and a directly coupled coherent quantum observer. We employ discounted quadratic performance criteria in the form of exponentially weighted time averages of second-order moments of the system variables. Small-gain-theorem  bounds  are obtained for the back-action of the observer on the covariance dynamics of the plant in terms of the plant-observer coupling. A coherent quantum filtering (CQF) problem is formulated as the minimization of the discounted mean square of an estimation error, with which the dynamic variables of the observer approximate those of the plant. The cost functional also involves a quadratic penalty on the plant-observer coupling matrix in order to mitigate the back-action effect. For the discounted mean square optimal CQF problem with penalized back-action, we establish first-order necessary conditions of optimality in the form of algebraic matrix  equations. By using the Hamiltonian structure of the Heisenberg dynamics and Lie-algebraic techniques, this set of equations is represented in a more explicit form for equally dimensioned plant and observer. For a class of such observers with autonomous estimation error dynamics, we obtain a solution of the CQF problem and outline a homotopy method. The computation of the performance criteria  and the observer synthesis are illustrated by numerical examples.
\end{abstract}

\end{frontmatter}

%%%%%%%%%%%%%%%%%%%%%%%%%%%%%%%%%%%%%%%%%%%%%%%%%%%%%%%%%%%%%%%%%%%%%%%%%%%%%%%%%%%%%%%%%%%%%%
\section{Introduction}
%%%%%%%%%%%%%%%%%%%%%%%%%%%%%%%%%%%%%%%%%%%%%%%%%%%%%%%%%%%%%%%%%%%%%%%%%%%%%%%%%%%%%%%%%%%%%%

Noncommutative counterparts of classical control and filtering problems \cite{AM_1979,AM_1989,KS_1972} are a subject of active research
in quantum control which is concerned with  dynamical and stochastic systems governed by the laws of quantum mechanics and quantum probability \cite{H_2001,M_1995}. These developments (see, for example, \cite{JNP_2008,MJ_2012,NJP_2009,VP_2013a,VP_2013b}) are particularly focused on open quantum systems whose internal dynamics are affected by interaction with the environment \cite{BP_2006}. In such systems, the evolution of dynamic variables (as noncommutative operators on a Hilbert space) is often modelled using the Hudson-Parthasarathy calculus \cite{H_1991,HP_1984,P_1992} which provides a rigorous framework of quantum stochastic  differential equations (QSDEs) driven by quantum Wiener processes on symmetric Fock spaces.
In particular, linear QSDEs model open quantum harmonic oscillators (OQHOs) \cite{EB_2005} whose dynamic  variables (such as the position and momentum or annihilation and creation operators \cite{M_1998,S_1994}) satisfy canonical commutation relations (CCRs). This class of QSDEs is important for linear quantum control theory \cite{P_2010} and applications to quantum optics
\cite{GZ_2004,WM_1994_book} which provides one of platforms for quantum information  technologies \cite{NC_2000}.

One of the fundamental problems for quantum stochastic systems is the coherent quantum linear quadratic Gaussian (CQLQG)  control problem \cite{NJP_2009} which is a quantum mechanical counterpart of the classical LQG control problem. The latter is well-known in linear stochastic control theory due to the separation principle and its links with Kalman filtering and deterministic optimal control settings such as the linear quadratic regulator (LQR) problem \cite{AM_1989,KS_1972}. Coherent quantum feedback control \cite{L_2000,WM_1994_paper} employs the idea of control by interconnection, whereby quantum systems interact with each other directly or through optical fields in a measurement-free fashion, which can be described using the quantum feedback network formalism \cite{GJ_2009}.
In comparison with the traditional observation-actuation control paradigm, 
coherent quantum control avoids the ``lossy'' conversion  of operator-valued quantum variables into classical signals (which underlies the quantum measurement process), is potentially faster and can be implemented on micro and nano-scales using natural  quantum mechanical effects.

In coherent quantum filtering (CQF) problems \cite{MJ_2012,VP_2013b}, which are ``feedback-free'' versions of the CQLQG control problem, an observer is cascaded in a measurement-free fashion with a quantum plant so as to develop quantum correlations with the latter over the course of time. Both problems employ mean square performance criteria and involve physical realizability (PR) constraints \cite{JNP_2008,SP_2012} on the state-space matrices of the quantum controllers and filters. The PR constraints are a consequence of the specific Hamiltonian structure of quantum dynamics and complicate the  design of optimal coherent quantum controllers and filters. Variational approaches of \cite{VP_2011b}--\cite{VP_2013b}  reformulate the underlying problem as a constrained covariance control problem and employ an adaptation of ideas from dynamic programming, the Pontryagin minimum principle \cite{PBGM_1962,SW_1997} and nonlinear functional analysis. In particular, the Frechet differentiation of the LQG cost with respect to the state-space matrices of the controller or filter  subject to the PR constraints leads to necessary conditions of optimality in the form of nonlinear algebraic matrix equations. Although this approach is quite similar to \cite{BH_1989,SIG_1998} (with the quantum nature of the problem manifesting itself only through the PR constraints), the resulting equations appear to be much harder to solve than their classical predecessors.

Fully quantum variational techniques, using  perturbation analysis \cite{SVP_2014,V_2015a,V_2015b,V_2017}  %of open quantum systems
beyond the class of OQHOs and symplectic geometric tools \cite{SVP_2017}, suggest that the complicated sets of nonlinear equations for optimal quantum controllers  and filters may appear to be more amenable to solution if they are approached using Hamiltonian 
structures   similar to those in the underlying quantum dynamics. Such structures are particularly transparent in closed QHOs. Indeed, these models of linear quantum systems do not involve external bosonic fields and are technically simpler than the above mentioned OQHOs (although leave room for modelling the latter, for example,  through the Caldeira-Leggett infinite system limit
using a bath of harmonic oscillators \cite{CL_1981}).

We employ this class of models in the present paper and consider a mean square optimal CQF problem for a plant and a directly coupled observer which form a  closed QHO. Since this setting does not use
quantum Wiener processes, it simplifies the technical side of the treatment in comparison with \cite{MJ_2012,VP_2013b}.
The Hamiltonian  of the plant-observer QHO is a quadratic function of the dynamic variables satisfying the CCRs. 
When the energy matrix, which specifies the quadratic form of the Hamiltonian, is positive semi-definite, the system variables of the QHO  are either constant or exhibit oscillatory behaviour. This motivates the use of a cost functional (being minimized) in the form of a discounted mean square of an estimation error (with an exponentially decaying weight \cite{B_1965}) with which the observer variables approximate given linear combinations of the plant variables of interest. The performance criterion also involves a quadratic penalty on the plant-observer coupling 
in order to 
achieve a compromise between the conflicting requirements of minimizing the estimation error and reducing the back-action of the observer on the plant.  The CQF problem with penalized back-action can also be regarded as a quantum-mechanical counterpart to the classical LQR problem. The use of discounted averages of nonlinear moments of system variables and the presence of optimization makes this setting  different from the time-averaged approach of \cite{P_2014,P_2017} to CQF in directly coupled  QHOs (see \cite{PH_2015} for a quantum-optical implementation of that approach).

Since discounted moments of system variables for QHOs 
play an important role throughout the paper, we discuss the computation  of such moments in the state-space and frequency domains for completeness. Using the ideas of the small-gain theorem (see, for example, \cite{DJ_2006} and references therein) and linear matrix inequalities,   we establish upper bounds  for the back-action of the observer on the covariance dynamics of the plant in terms of the plant-observer coupling. This leads to a lower bound for the mean square of the estimation error in terms of its value for uncoupled plant and observer.
Similarly to the variational approach of \cite{VP_2013a,VP_2013b}, we develop first-order necessary conditions of optimality for the CQF problem being considered. These conditions are organized as a set of two algebraic Lyapunov equations (ALEs) for the controllability and observability Gramians which are coupled through another equation for the Hankelian (the product of the Gramians) of the plant-observer composite system. The Hamiltonian structure of the underlying Heisenberg dynamics allows Lie-algebraic techniques (in particular, the Jacobi identity \cite{D_2006}) to be employed in order to represent this set of equations in terms of the commutators of appropriately transformed  Gramians. This leads to a more tractable form of the optimality conditions for equally dimensioned plant and observer. 
We single out a class of such observers with
autonomous estimation error dynamics, for which the CQF problem is amenable to numerical solution through a homotopy method (similar to \cite{MB_1985}) over the penalty parameter. We also investigate the asymptotic behaviour of the resulting optimal observers in the weak-coupling limit, and illustrate the performance criteria computation and observer synthesis by numerical examples.

The paper is organised as follows. 
Section~\ref{sec:QHO} specifies the closed QHOs including its subclass  with positive semi-definite energy matrices. Section~\ref{sec:aver} describes the discounted averaging of moments for system operators in such QHOs in the time and frequency domains and illustrates their computation by a numerical example.
Section~\ref{sec:QODE} specifies the direct coupling of quantum plants and coherent quantum observers. Section~\ref{sec:back} discusses bounds for the observer back-action  on the covariance dynamics of the plant. Section~\ref{sec:CQF} formulates the discounted mean square optimal CQF problem with penalized back-action and discusses coupling-estimation inequalities. Section~\ref{sec:opt} establishes first-order necessary conditions of optimality for this problem. Section~\ref{sec:lie} represents the optimality conditions in a Lie-algebraic form. 
Section~\ref{sec:sub} provides a suboptimal solution of the CQF problem for a class of observers with autonomous estimation error dynamics and gives a numerical  example of observer synthesis.
Section~\ref{sec:conc} makes concluding  remarks.

%%%%%%%%%%%%%%%%%%%%%%%%%%%%%%%%%%%%%%%%%%%%%%%%%%%%%%%%%%%%%%%%%%%%%%%%%%%%%%%%%%%%%%%%%%%%%%%%%%%
\section{Quantum harmonic oscillators}\label{sec:QHO}
%%%%%%%%%%%%%%%%%%%%%%%%%%%%%%%%%%%%%%%%%%%%%%%%%%%%%%%%%%%%%%%%%%%%%%%%%%%%%%%%%%%%%%%%%%%%%%%%%%%

Consider a QHO \cite{M_1998}  with an even number $n$ of dynamic variables $X_1, \ldots, X_n$ which are time-varying self-adjoint operators on a  complex separable Hilbert space $\cH$ satisfying the CCRs
\begin{equation}
\label{XCCR}
    [X(t),X(t)^{\rT}]
    :=
    ([X_j(t),X_k(t)])_{1\< j,k\< n}
    =
    2i
    \Theta,
    \quad
    X
    :=
    {\scriptsize\begin{bmatrix}
        X_1(t)\\
        \vdots\\
        X_n(t)
    \end{bmatrix}}
\end{equation}
at any instant $t\> 0$ (the time arguments will often be omitted for brevity). It is assumed that the \emph{CCR matrix}  $\Theta \in \mA_n$ is nonsingular.  Here, $\mA_n$ denotes the subspace of real antisymmetric matrices  of order $n$. The entries $\theta_{jk}$ of $\Theta$ in (\ref{XCCR}) represent the scaling operators $\theta_{jk}\cI$, with $\cI$ the identity operator  on $\cH$. The transpose $(\cdot)^{\rT}$ acts on matrices of operators as if the latter were scalars, vectors are organized as columns unless indicated otherwise,  $[\phi,\psi]:= \varphi\psi-\psi\varphi$ is the commutator of operators, and $i:=\sqrt{-1}$ is the imaginary unit. 
The QHO has a quadratic Hamiltonian
\begin{equation}
\label{HR}
    H := \tfrac{1}{2}X^{\rT} R X,
\end{equation}
specified by an \emph{energy matrix} $R \in \mS_n$, with $\mS_n$ the subspace of real symmetric matrices of order $n$. Due to (\ref{XCCR}) and (\ref{HR}),  the Heisenberg dynamics of the QHO are governed by a linear ODE
\begin{equation}
\label{Xd}
    \dot{X} = i[H,X] = A X,
\end{equation}
where $A\in \mR^{n\x n}$  is a matrix of constant coefficients  given by
\begin{equation}
\label{A}
    A := 2\Theta R.
\end{equation}
The solution of the ODE (\ref{Xd}) is expressed using the standard matrix exponential as
\begin{equation}
\label{XA}
    X(t)
    =
    \sj_t(X_0)
    :=
    U(t)^{\dagger} X_0 U(t)
    =
    \re^{it \ad_{H_0}}(X_0)
    =
    \re^{tA} X_0,
\end{equation}
where
$\ad_{\alpha}:= [\alpha, \cdot]$, and  the subscript $(\cdot)_0$ indicates the initial values at time $t=0$.
The first three equalities in (\ref{XA}) apply to a general Hamiltonian $H_0$ (that is, not necessarily a quadratic function of $X_0$), and 
$  U(t):= \re^{-itH_0}$ 
is a time-varying unitary operator on $\cH$ (with the adjoint  $U(t)^{\dagger} = \re^{itH_0}$), which specifies the flow $\sj_t$ in (\ref{XA}) acting as a unitary similarity transformation on the system variables. The flow $\sj_t$ preserves the CCRs (\ref{XCCR}) which, in view of the relation
$
    [X(t),X(t)^{\rT}]
    =
    \re^{tA} [X_0, X_0^{\rT}] \re^{tA^{\rT}}
    =
    2i\re^{tA} \Theta \re^{tA^{\rT}}
    =
    2i\Theta
$,
are equivalent to the symplectic property $\re^{tA} \Theta \re^{tA^{\rT}} = \Theta$ of the matrix $\re^{tA}$ for any time $t\> 0$.  The infinitesimal form of this property is
$    A\Theta + \Theta A^{\rT} = 0
$.
This equality corresponds to the PR conditions for OQHOs \cite{JNP_2008,SP_2009} and its fulfillment is ensured by the Hamiltonian structure $A\in \Theta \mS_n$ of the matrix  $A$ in (\ref{A}). Similarly to classical linear systems, if the initial quantum state of the QHO is Gaussian \cite{CH_1971,KRP_2010}, it remains so over the course of time due to the deterministic linear dependence of $X(t)$ on $X_0$ in (\ref{XA}).
If the energy matrix in (\ref{HR}) is positive semi-definite, $R\succcurlyeq  0$ (and hence,  has a square root $\sqrt{R} \succcurlyeq 0$),  then $A=2\Theta \sqrt{R}\sqrt{R}$ is isospectral to the matrix $2\sqrt{R}\Theta \sqrt{R} \in \mA_n$ whose eigenvalues are purely imaginary \cite{HJ_2007}. In the case
 $R\succ 0$, this follows directly from the similarity transformation
\begin{equation}
\label{AR}
    A = R^{-1/2} (2\sqrt{R}\Theta \sqrt{R}) \sqrt{R}
\end{equation}
(see, for example, \cite{P_2014}), whereby
$A$ is diagonalized as
\begin{equation}
\label{AV}
    A = i V \Omega W,
    \qquad
    W := V^{-1},
    \qquad
    \Omega := \diag_{1\< k\< n}(\omega_k).
\end{equation}
Here, $W:= (w_{jk})_{1\< j,k\< n}\in \mC^{n\x n}$ is the inverse of a nonsingular matrix  $V:= (v_{jk})_{1\< j,k\< n}\in \mC^{n\x n}$ whose columns $V_1,\ldots, V_n\in \mC^n$ are the eigenvectors of $A$, and $\Omega:= \diag_{1\< k\< n}(\omega_k)\in \mR^{n \x n}$ is a diagonal matrix of frequencies of the QHO. These frequencies  (which should not be confused with the eigenvalues of the Hamiltonian $H$ as an operator on $\cH$ describing the energy levels of the QHO \cite{S_1994}) are nonzero and symmetric about the origin, and, without loss of generality, are assumed to be arranged so that
\begin{equation}
\label{sym}
    \omega_k = -\omega_{k + \tfrac{n}{2}} >0,
    \qquad
    k=1, \ldots, \tfrac{n}{2}.
\end{equation}
Note that  $\sqrt{R} V$ is a unitary matrix  whose columns are the eigenvectors of the matrix $i\sqrt{R}\Theta \sqrt{R} \in \mH_n$ in view of (\ref{AR}); see also the proof of Williamson's symplectic diagonalization  theorem \cite{W_1936,W_1937} in \cite[pp. 244--245]{D_2006}. Here, $\mH_n$ denotes the subspace of complex Hermitian matrices of order $n$. Substitution of (\ref{AV}) into (\ref{XA}) leads to
\begin{equation}
\label{XAV}
    X(t) = V \re^{it\Omega} W X_0.
\end{equation}
Due to the presence of the matrix $\re^{it\Omega} = \diag_{1\< k\< n}(\re^{i\omega_k t})$ in (\ref{XAV}), the  dynamic variables of the QHO are linear combinations of their initial values whose coefficients are trigonometric polynomials of time:
\begin{equation}
\label{Xt0}
    X_j(t) = \sum_{k,\ell=1}^n c_{jk\ell}\re^{i\omega_k t} X_{\ell}(0),
    \qquad
    j=1, \ldots, n,
\end{equation}
where $c_{jk\ell}$ are complex parameters which are assembled into rank-one matrices
\begin{equation}
\label{C}
     C_k := (c_{jk\ell})_{1\< j,\ell \< n} = V_kW_k,
    \qquad
    c_{jk\ell}:=  v_{jk} w_{k\ell},
\end{equation}
with $W_k$  denoting the $k$th row of $W$. The matrices $C_1, \ldots, C_n$ form a resolution of the identity: $\sum_{k=1}^n C_k  = VW=I_n$. Also,
\begin{equation}
\label{CC}
     \overline{C_k} = C_{k + \tfrac{n}{2}},
     \qquad
    k = 1, \ldots, \tfrac{n}{2},
\end{equation}
in accordance with (\ref{sym}), whereby (\ref{Xt0}) can be represented in vector-matrix form as
\begin{equation}
\label{XCX}
    X(t)
    =
    \sum_{k=1}^{n/2}
    \big(
    \re^{i\omega_k t} C_k
    +
    \re^{-i\omega_k t} \overline{C_k}
    \big)
    X_0
    =
    2\sum_{k=1}^{n/2}\Re(\re^{i\omega_k t} C_k) X_0,
\end{equation}
where $\overline{(\cdot)}$ is the complex conjugate.
Therefore, for any positive integer $d$ and any $d$-index $j:= (j_1, \ldots, j_d)\in \{1, \ldots, n\}^d$, the following degree $d$ monomial of the system variables is also a trigonometric polynomial of time $t$:
\begin{equation}
\label{Xit}
    \Xi_j(t)
    :=
    \rprod_{s=1}^{d}
    X_{j_s}(t)
    =
    \sum_{k,\ell \in \{1, \ldots, n\}^d}\,
    \prod_{s=1}^{d}
    c_{j_sk_s\ell_s}
    \re^{i\omega_{k_s}t}\,
    \Xi_{\ell}(0).
\end{equation}
Here, $\rprod$ denotes the ``rightwards'' ordered product of operators (the order of multiplication is essential for non-commutative quantum variables),
and the sum is taken over $d$-indices $k:= (k_1, \ldots, k_d), \ell:= (\ell_1, \ldots, \ell_d)\in \{1, \ldots, n\}^d$. Note that (\ref{Xt0}) is a particular case of (\ref{Xit}) with $d=1$. The relations (\ref{XAV})--(\ref{Xit}) remain valid in the case of $R\succcurlyeq 0$, except that (\ref{sym}) is relaxed to the frequencies $\omega_1, \ldots, \omega_{n/2}$ being nonnegative.

%%%%%%%%%%%%%%%%%%%%%%%%%%%%%%%%%%%%%%%%%%%%%%%%%%%%%%%%%%%%%%%%%%%%%%%%%%%%%%%%%%%%%%%%%%%%%%%%%%%
\section{Discounted moments of system operators}\label{sec:aver}
%%%%%%%%%%%%%%%%%%%%%%%%%%%%%%%%%%%%%%%%%%%%%%%%%%%%%%%%%%%%%%%%%%%%%%%%%%%%%%%%%%%%%%%%%%%%%%%%%%%

For any $\tau>0$, we define a linear functional $\bE_{\tau}$ which maps a system operator $\sigma$ of the QHO to the weighted time average
\begin{equation}
\label{bEtau}
    \bE_{\tau}\sigma
    :=
    \tfrac{1}{\tau}
    \int_0^{+\infty}
    \re^{-t/\tau}
    \bE \sigma(t)
    \rd t.
\end{equation}
Here, $\bE\sigma:= \Tr(\rho \sigma)$  denotes the quantum expectation over the underlying quantum state $\rho$ (which is a positive semi-definite self-adjoint operator on $\cH$ with unit trace). The weighting function  $\tfrac{1}{\tau} \re^{-t/\tau}$  in (\ref{bEtau}) is the density of an exponential probability distribution with mean value $\tau$. Therefore, $\tau$ plays the role of an effective horizon for averaging $\bE\sigma$ over time.  This time average (where the relative importance of the quantity of interest decays exponentially)  has the structure of a discounted cost functional in dynamic programming problems \cite{B_1965}. In particular, if $\bE\sigma(t)$, as a function of time $t\>0$,  is right-continuous at $t=0$, then  $\lim_{\tau\to 0+} \bE_{\tau} \sigma = \bE \sigma_0$. At the other extreme, the \emph{infinite-horizon average} of $\sigma$ is defined by
\begin{equation}
\label{bEinf}
    \bE_{\infty} \sigma
    :=
    \lim_{\tau\to +\infty} \bE_{\tau}\sigma
    =
    \lim_{\tau\to +\infty}
    \Big(
        \tfrac{1}{\tau}
        \int_0^\tau
        \bE \sigma(t)
        \rd t
    \Big),
\end{equation}
provided these limits exist. The second of these equalities, whose right-hand side is the Cesaro mean of $\bE \sigma$, follows from the integral version of the Hardy-Littlewood Tauberian theorem \cite{F_1971}. In particular, (\ref{bEinf}) implies that $|\bE_{\infty}\sigma|\< \limsup_{t\to +\infty}|\bE \sigma(t)|$.

In the case when the QHO has a positive semi-definite energy matrix, the coefficients in (\ref{XCX}) and (\ref{Xit}) are either constant or oscillatory, which makes the time averages (\ref{bEtau}) and (\ref{bEinf}) well-defined for nonlinear functions of the system variables and their moments for any $\tau>0$. A similar property underlies applications of harmonic analysis to the heterodyne detection of signals. % \cite{?}.
To this end, we will use the characteristic function $\chi_{\tau}: \mR\to \mC$ of the exponential distribution and its pointwise convergence:
\begin{align}
\nonumber
    \chi_{\tau}(u)
    & :=  \tfrac{1}{\tau}
    \int_0^{+\infty}
    \re^{-t/\tau}
    \re^{iut}
    \rd t
     =
    \tfrac{1}{1-iu \tau}\\
\label{chi}
     & \to
    \delta_{u0}
     =
    {\scriptsize\left\{
        {\begin{matrix}
            1 & {\rm if}\  u =0\\
            0 & {\rm if}\  u \ne 0
        \end{matrix}}
    \right.},
    \qquad
    {\rm as}\
    \tau\to +\infty,
\end{align}
where $\delta_{pq}$ is the Kronecker delta.
A combination of (\ref{Xit}) with (\ref{chi}) implies that if the initial system variables of the QHO have finite mixed moments $\bE \Xi_{\ell}(0)$ of order $d$ for all $\ell \in \{1,\ldots, n\}^d$, then such moments have the following time-averaged values (\ref{bEtau}):
\begin{align}
\nonumber
    \bE_{\tau} \Xi_j
    :=&
    \tfrac{1}{\tau}
    \int_0^{+\infty}
    \re^{-t/\tau}
    \bE \Xi_j(t)
    \rd t\\
\label{EXi}
    = &
    \sum_{k\in \{1, \ldots, n\}^d}
    \chi_{\tau}\Big(\sum_{s=1}^d \omega_{k_s}\Big)\!\!
    \sum_{\ell \in \{1, \ldots, n\}^d}\,
    \prod_{s=1}^{d}
    c_{j_sk_s\ell_s}
    \bE \Xi_{\ell}(0)
\end{align}
for any $j\in \{1, \ldots, n\}^d$. Hence, the corresponding infinite-horizon average (\ref{bEinf}) takes the form
\begin{equation}
\label{EXiinf}
    \bE_{\infty} \Xi_j
    =
    \sum_{k\in \cK_d}\,
    \sum_{\ell \in \{1, \ldots, n\}^d}\,
    \prod_{s=1}^{d}
    c_{j_sk_s\ell_s}
    \bE \Xi_{\ell}(0),
\end{equation}
where
$
    \cK_d
    :=
    \big\{
        (k_1, \ldots, k_d)\in \{1, \ldots, n\}^d:\
        \sum_{s=1}^{d}\omega_{k_s} = 0
    \big\}
$
is a subset of $d$-indices associated with the frequencies $\omega_1, \ldots, \omega_n$ of the QHO from (\ref{AV}).
For every even $d$, the set $\cK_d$ is nonempty due to the central symmetry of the frequencies.
 If the QHO is in a Gaussian quantum state, mentioned in Section~\ref{sec:QHO}, then the higher-order moments on the right-hand sides of (\ref{EXi}) and (\ref{EXiinf}) can be expressed in terms of the first two moments (with $d\<2$) by using the Isserlis-Wick theorem \cite{I_1918,J_1997}. However, the Gaussian assumption will not be employed in what follows.

The linear functional $\bE_{\tau}$ in (\ref{EXi}) and its limit $\bE_{\infty}$ in (\ref{EXiinf}) are extendable to polynomials and more general functions $\sigma:= f(X)$ of the system variables, provided $X_0$ satisfies appropriate integrability conditions. Such an extension of $\bE_{\infty}$, which involves the Cesaro mean, is similar to the argument used in the context of Besicovitch spaces of almost periodic functions \cite{B_1954}.  
If the system is in an invariant state $\rho$ (which, therefore, satisfies $[H_0, \rho] = 0$), then the quantum expectation $\bE \sigma = \Tr( \rho \re^{it\ad_{H_0}}(\sigma_0)) = \Tr(\re^{-it\ad_{H_0}}(\rho)\sigma_0) = \Tr(\rho\sigma_0) $ is time-independent for any system operator $\sigma_0$ evolved by the flow (\ref{XA}). In this case, the time averaging in (\ref{bEtau}) becomes redundant. 
However, the subsequent discussion is concerned with general (not necessarily invariant) quantum states $\rho$.

The following theorem is, in essence,  an adaptation of classical results on the averaging of quasi-periodic motions in Hamiltonian systems; see, for example, \cite[pp. 285--289]{A_1989}.

%%%%%%%%%%%%%%%%%%%%%%%%%%%%%%%%%%%%%%%%%%%%%%%%%%%%%%%%%%%%%%%%%%%%%%%%%%%%%%%%%%%%%%%%%%%%%%%%%%
\begin{thm}
\label{th:torus}
Suppose the energy matrix in (\ref{HR}) satisfies $R\succ  0$. Also, let the frequencies $\omega_1, \ldots, \omega_{n/2}$ of the QHO, arranged according to (\ref{sym}), be incommensurable in the sense of rational independence (that is, their linear combination $\sum_{k=1}^{n/2}\lambda_k \omega_k$ with integer coefficients $\lambda_1, \ldots, \lambda_{n/2} \in \mZ$ vanishes if and only if $\lambda_1= \ldots= \lambda_{n/2}=0$). Furthermore, suppose a function $f:\mR^n \to \mR$ is extended to quantum variables so that
\begin{equation}
\label{g}
    g(\varphi)
    :=
    \bE f
    \Big(
        2 \sum_{k=1}^{n/2}\Re(\re^{i\varphi_k} C_k) X_0
    \Big)
\end{equation}
depends continuously on the phases $\varphi:= (\varphi_k)_{1\< k\< n/2} \in \mT^{n/2}$, where $\mT$ is the one-dimensional torus implemented as the interval $[0,2\pi)$, and the matrices $C_k$ are given by (\ref{C}). Then the system operator $f(X)$ has the following infinite-horizon average value (\ref{bEinf}):
\begin{equation}
\label{EfX}
    \bE_{\infty} f(X) = (2\pi)^{-n/2}\int_{\mT^{n/2}}g(\varphi) \rd \varphi.
\end{equation}
\end{thm}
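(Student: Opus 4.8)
The plan is to reduce the statement to the classical Weyl equidistribution (Kronecker--Weyl) theorem for a linear flow on the torus. The starting point is the observation that, upon comparing the trajectory formula (\ref{XCX}) with the definition (\ref{g}) of $g$, one has the exact identity $\bE f(X(t)) = g(\omega_1 t, \ldots, \omega_{n/2} t)$ for every $t\>0$, where the phases are understood modulo $2\pi$. Writing $\omega := (\omega_k)_{1\< k\< n/2}$ and introducing the phase trajectory $\varphi(t):= (\omega_k t \bmod 2\pi)_{1\< k\< n/2} \in \mT^{n/2}$, the second (Cesaro) form of the infinite-horizon average in (\ref{bEinf}) gives
\begin{equation*}
    \bE_{\infty} f(X)
    =
    \lim_{\tau\to +\infty}
    \tfrac{1}{\tau}
    \int_0^\tau
    g(\varphi(t))
    \rd t.
\end{equation*}
Thus the claim (\ref{EfX}) amounts to asserting that the time average of the continuous function $g$ along the linear flow $t\mapsto \varphi(t)$ coincides with its spatial average over $\mT^{n/2}$ with respect to the normalized Haar (Lebesgue) measure.

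First I would verify this equidistribution on the dense subclass of trigonometric monomials $g(\varphi)= \re^{i m^{\rT}\varphi}$ with $m\in \mZ^{n/2}$. For $m=0$ both the time and space averages equal $1$. For $m\ne 0$, the rational independence (incommensurability) of $\omega_1, \ldots, \omega_{n/2}$ guarantees $m^{\rT}\omega \ne 0$, so that $\tfrac{1}{\tau}\int_0^\tau \re^{i m^{\rT}\omega\, t}\rd t = \tfrac{\re^{i m^{\rT}\omega\, \tau}-1}{i m^{\rT}\omega\, \tau} \to 0$ as $\tau\to +\infty$, which matches the vanishing spatial average of $\re^{i m^{\rT}\varphi}$ over the torus. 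By linearity, the time average equals the space average for every trigonometric polynomial.

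The extension to an arbitrary continuous $g$ would proceed by a standard uniform-approximation argument. Since $\mT^{n/2}$ is compact, $g$ is bounded and uniformly continuous, and by the Stone--Weierstrass theorem the trigonometric polynomials are dense in $C(\mT^{n/2})$ with the supremum norm. Given $\veps>0$, I would pick a trigonometric polynomial $p$ with $\|g-p\|_\infty<\veps$; the time averages of $g$ and $p$ then differ by at most $\veps$ uniformly in $\tau$, and likewise for their space averages, while the time and space averages of $p$ coincide by the previous step. Letting $\tau\to +\infty$ and then $\veps\to 0$ yields (\ref{EfX}).

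The equidistribution itself is entirely classical (this is the content of the averaging of quasi-periodic motions cited from \cite{A_1989}), so the genuine work lies in two places. The first is the identification $\bE f(X(t)) = g(\varphi(t))$, which relies on the positivity $R\succ 0$ to guarantee (via (\ref{sym})) that all frequencies are nonzero and symmetric, so that the phase variables genuinely sweep out the full torus. The second, and the main point to handle with care, is the uniform-in-$\tau$ control in the approximation step; the continuity hypothesis on $g$ imposed in the statement is exactly what makes Stone--Weierstrass applicable and renders the time average well defined and convergent.
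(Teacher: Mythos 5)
Your proposal is correct and follows essentially the same route as the paper: both rest on the identity $\bE f(X(t)) = g(t\omega_1,\ldots,t\omega_{n/2})$ obtained by comparing (\ref{XCX}) with (\ref{g}), the Cesaro form of $\bE_\infty$ from (\ref{bEinf}), and the Weyl equidistribution of the linear flow $t\mapsto (\omega_k t \bmod 2\pi)_k$ under the incommensurability hypothesis. The only difference is that you prove the equidistribution step yourself (verification on characters $\re^{im^{\rT}\varphi}$, then Stone--Weierstrass approximation), whereas the paper cites the Weyl criterion from \cite{B_1968} in the form of weak convergence of the occupation measures to Haar measure --- your version is more self-contained but identical in substance. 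One small slip in your closing commentary: it is the incommensurability, not $R\succ 0$, that makes the orbit equidistribute on (``sweep out'') the full torus; the positivity of $R$ only secures the oscillatory representation (\ref{XCX}) with real, nonzero, symmetrically arranged frequencies as in (\ref{sym}).
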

%%%%%%%%%%%%%%%%%%%%%%%%%%%%%%%%%%%%%%%%%%%%%%%%%%%%%%%%%%%%%%%%%%%%%%%%%%%%%%%%%%%%%%%%%%%%%%%%%%
\begin{proof}
From (\ref{XCX}) and (\ref{g}), it follows  that
$
    \bE f(X(t)) = g(t\mho)
$
for any $t\> 0$, where $\mho:= (\omega_k)_{1\< k \< n/2}$, and the entries of the vector $t \mho \in \mR^{n/2}$ are considered modulo $2\pi$. Since the function $g$, which is $2\pi$-periodic  in each of its variables, is assumed to be continuous  (and hence, $g$ is bounded due to the compactness of the torus), then  (\ref{EfX}) is established as
$
    \bE_{\infty} f(X)
    =
    \lim_{\tau\to +\infty}
    \Big(
        \tfrac{1}{\tau}
        \int_0^\tau
        g(t \mho)
        \rd t
    \Big)
    =
(2\pi)^{-n/2}\int_{\mT^{n/2}}g(\varphi) \rd \varphi
$.
The last equality is obtained by applying the Weyl equidistribution   criterion \cite{B_1968} to  the map $t\mapsto t\mho$ considered modulo $2\pi$, whereby its sample distribution
$D_{\tau}(S):= \tfrac{1}{\tau}\mu_1 \{0\< t\< \tau:\  t\mho \in S + \mZ^{n/2}\}$ for $S\subset \mT^{n/2}$
converges weakly  to the uniform probability measure
$(2\pi)^{-n/2}\mu_{n/2}(S)$
on the torus $\mT^{n/2}$, provided the frequencies are incommensurable.
More precisely, $\lim_{\tau\to +\infty} D_{\tau}(S) = (2\pi)^{-n/2}\mu_{n/2}(S)$ for any Borel set $S\subset \mT^{n/2}$ whose boundary $\d S$ satisfies $\mu_{n/2}(\d S) = 0$, where $\mu_r$ denotes the $r$-dimensional Lebesgue measure.
\end{proof}
%%%%%%%%%%%%%%%%%%%%%%%%%%%%%%%%%%%%%%%%%%%%%%%%%%%%%%%%%%%%%%%%%%%%%%%%%%%%%%%%%%%%%%%%%%%%%%%%%%

The vectors $\mho$ of commensurable frequencies are contained in a denumerable union $\bigcup_{\lambda \in \mZ^{n/2}\setminus \{0\}} \lambda^{\bot}$ of the hyperplanes $\lambda^{\bot}:= \{\mho \in \mR^{n/2}:\, \lambda^{\rT} \mho = 0\}$ which has zero $n/2$-dimensional Lebesgue measure; see also \cite[p. 290]{A_1989}. Therefore, Theorem~\ref{th:torus}  applies to the infinite-horizon averaging of nonlinear functions of system variables in
QHOs with generic spectra.

Of particular use for our  purposes is the following lemma on state-space computation of the discounted time average (\ref{bEtau}) for second moments of the system variables, which is concerned with finite values of $\tau$ and does not employ
the imaginarity of the spectrum of $A$ and the frequency incommensurability condition of Theorem~\ref{th:torus}.
To this end, we note that $\bE(XX^{\rT})\in \mH_n^+$  at every moment of time due to
the %positive semi-definiteness of quantum covariance matrices as a
generalized Heisenberg uncertainty principle \cite{H_2001}, where $\mH_n^+$ denotes the set of complex positive semi-definite Hermitian matrices of order $n$. Furthermore, $\Im \bE(XX^{\rT}) = \Theta$ remains unchanged in view of the preservation of the CCRs (\ref{XCCR}) mentioned above. Also, with any  Hurwitz matrix $\alpha$, we associate a linear operator $\bL(\alpha, \cdot)$
which maps an appropriately dimensioned matrix $\beta$ to a unique solution $\gamma=\bL(\alpha,\beta)$  of the
ALE $\alpha \gamma +\gamma\alpha^{\rT}+\beta= 0$:
\begin{equation}
\label{ILO}
  \bL(\alpha,\beta):= \int_0^{+\infty} \re^{t\alpha}\beta \re^{t\alpha^{\rT}}\rd t.
\end{equation}
The monotonicity of the operator $\bL(\alpha, \cdot)$ (with respect to the partial ordering induced by positive semi-definiteness) implies that
\begin{align}
\nonumber
    \bL(\alpha, \bL(\alpha, \beta))
     & =
    \bL(\alpha, \sqrt{\beta}\beta^{-1/2}\bL(\alpha, \beta)\beta^{-1/2}\sqrt{\beta})\\
\label{LL}
     & \preccurlyeq
    \br(\bL(\alpha, \beta)\beta^{-1})\, \bL(\alpha, \beta)
\end{align}
for any $\beta\succ 0$, where $\br(\cdot)$ denotes the spectral radius of a matrix, and use is made of the similarity transformation $\beta^{-1/2}N\beta^{-1/2}\mapsto 
N\beta^{-1}$.

%%%%%%%%%%%%%%%%%%%%%%%%%%%%%%%%%%%%%%%%%%%%%%%%%%%%%%%%%%%%%%%%%%%%%%%%%%%%%%%%%%%%%%%%%%%%%%%%%%%
\begin{lem}
\label{lem:PALE}
Let the initial dynamic variables of the QHO have finite second moments (that is, $\bE(X_0^{\rT}X_0)<+\infty$) whose real parts form the matrix
\begin{equation}
\label{Sigma}
  \Sigma:= \Re \bE(X_0X_0^{\rT}).
\end{equation}
Also, suppose the effective time horizon $\tau>0$ is bounded above as
\begin{equation}
\label{taumax}
    \tau < \tfrac{1}{2\max(0,\, \ln \br(\re^A))}.
\end{equation}
Then the matrix of the real parts of the discounted second moments of the dynamic variables can be computed as
\begin{equation}
\label{P}
    P:= \Re \bE_{\tau}(XX^{\rT}) = \tfrac{1}{\tau}\bL(A_{\tau},\Sigma)
\end{equation}
through the operator (\ref{ILO}). That is, $P$ is a unique solution of the ALE
\begin{equation}
\label{PALE}
    A_{\tau}
    P
    +
    P
    A_{\tau}^{\rT} + \tfrac{1}{\tau}\Sigma= 0,
\end{equation}
with the Hurwitz matrix
\begin{equation}
\label{Atau}
    A_{\tau}:= A - \tfrac{1}{2\tau}I_n.
\end{equation}
\end{lem}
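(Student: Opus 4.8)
The plan is to substitute the explicit Heisenberg flow (\ref{XA}) into the definition (\ref{bEtau}) of the discounted average and to recognize the resulting integral as an instance of the integral Lyapunov operator (\ref{ILO}). First I would use $X(t) = \re^{tA}X_0$ from (\ref{XA}) to write $X(t)X(t)^{\rT} = \re^{tA}X_0X_0^{\rT}\re^{tA^{\rT}}$, where the transpose is applied to the operator-valued vector as if its entries were scalars. Then, exploiting the linearity of the quantum expectation $\bE$ together with the fact that $\re^{tA}$ is a real matrix of scalars commuting with $\bE$ and with the operation $\Re$, I would conclude that $\Re\bE(X(t)X(t)^{\rT}) = \re^{tA}\Sigma\re^{tA^{\rT}}$, with $\Sigma$ as in (\ref{Sigma}).

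Substituting this into $P = \tfrac{1}{\tau}\int_0^{+\infty}\re^{-t/\tau}\Re\bE(X(t)X(t)^{\rT})\rd t$ yields $P = \tfrac{1}{\tau}\int_0^{+\infty}\re^{-t/\tau}\re^{tA}\Sigma\re^{tA^{\rT}}\rd t$. The key algebraic step is to absorb the scalar discount factor $\re^{-t/\tau}$ symmetrically into the two matrix exponentials: since $I_n$ commutes with $A$, one has $\re^{tA_\tau} = \re^{-t/(2\tau)}\re^{tA}$ for $A_\tau$ in (\ref{Atau}), so that $\re^{tA_\tau}\Sigma\re^{tA_\tau^{\rT}} = \re^{-t/\tau}\re^{tA}\Sigma\re^{tA^{\rT}}$ and hence $P = \tfrac{1}{\tau}\int_0^{+\infty}\re^{tA_\tau}\Sigma\re^{tA_\tau^{\rT}}\rd t = \tfrac{1}{\tau}\bL(A_\tau,\Sigma)$ by (\ref{ILO}), which is (\ref{P}). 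The equivalent characterization (\ref{PALE}) as the unique ALE solution then follows from the defining property of the operator $\bL(A_\tau,\cdot)$.

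What remains, and what I expect to be the main point requiring care, is to verify that $A_\tau$ is Hurwitz, since this is what simultaneously guarantees convergence of the integral and makes $\bL(A_\tau,\cdot)$ and the solution of (\ref{PALE}) well-defined and unique. The eigenvalues of $A_\tau = A - \tfrac{1}{2\tau}I_n$ are $\lambda - \tfrac{1}{2\tau}$ as $\lambda$ ranges over the spectrum of $A$, so $A_\tau$ is Hurwitz precisely when $\max_\lambda\Re\lambda < \tfrac{1}{2\tau}$. Using that the eigenvalues of $\re^A$ are $\re^\lambda$, whence $\br(\re^A) = \re^{\max_\lambda\Re\lambda}$ and $\ln\br(\re^A) = \max_\lambda\Re\lambda$, this spectral condition is exactly the horizon bound (\ref{taumax}): when $\max_\lambda\Re\lambda\< 0$ the bound reads $\tau<+\infty$ and $A_\tau$ is automatically Hurwitz, while when $\max_\lambda\Re\lambda>0$ it reduces to $\tau<1/(2\ln\br(\re^A))$. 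The only further technical detail is the interchange of the quantum expectation with the time integral, which is justified by the finite-second-moment hypothesis $\bE(X_0^{\rT}X_0)<+\infty$ together with the exponential decay supplied by the Hurwitz property of $A_\tau$.
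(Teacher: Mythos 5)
Your proposal is correct and follows essentially the same route as the paper's own proof: substitute the explicit flow $X(t)=\re^{tA}X_0$ into the discounted average, absorb the scalar weight $\re^{-t/\tau}$ symmetrically into the exponentials to obtain $\tfrac{1}{\tau}\bL(A_{\tau},\Sigma)$, and invoke the Hurwitz property of $A_{\tau}$ guaranteed by (\ref{taumax}). Your added spectral justification of why (\ref{taumax}) makes $A_{\tau}$ Hurwitz, and the remark on interchanging $\bE$ with the time integral, are details the paper states without elaboration, so the two arguments coincide in substance.
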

%%%%%%%%%%%%%%%%%%%%%%%%%%%%%%%%%%%%%%%%%%%%%%%%%%%%%%%%%%%%%%%%%%%%%%%%%%%%%%%%%%%%%%%%%%%%%%%%%%%
\begin{proof}
By combining (\ref{XA}) with (\ref{Sigma}), it follows that $\Re \bE(X(t)X(t)^{\rT}) = \re^{tA} \Sigma \re^{tA^{\rT}}$ for any $t\>0$. Hence, in application to the matrix $P$ in (\ref{P}),     the time average (\ref{bEtau}) can be computed  as
$
    P
     =
    \tfrac{1}{\tau}
    \int_0^{+\infty}
    \re^{-t/\tau}
    \Re \bE(X(t)X(t)^{\rT})
    \rd t
      =
    \tfrac{1}{\tau}
    \int_0^{+\infty}
    \re^{-t/\tau}
    \re^{tA} \Sigma \re^{tA^{\rT}}
    \rd t
      =
    \tfrac{1}{\tau}
    \int_0^{+\infty}
    \re^{tA_{\tau}} \Sigma \re^{tA_{\tau}^{\rT}}
    \rd t
    =\tfrac{1}{\tau}\bL(A_{\tau},\Sigma)
$,
thus establishing the representation (\ref{P}).
Here, the matrix $A_{\tau}$, given by (\ref{Atau}), is Hurwitz due to the condition (\ref{taumax}).
\end{proof}
%%%%%%%%%%%%%%%%%%%%%%%%%%%%%%%%%%%%%%%%%%%%%%%%%%%%%%%%%%%%%%%%%%%%%%%%%%%%%%%%%%%%%%%%%%%%%%%%%%%

In view of (\ref{PALE}), the matrix $P$ is the controllability Gramian \cite{KS_1972} of the pair $(A_{\tau}, \sqrt{\tau^{-1}\Sigma})$.
In contrast to similar ALEs for steady-state covariance matrices in dissipative OQHOs \cite{EB_2005} (where the corresponding matrix $A$ itself is Hurwitz), the term   $\tfrac{1}{\tau}\Sigma$ in (\ref{PALE}) comes from the initial condition (\ref{Sigma}) instead of the Ito matrix of the quantum Wiener process \cite{H_2001,HP_1984,P_1992}. 
Since $A$ is a Hamiltonian matrix (and hence, its spectrum is symmetric about the imaginary axis), the condition (\ref{taumax}) is equivalent to the eigenvalues  of $A$ being contained in the strip $\big\{z \in \mC:\ |\Re z|< \tfrac{1}{2\tau}\big\}$.
For any $\tau>0$ satisfying (\ref{taumax}), a frequency-domain representation of the matrix $P$ in (\ref{P}) is
\begin{align}
\nonumber
    P
     & =
    \tfrac{1}{2\pi\tau}
    \Re
    \int_{-\infty}^{+\infty}
    F\Big(\tfrac{1}{2\tau} + i\omega\Big)
    \Gamma
    F
    \Big(\tfrac{1}{2\tau} + i\omega\Big)^*
    \rd \omega\\
\label{PLap}
    & =
    \tfrac{1}{2\pi\tau}
    \Im
    \int_{\Re s=\tfrac{1}{2\tau}}
    F(s)
    \Gamma
    F(s)^*
    \rd s,
\end{align}
with $(\cdot)^*:= (\overline{(\cdot)})^{\rT}$ the complex conjugate transpose.
Here,
\begin{equation}
\label{Gamma}
    \Gamma:= \bE(X_0 X_0^{\rT}) = \Sigma +i\Theta
\end{equation}
is the matrix of second moments of the initial system variables, and
\begin{equation}
\label{F}
    F(s):= (sI_n-A)^{-1}
\end{equation}
is the transfer function (with the complex variable $s$ satisfying     $\Re s > \ln\br(\re^A)$) which relates
the Laplace transform
\begin{equation}
\label{wtX}
    \wt{X}(s)
    :=
    \int_0^{+\infty}
    \re^{-st} X(t)\rd t
\end{equation}
of the quantum process $X$ from (\ref{XA}) to its initial value $X_0$ as
$
    \wt{X}(s)
    =
    \int_0^{+\infty}
    \re^{-t(sI_n-A)}\rd tX_0  = F(s) X_0
$.
The representation (\ref{PLap}) is obtained by applying an operator version of the Plancherel theorem to the inverse Fourier transform
$
    \re^{-\tfrac{t}{2\tau}} X(t)
    =
    \tfrac{1}{2\pi}
    \int_{-\infty}^{+\infty}
    \re^{i\omega t}
    \wt{X}\big(\tfrac{1}{2\tau} + i\omega\big)
    \rd \omega$ for
$
    t\> 0
$
under the condition (\ref{taumax}).

In the case $R\succcurlyeq  0$ (when the matrix $A$ in (\ref{A}) has a purely imaginary spectrum and (\ref{taumax}) holds for any arbitrarily large $\tau$), the formal limit of the ALE (\ref{PALE}), as $\tau\to +\infty$,  is $AP+PA^{\rT} = 0$, which does not have a unique solution. This non-uniqueness
makes the ALE approach of Lemma~\ref{lem:PALE} inapplicable to computing $\bE_{\infty}(XX^{\rT})$. We will therefore provide an alternative   calculation of the discounted second moments for completeness.

%%%%%%%%%%%%%%%%%%%%%%%%%%%%%%%%%%%%%%%%%%%%%%%%%%%%%%%%%%%%%%%%%%%%%%%%%%%%%%%%%%%%%%%%%%%%%%%%%%%
\begin{lem}
\label{lem:EXX}
Suppose the energy matrix $R$ of the QHO in (\ref{HR}) satisfies $R\succ 0$, and the initial dynamic variables have finite second moments assembled into the matrix $\Gamma$ in (\ref{Gamma}).
Then for any $\tau>0$,
\begin{align}
\nonumber
    \bE_{\tau} (XX^{\rT})
    = & V (\Phi_{\tau} \odot (W\Gamma W^*)) V^*\\
\nonumber
    = &
    \sum_{j,k=1}^{n/2}
    {\scriptsize\begin{bmatrix}
        V_j & \overline{V_j}
    \end{bmatrix}}\left(
    {\scriptsize\begin{bmatrix}
        \chi_{\tau}(\omega_j-\omega_k)  & \chi_{\tau}(\omega_j+\omega_k)\\
        \chi_{\tau}(-\omega_j-\omega_k) & \chi_{\tau}(\omega_k-\omega_j)
    \end{bmatrix}}\right.\\
\label{EXX}
     & \qquad \odot
     \left.
    \Big(
    {\scriptsize\begin{bmatrix}
        W_j \\
        \overline{W_j}
    \end{bmatrix}}
    \Gamma
    {\scriptsize\begin{bmatrix}
        W_k^* & W_k^{\rT}
    \end{bmatrix}}
    \Big)
    \right)
    {\scriptsize\begin{bmatrix}
        V_k^* \\
        V_k^{\rT}
    \end{bmatrix}}.
\end{align}
Here, $V$ is the matrix from (\ref{AV}), use is made of  an auxiliary matrix
\begin{equation}
\label{Phitau}
    \Phi_{\tau}
    :=
    (
        \chi_{\tau}(\omega_j-\omega_k)
    )_{1\< j,k\< n}
\end{equation}
associated with the frequencies of the QHO through the function $\chi_{\tau}$ from (\ref{chi}),  $\odot$ denotes the Hadamard product of matrices \cite{HJ_2007}, and $C_k$ are the matrices from (\ref{C}) satisfying (\ref{CC}) under the convention (\ref{sym}). Furthermore, the infinite-horizon time averages of the second moments are computed as
\begin{align}
\nonumber
    \bE_{\infty} (XX^{\rT})
    &= V (\Phi_{\infty} \odot (W\Gamma W^*)) V^*\\
\nonumber
    & = \sum_{j,k=1}^{n/2}
    \delta_{\omega_j\omega_k}
    \big(
    C_j \Gamma C_k^*
    +
    \overline{C_j} \Gamma C_k^{\rT}
    \big)\\
\label{EXXinf}
    & =
    \sum_{j,k=1}^{n/2}
    \delta_{\omega_j\omega_k}
    {\scriptsize\begin{bmatrix}
        V_j & \overline{V_j}
    \end{bmatrix}}
    {\scriptsize\begin{bmatrix}
        W_j \Gamma W_k^* & 0  \\
        0   & \overline{W_j} \Gamma W_k^{\rT}
    \end{bmatrix}}
    {\scriptsize\begin{bmatrix}
        V_k^* \\
        V_k^{\rT}
    \end{bmatrix}},
\end{align}
where use is made of a binary matrix
\begin{equation}
\label{Phiinf}
    \Phi_{\infty}
    :=
    (
        \delta_{\omega_j\omega_k}
    )_{1\< j,k\< n}.
\end{equation}
\end{lem}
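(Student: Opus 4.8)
The plan is to combine the spectral representation (\ref{XAV}) of the flow with the definition of the discounted functional $\bE_\tau$ and the characteristic function $\chi_\tau$ from (\ref{chi}), reducing the whole computation to one elementary harmonic integral per pair of frequencies. First I would exploit the self-adjointness of the system variables, which gives $X(t)^\rT = X(t)^*$ entrywise, and rewrite the matrix of second moments via (\ref{XAV}), using $(\re^{it\Omega})^* = \re^{-it\Omega}$ (since $\Omega$ is real and diagonal), as
\[
    \bE(X(t)X(t)^\rT) = V\re^{it\Omega}(W\Gamma W^*)\re^{-it\Omega}V^*,
\]
where $\Gamma$ is the matrix (\ref{Gamma}) of initial second moments. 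The key observation is that conjugation by the diagonal factor $\re^{it\Omega}$ acts as a Hadamard multiplication, $(\re^{it\Omega}M\re^{-it\Omega})_{jk} = \re^{i(\omega_j-\omega_k)t}M_{jk}$, so that $\bE(X(t)X(t)^\rT) = V(E(t)\odot(W\Gamma W^*))V^*$ with $E(t):=(\re^{i(\omega_j-\omega_k)t})_{1\le j,k\le n}$.

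Next I would apply the functional $\bE_\tau$ from (\ref{bEtau}). Since $V$, $W$, $\Gamma$ are time-independent and all matrices are of finite order $n$, linearity lets $\bE_\tau$ pass through the matrix products and the Hadamard product, acting only on the oscillatory factors in $E(t)$ entrywise; the integrals converge absolutely for \emph{every} $\tau>0$ because the weight $\tfrac1\tau\re^{-t/\tau}$ dominates the unimodular $\re^{i(\omega_j-\omega_k)t}$, so that no Hurwitz requirement on $A$ (as in Lemma~\ref{lem:PALE}) is needed here. By the definition of $\chi_\tau$ in (\ref{chi}), $\tfrac1\tau\int_0^{+\infty}\re^{-t/\tau}\re^{i(\omega_j-\omega_k)t}\rd t = \chi_\tau(\omega_j-\omega_k)$, so the averaged frequency matrix is precisely $\Phi_\tau$ from (\ref{Phitau}), yielding the first equality $\bE_\tau(XX^\rT) = V(\Phi_\tau\odot(W\Gamma W^*))V^*$.

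To obtain the block form, I would split $V$ and $W$ according to the conjugate pairing $V_{k+n/2}=\overline{V_k}$, $W_{k+n/2}=\overline{W_k}$, $\omega_{k+n/2}=-\omega_k$ implied by (\ref{sym})--(\ref{CC}), and regroup the double sum over $\{1,\dots,n\}^2$ into a sum over $\{1,\dots,n/2\}^2$ of $2\times2$ blocks built from $\{V_j,\overline{V_j}\}$ and $\{W_j,\overline{W_j}\}$. The four index combinations $(j,k)$, $(j,k+\tfrac n2)$, $(j+\tfrac n2,k)$, $(j+\tfrac n2,k+\tfrac n2)$ produce the frequency differences $\omega_j-\omega_k$, $\omega_j+\omega_k$, $-\omega_j-\omega_k$, $\omega_k-\omega_j$, which populate the displayed $2\times2$ frequency matrix, via the identity $V(\Phi\odot(W\Gamma W^*))V^* = \sum_{j,k}\Phi_{jk}\,C_j\Gamma C_k^*$ with $C_k=V_kW_k$. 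For the infinite-horizon averages I would pass to the limit $\tau\to+\infty$ in this finite-dimensional expression: by (\ref{chi}), $\chi_\tau(\omega_j-\omega_k)\to\delta_{\omega_j\omega_k}$, hence $\Phi_\tau\to\Phi_\infty$ entrywise, and continuity of the linear map $M\mapsto V(M\odot(W\Gamma W^*))V^*$ gives $\bE_\infty(XX^\rT)=V(\Phi_\infty\odot(W\Gamma W^*))V^*$; in the block form only the diagonal-in-frequency terms survive, since the off-diagonal blocks carry $\chi_\tau(\pm(\omega_j+\omega_k))\to0$ with $\omega_j,\omega_k>0$, which yields the last two expressions in (\ref{EXXinf}).

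Finally, the main obstacle: there is no analytic difficulty, and the only delicate part is bookkeeping. I expect essentially all the care to be spent keeping the transpose versus conjugate-transpose conventions straight (the step $X^\rT=X^*$ is exactly what converts $W\Gamma W^\rT$ into the Hermitian-looking $W\Gamma W^*$) and correctly assigning the four frequency combinations to the entries of the $2\times2$ blocks when expanding through the conjugate pairing, i.e.\ verifying that the full-index sum $\sum_{j,k=1}^{n}\chi_\tau(\omega_j-\omega_k)\,C_j\Gamma C_k^*$ and the block-structured sum in (\ref{EXX}) indeed coincide.
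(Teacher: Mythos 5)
Your proposal is correct and matches the paper's own proof essentially step for step: both exploit self-adjointness to write $X(t)X(t)^{\rT}=V\re^{it\Omega}(WX_0X_0^{\rT}W^*)\re^{-it\Omega}V^*$, identify the diagonal conjugation as a Hadamard multiplication by $(\re^{i(\omega_j-\omega_k)t})_{j,k}$, pass $\bE_\tau$ through entrywise to produce $\Phi_\tau$ via $\chi_\tau$, and obtain (\ref{EXXinf}) from $\Phi_\tau\to\Phi_\infty$ together with $\chi_\tau(\pm(\omega_j+\omega_k))\to 0$. The conjugate-pair regrouping into $2\times 2$ blocks via (\ref{sym})--(\ref{CC}) is also exactly the paper's argument, so there is nothing to add.
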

%%%%%%%%%%%%%%%%%%%%%%%%%%%%%%%%%%%%%%%%%%%%%%%%%%%%%%%%%%%%%%%%%%%%%%%%%%%%%%%%%%%%%%%%%%%%%%%%%%%
\begin{proof}
Although (\ref{EXX}) can be obtained from the relation (\ref{EXi}) with $d=2$, we will provide a direct calculation. In view of self-adjointness of the system variables, (\ref{XAV}) and (\ref{XCX}) imply that
\begin{align}
\nonumber
    X&(t) X(t)^{\rT}
     =
    X(t)X(t)^{\dagger}\\
\nonumber
    & =
    V \re^{it\Omega} W X_0X_0^{\rT} W^* \re^{-it\Omega} V^*\\
\nonumber
     & =
    V \big(\Psi(t)\odot (W X_0X_0^{\rT} W^*)\big)V^*\\
\label{XXV}
    & =
    \sum_{j,k=1}^{n/2}
    \big(
    \re^{i\omega_j t} C_j
    +
    \re^{-i\omega_j t} \overline{C_j}
    \big)    X_0X_0^{\rT}
    \big(
    \re^{-i\omega_k t} C_k^*
    +
    \re^{i\omega_k t} C_k^{\rT}
    \big),
\end{align}
with $(\cdot)^{\dagger}:= ((\cdot)^{\#})^{\rT}$ the transpose of the entry-wise operator adjoint $(\cdot)^{\#}$.
Here, use is also made of the diagonal structure of the matrix $\Omega$ in (\ref{AV}) together  with a complex Hermitian rank-one matrix
\begin{equation}
\label{Ft}
    \Psi(t):=   (\re^{i(\omega_j - \omega_k)t})_{1\< j,k \< n}
\end{equation}
which encodes the time dependence of $XX^{\rT}$. The representation (\ref{XXV}) allows the time averaging to be decoupled from the quantum expectation as
\begin{align}
\nonumber
    \bE_{\tau}(X&X^{\rT})
    :=
    \tfrac{1}{\tau}
    \int_0^{+\infty}
    \re^{-t/\tau}
    \bE (X(t)X(t)^{\rT})
    \rd t\\
\nonumber
    = &
    V
    \Big(
        \tfrac{1}{\tau}
        \int_0^{+\infty}
        \re^{-t/\tau}
        \Psi(t)
        \rd t
        \odot (W \Gamma W^*)
    \Big)V^*\\
\nonumber
    =&
    \sum_{j,k=1}^{n/2}
    \Big(
        \chi_{\tau}(\omega_j-\omega_k) C_j \Gamma C_k^*
        +
        \chi_{\tau}(\omega_j+\omega_k) C_j \Gamma C_k^{\rT}\\
\label{EXXtau}
         & +
        \chi_{\tau}(-\omega_j-\omega_k) \overline{C_j} \Gamma C_k^*
        +
        \chi_{\tau}(\omega_k-\omega_j) \overline{C_j} \Gamma C_k^{\rT}
    \Big),
\end{align}
which leads to (\ref{EXX}) and (\ref{Phitau}) in view of (\ref{C}). Here,  $\Gamma$ is the matrix given by (\ref{Gamma}),  and the relation
$
        \tfrac{1}{\tau}
        \int_0^\tau
        \re^{-t/\tau}
        \Psi(t)
        \rd t
        = \Phi_{\tau}$
is obtained by applying (\ref{chi}) entrywise to the matrix $\Psi$ in (\ref{Ft}). Now, the convergence in (\ref{chi}) implies that the matrices (\ref{Phitau}) and (\ref{Phiinf}) are related by $\lim_{\tau\to +\infty} \Phi_{\tau} = \Phi_{\infty}$, and $\lim_{\tau\to +\infty}\chi_{\tau}(\pm(\omega_j+\omega_k))=0$ since $\omega_j + \omega_k>0$ for all $j,k=1, \ldots, \tfrac{n}{2}$  in view of (\ref{sym}). This leads to (\ref{EXXinf}) in view of (\ref{EXXtau}).
\end{proof}
%%%%%%%%%%%%%%%%%%%%%%%%%%%%%%%%%%%%%%%%%%%%%%%%%%%%%%%%%%%%%%%%%%%%%%%%%%%%%%%%%%%%%%%%%%%%%%%%%%%

The proof of Lemma~\ref{lem:EXX} shows that $\bE_{\tau}(XX^{\rT})$ is close to $\bE_{\infty}(XX^{\rT})$ if the effective time horizon $\tau$ is large in comparison with
\begin{equation}
\label{tau*}
    \tau_*:=
    \tfrac{1}{\min\big(\big\{|\omega_j\pm\omega_k |:\ 1\< j, k\< \tfrac{n}{2}\big\}\setminus \{0\}\big)}.
\end{equation}
If the frequencies $\omega_1, \ldots, \omega_{n/2}$ are pairwise different (which is a weaker condition than their incommensurability used in Theorem~\ref{th:torus}), the matrix $\Phi_{\infty}$ in (\ref{Phiinf}) becomes the identity matrix and (\ref{EXXinf}) reduces to
\begin{align}
\nonumber
    \bE_{\infty} (XX^{\rT})
    & =
    \sum_{k=1}^{n/2}
    \big(
    C_k \Gamma C_k^*
    +
    \overline{C_k} \Gamma C_k^{\rT}
    \big)\\
\label{EXX_red}
     & =
    \sum_{k=1}^{n/2}
    {\scriptsize\begin{bmatrix}
        V_k & \overline{V_k}
    \end{bmatrix}}
    {\scriptsize\begin{bmatrix}
        W_k \Gamma W_k^* & 0\\
        0 & \overline{W_k} \Gamma W_k^{\rT}
    \end{bmatrix}}
    {\scriptsize\begin{bmatrix}
        V_k^* \\
        V_k^{\rT}
    \end{bmatrix}}.
\end{align}
Such energy matrices $R\succ 0$ form an open subset  of $\mS_n$.  The corresponding infinite-horizon average of a quadratic form of $X$ is
$
    \bE_{\infty}(X^{\rT} \Pi X)
    =
    \sum_{k=1}^{n/2}
    \big(
        V_k^* \Pi V_k
        W_k \Gamma W_k^*
        +
        V_k^{\rT} \Pi \overline{V_k}
        \overline{W_k} \Gamma W_k^{\rT}
    \big)
$
for any $\Pi \in \mS_n$.
Lemmas~\ref{lem:PALE} and \ref{lem:EXX} can be used for computing quadratic cost functionals for QHOs, such  as the performance criterion  in the mean square optimal CQF problem of Section~\ref{sec:CQF}. Furthermore, Lemma~\ref{lem:EXX} can be easily extended to the more general case $R\succcurlyeq 0$ with nonstrict inequalities $\>$ in (\ref{sym}).

\noindent{\bf Example 1.}
Consider a two-mode QHO of dimension $n=4$, whose CCR matrix  is $\Theta := \tfrac{1}{2}I_2\ox \bJ$, where $\ox$ is the Kronecker product, and
\begin{equation}
\label{bJ}
\bJ:={\scriptsize\begin{bmatrix}
        0& 1\\
        -1 & 0
    \end{bmatrix}}
\end{equation}
spans the space $\mA_2$. This corresponds to the system variables consisting of two pairs of conjugate position $q_k$ and momentum $-i\d_{q_k}$ operators (with an appropriately normalised  Planck constant \cite{S_1994}),  %on the Schwartz space \cite{?},
$k=1,2$. Suppose the QHO has the energy matrix
    $$
        R :={\scriptsize\begin{bmatrix}
    3.4048  &  3.0478 &  -2.2402 &  -1.4028\\
    3.0478  &  4.1266 &  -2.0050 &  -2.4614\\
   -2.2402  & -2.0050 &   2.0076 &   0.8484\\
   -1.4028  & -2.4614 &   0.8484 &   4.7504
\end{bmatrix}}\succ 0,
$$
so that the frequencies are $\pm 4.3074$, $\pm 0.6540$, and the corresponding margin in (\ref{tau*}) is
   $\tau_* = 0.7645$.  The initial covariance condition (\ref{Sigma}) is given by
   $$
    \Sigma:={\scriptsize\begin{bmatrix}
    5.9068 &  -2.2359 &  -0.8477 &   2.0721\\
   -2.2359 &   4.7534 &   4.6272 &  -2.8090\\
   -0.8477 &   4.6272 &   6.7367 &  -4.1352\\
    2.0721 &  -2.8090 &  -4.1352 &   4.8525
    \end{bmatrix}}
    $$
    and satisfies the uncertainty relation constraint $\Sigma+i\Theta \succcurlyeq 0$.
    Lemma~\ref{lem:PALE} is used in order to compute the discounted second-order moments of the system variables.
   The dependence of their real parts on the effective time horizon $\tau$   is depicted in Fig.~\ref{fig:PPP}.
\begin{figure}[thpb]
      \centering
%      \framebox{\parbox{82mm}{
%      \includegraphics[width=85mm]{PPP4.pdf}
      %\includegraphics[width=85mm]{PPP4_red.pdf}
      \includegraphics[width=85mm]{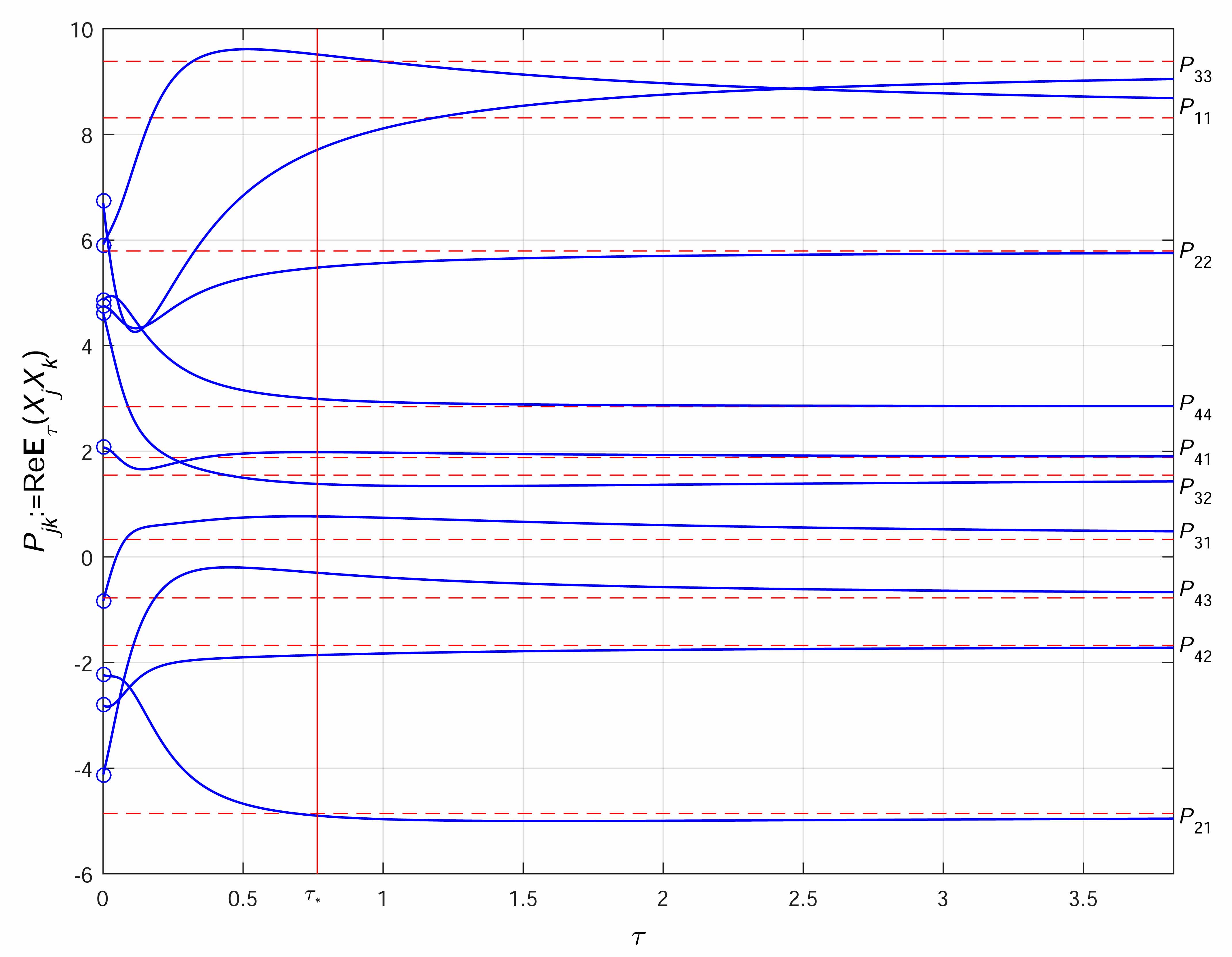}
 %     }}
      \caption{The dependence of the real parts $P_{jk}$ of the discounted second moments of the system variables on the effective time horizon $\tau$ for the four-dimensional QHO of Example 1.
      The dashed lines represent the corresponding infinite-horizon averages as $\tau\to +\infty$, while the ``$\circ$''s indicate the initial values $\Re\bE(X_j(0)X_k(0))$.}
      \label{fig:PPP}
   \end{figure}
These graphs show that, in the example being considered, the interval $0< \tau < 5\tau_*=3.8225$ is sufficiently large for the moments to manifest convergence to the infinite-horizon averages:
$$
    \Re \bE_{\infty}(XX^{\rT})
    =
    {\scriptsize\begin{bmatrix}
    8.3140 &  -4.8573 &   0.3322  &  1.8803\\
   -4.8573 &   5.7935 &   1.5480  & -1.6743\\
    0.3322 &   1.5480 &   9.3853  & -0.7758\\
    1.8803 &  -1.6743 &  -0.7758  &  2.8441
    \end{bmatrix}}.
$$
This matrix is calculated using the frequencies of the QHO in accordance with (\ref{EXX_red}). \hfill$\blacktriangle$

%%%%%%%%%%%%%%%%%%%%%%%%%%%%%%%%%%%%%%%%%%%%%%%%%%%%%%%%%%%%%%%%%%%%%%%%%%%%%%%%%%%%%%%%%%%%%%%%%%%
\section{Directly coupled  quantum plant and coherent quantum observer}\label{sec:QODE}
%%%%%%%%%%%%%%%%%%%%%%%%%%%%%%%%%%%%%%%%%%%%%%%%%%%%%%%%%%%%%%%%%%%%%%%%%%%%%%%%%%%%%%%%%%%%%%%%%%%
%%%%%%%%%%%%%%%%%%%%%%%%%%%%%%%%%%%%%%%%%%%%%%%%%%%%%%%%%%%%%%%%%%%%%%%%%%%%%%%%

Consider a direct coupling of a quantum plant and a coherent quantum observer which form a closed QHO whose Hamiltonian $H$ is given by
\begin{equation}
\label{H}
    H
    :=
    \tfrac{1}{2}\cX^{\rT} R \cX,
    \
    \cX
    :=
    {\scriptsize\begin{bmatrix}
        X\\
        \xi
    \end{bmatrix}},
    \
    X
    :=
    {\scriptsize\begin{bmatrix}
    X_1\\
    \vdots\\
    X_n
    \end{bmatrix}},
    \
    \xi
    :=
    {\scriptsize\begin{bmatrix}
    \xi_1\\
    \vdots\\
    \xi_{\nu}
    \end{bmatrix}},
\end{equation}
where $R \in \mS_{n+\nu}$ is the plant-observer energy matrix. 
Here, $X_1, \ldots, X_n$ and $\xi_1, \ldots, \xi_{\nu}$ are the dynamic variables of the plant and the observer, respectively, with both dimensions $n$ and $\nu$ being  even.  The plant and observer variables are time-varying self-adjoint operators on the tensor-product space $\cH:= \cH_1\ox \cH_2$, where   $\cH_1$ and $\cH_2$ are initial complex separable Hilbert  spaces of the plant and the observer (which can be copies of a common Hilbert space). These quantum variables are assumed to satisfy the CCRs with a block-diagonal CCR matrix $\Theta$:
\begin{equation}
\label{Theta}
    [\cX,\cX^{\rT}]
    =
    2i \Theta,
    \qquad
    \Theta
    :=
    \diag_{k=1,2}(\Theta_k),
\end{equation}
where $\Theta_1\in \mA_n$ and $\Theta_2 \in \mA_{\nu}$ are nonsingular CCR matrices of the plant and the observer, respectively. For what follows, the plant-observer energy matrix $R$ in (\ref{H}) is partitioned as
\begin{equation}
\label{R}
    R:=
    {\scriptsize\begin{bmatrix}
        K  & L\\
        L^{\rT} & M
    \end{bmatrix}}.
\end{equation}
Here, $K\in \mS_n$ and $M\in \mS_{\nu}$ are the energy matrices of the plant and the observer which specify their free Hamiltonians $
    H_1  := \tfrac{1}{2}X^{\rT}K X
$ and $
    H_2:= \tfrac{1}{2}\xi^{\rT}M \xi$. Also, $L\in \mR^{n\x \nu}$ is the plant-observer coupling matrix which parameterizes the interaction Hamiltonian $
    H_{12}:= \tfrac{1}{2} (X^{\rT}L\xi + \xi^{\rT}L^{\rT}X) =
    \Re(X^{\rT}L\xi)
$,
where $\Re(\cdot)$ applies to operators (and matrices of operators) so that $\Re N := \tfrac{1}{2}(N+N^{\#})$ consists of self-adjoint operators. Accordingly, the total Hamiltonian $H$ in (\ref{H}) is representable as $H = H_1 + H_2 + H_{12}$. 
In view of (\ref{H})--(\ref{R}),  the Heisenberg dynamics of the composite system are governed by a linear ODE
\begin{equation}
\label{cXdot}
    \dot{\cX} = i[H,\cX] =\cA \cX.
\end{equation}
Here, in accordance with the partitioning of $\cX$ in (\ref{H}), the matrix $\cA\in \mR^{(n+\nu)\x (n+\nu)}$ is split into appropriately dimensioned blocks as
\begin{equation}
\label{cA}
    \cA
    :=
    2\Theta R
    =
    2
    {\scriptsize\begin{bmatrix}
        \Theta_1 K & \Theta_1 L\\
        \Theta_2 L^{\rT} & \Theta_2 M
    \end{bmatrix}}
    =
{\scriptsize\begin{bmatrix}
    A & BL\\
    \beta L^{\rT} & \alpha
\end{bmatrix}},
\end{equation}
with the ODE (\ref{cXdot}) being representable as a set of two ODEs
\begin{align}
\label{Xdot}
    \dot{X}  & = AX +  B \eta,\\
\label{xidot}
    \dot{\xi}  & = \alpha \xi + \beta Y,
\end{align}
where 
\begin{align}
\label{AB}
        A & := 2\Theta_1 K,
        \qquad\,
        B := 2\Theta_1,\\
\label{ab}
        \alpha & := 2\Theta_2 M,
        \qquad
        \beta := 2\Theta_2,\\
\label{Yeta}
    Y & := L^{\rT} X,
    \qquad\ \ \,
    \eta := L\xi.
\end{align}
The vector $\eta$ drives the plant variables in (\ref{Xdot}), thus resembling the classical actuator signal. The observer variables in (\ref{xidot}) are driven by the plant variables through the vector $Y$ which  corresponds to the classical  observation output from the plant.  However, the quantum mechanical nature of $Y$ and $\eta$ (which consist of time-varying self-adjoint operators on $\cH$) makes them qualitatively different from the classical signals \cite{AM_1989,KS_1972}. In view of  the relation
$
    [Y,Y^{\rT}]
    =
    L^{\rT}[X,X^{\rT}]L
    =
    2iL^{\rT}\Theta_1 L
$, following from (\ref{Theta}) and (\ref{Yeta}),  the outputs $Y_1, \ldots, Y_{\nu}$ do not commute with each other, in general, which makes them
inaccessible to simultaneous measurement. Since the plant and the observer being considered form a fully quantum system which does not involve measurements,
$Y$ is not an observation signal in the usual control theoretic sense.
In order to emphasize this distinction from the classical case, the above described observers are referred to as coherent (that is, measurement-free) quantum observers \cite{JNP_2008,L_2000,MJ_2012,NJP_2009,VP_2013b,WM_1994_paper}.
In addition to the noncommutativity of the dynamic variables, specified by the CCRs (\ref{Theta}), the quantum mechanical nature of the setting manifests itself in the fact that the ``observation'' and ``actuation'' channels  in (\ref{Yeta}) depend on the same matrix $L$. This coupling  between the ODEs (\ref{Xdot}) and (\ref{xidot}) is closely related to the Hamiltonian structure $\cA\in \Theta \mS_{n+\nu} $ of the matrix $\cA$ in (\ref{cA}).
Therefore, the ``quantum information flow'' from the plant to the observer through $Y$  has a ``back-action'' effect on the plant dynamics through $\eta$. However, unlike the conventional meaning of this term in the context of quantum measurements,  the back-action considered here is caused by the direct coupling of the observer which modifies the dynamics of the plant.

Assuming that the plant energy matrix $K$ is
fixed, 
the matrices $L$ and $M$ can be varied so as to achieve desired properties for the plant-observer QHO under constraints on the plant-observer coupling. To this end, for a given effective time horizon $\tau>0$, the observer will be called \emph{$\tau$-admissible} if the matrix $\cA$ in (\ref{cA}) satisfies
\begin{equation}
\label{taugood}
    \tau < \tfrac{1}{2\max(0,\, \ln \br(\re^{\cA}))},
\end{equation}
cf. (\ref{taumax}) of Lemma~\ref{lem:PALE}. The corresponding pairs $(L,M)$ form an open subset of $\mR^{n\x \nu}\x \mS_{\nu}$ which depends on $\tau$. In application  to the plant-observer system,  the discussions of Section~\ref{sec:aver} show that
if the matrix $R$ in (\ref{R}) is positive definite (and hence, $\cA$ has a purely imaginary spectrum), then such an observer is $\tau$-admissible for any $\tau>0$. 
The condition $R\succ 0$ is equivalent to
\begin{equation}
\label{Rpos}
    K\succ 0,
    \quad
    M \succ 0,
    \quad
    \|\Lambda\|_{\infty} <1,
    \quad
        \Lambda:=
    K^{-1/2} LM^{-1/2},
\end{equation}
where the third inequality describes the contraction property of the matrix $\Lambda$ whose largest singular value $\|\Lambda\|_{\infty}$ quantifies the ``smallness'' of the coupling matrix $L$ in comparison with the energy matrices $K$ and $M$. If the observer satisfies (\ref{Rpos}), then any system operator (with appropriate finite moments) in the plant-observer QHO lends itself to the discounted averaging, described in Section~\ref{sec:aver}, for any effective time horizon $\tau>0$. Also note that the rescaling
\begin{equation}
\label{scale}
    \wh{X}:= \sqrt{K} X,
    \qquad
    \wh{\xi}:= \sqrt{M} \xi
\end{equation}
of the plant and observer  variables
leads to a QHO with appropriately transformed CCR matrices  $\wh{\Theta}_1:= \sqrt{K}\Theta_1 \sqrt{K}$ and $\wh{\Theta}_2:= \sqrt{M}\Theta_2 \sqrt{M}$, and the energy matrix $\wh{R} := {\scriptsize\begin{bmatrix}I_n & \Lambda\\ \Lambda^{\rT} & I_{\nu}\end{bmatrix}}$, where $\Lambda$ from  (\ref{Rpos})  plays the role of the coupling matrix.

For what follows, it is assumed that the initial plant and observer variables have a block diagonal matrix of second moments:
\begin{equation}
\label{Sig}
  \Sigma
  :=
  \Re \bE(\cX_0\cX_0^{\rT})
  =
  \diag_{k=1,2}(\Sigma_k),
\end{equation}
where $\Sigma_k+i\Theta_k \succcurlyeq 0$ due to the positive semi-definiteness of quantum covariance matrices as a generalized form of the Heisenberg uncertainty principle \cite{H_2001} mentioned above. In the zero-mean case $\bE \cX_0 = 0$, this corresponds to $X_0$ and $\xi_0$ being uncorrelated. A physical rationale for the absence of initial correlation is that the observer is prepared independently of the plant and then brought into interaction with the latter at $t=0$. If the plant and the observer remained uncoupled  (which would  correspond to the case $L=0$), then, in view of Lemma~\ref{lem:PALE} and (\ref{Sig}), their variables would remain uncorrelated (in the sense that $\bE (X\xi^{\rT}) = 0$) and  the corresponding matrices $P_1 := \Re\bE_{\tau} (XX^{\rT})$ and $P_2 :=\Re \bE_{\tau}(\xi\xi^{\rT})$ would be unique solutions of independent ALEs:
\begin{align}
\label{PALE1}
    P_1 &= \tfrac{1}{\tau}\bL(A_{\tau}, \Sigma_1),
    \qquad
    A_{\tau}:= A - \tfrac{1}{2\tau}I_n,\\
\label{PALE2}
    P_2 &=\tfrac{1}{\tau}\bL(\alpha_{\tau}, \Sigma_2),
    \qquad
    \alpha_{\tau}:= \alpha - \tfrac{1}{2\tau}I_{\nu},
\end{align}
where (\ref{ILO}) is used, and both matrices $A_{\tau}$ and $\alpha_{\tau}$ are assumed to be Hurwitz.
In the general case of plant-observer coupling $L\ne 0$,  the matrix
\begin{equation}
\label{cP}
    \cP
    :=
    {\scriptsize\begin{bmatrix}
        \cP_{11} & \cP_{12}\\
        \cP_{21} & \cP_{22}
    \end{bmatrix}}
    :=
\Re \bE_{\tau} (\cX\cX^{\rT}),
\end{equation}
which is split into blocks similarly to $\cA$ in (\ref{cA}),
coincides with the controllability Gramian of the pair $(\cA_{\tau}, \sqrt{\tau^{-1}\Sigma})$ and satisfies an appropriate ALE:
\begin{align}
\label{cPALE}
    \cP = \tfrac{1}{\tau}\bL(\cA_{\tau}, \Sigma),
\end{align}
provided the observer is $\tau$-admissible in the sense of (\ref{taugood}). Here, $\Sigma$ is the initial covariance condition from (\ref{Sig}), and the matrix
 \begin{equation}
\label{cAtau}
    \cA_{\tau}
    :=
    \cA - \tfrac{1}{2\tau}I_{n+\nu}
    =
    {\scriptsize\begin{bmatrix}
        A_{\tau} & B L\\
        \beta L^{\rT} & \alpha_{\tau}
    \end{bmatrix}}
\end{equation}
is Hurwitz. As mentioned above, in the case $L=0$ (when the plant and the observer are uncoupled), the matrix $\cP$ reduces to the block diagonal matrix
\begin{equation}
\label{cP*}
  \cP_* := \diag_{k=1,2}(P_k)
\end{equation}
which is formed from the matrices $P_1$, $P_2$ in (\ref{PALE1}) and (\ref{PALE2}).

%%%%%%%%%%%%%%%%%%%%%%%%%%%%%%%%%%%%%%%%%%%%%%%%%%%%%%%%%%%%%%%%%%%%%%%%%%%%%%%%%%%%%%%%%%%%%%%%%%%
\section{Observer back-action on covariance dynamics of the plant}\label{sec:back}
%%%%%%%%%%%%%%%%%%%%%%%%%%%%%%%%%%%%%%%%%%%%%%%%%%%%%%%%%%%%%%%%%%%%%%%%%%%%%%%%%%%%%%%%%%%%%%%%%%%
%%%%%%%%%%%%%%%%%%%%%%%%%%%%%%%%%%%%%%%%%%%%%%%%%%%%%%%%%%%%%%%%%%%%%%%%%%%%%%%%

The back-action of the observer can be quantified by the deviation of the covariance dynamics of the plant from those which the plant would have if it were uncoupled from the observer. In view of (\ref{cP}) and (\ref{cP*}), we will describe this deviation in terms of bilateral bounds for $\cP_{11}-P_1$ and, more generally, $\cP-\cP_*$.
To this end, we will use the following technical lemma whose proof is given here for completeness.

%%%%%%%%%%%%%%%%%%%%%%%%%%%%%%%%%%%%%%%%%%%%%%%%%%%%%%%%%%%%%%%%%%%%%%%%%%%%%%%%%%%%%%%%%%%%%%%%%%%%%%%%%%%%%%%%%%%%%%%%%
\begin{lem}
\label{lem:NNN}
Suppose a matrix $N \in    {\scriptsize\begin{bmatrix}
        N_{11} & N_{12}\\
        N_{21} & N_{22}
    \end{bmatrix}} \in \mS_{2n}^+$ is split into blocks $N_{jk}\in \mR^{n\x n}$. Then
\begin{equation}
\label{NNN0}
  \pm(N_{12}+N_{21})\preccurlyeq w N_{11} + \tfrac{1}{w} N_{22}
\end{equation}
for any $w>0$.    Furthermore, if $N_{11}\succ 0$ (in addition to $N\succcurlyeq 0$),  then
\begin{equation}
\label{NNN}
    \pm(N_{12}+N_{21})\preccurlyeq 2\sqrt{\br(N_{11}^{-1}N_{22})}\, N_{11}.
\end{equation}
\end{lem}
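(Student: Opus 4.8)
The plan is to derive the scalar inequality (\ref{NNN0}) directly from the positive semi-definiteness of $N$ by evaluating the quadratic form $v^{\rT}Nv\geqslant 0$ on a one-parameter family of ``antidiagonal'' test vectors, and then to obtain (\ref{NNN}) by dominating $N_{22}$ in terms of $N_{11}$ and optimizing the free parameter $w$.

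First I would record that symmetry of $N$ gives $N_{21}=N_{12}^{\rT}$ and that $N\succcurlyeq 0$ forces both diagonal blocks $N_{11},N_{22}\succcurlyeq 0$. For an arbitrary $x\in\mR^n$ and any $w>0$, I would apply $v^{\rT}Nv\geqslant 0$ to the test vector $v:=\bigl[\begin{smallmatrix}\sqrt{w}\,x\\ \mp w^{-1/2}x\end{smallmatrix}\bigr]$. Since $x^{\rT}(N_{12}+N_{21})x=2x^{\rT}N_{12}x$ (scalars equal their own transpose), expanding $v^{\rT}Nv$ gives precisely $x^{\rT}\bigl(wN_{11}+\tfrac{1}{w}N_{22}\mp(N_{12}+N_{21})\bigr)x\geqslant 0$, and as $x$ is arbitrary this is the matrix inequality (\ref{NNN0}) in both sign choices. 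Equivalently, (\ref{NNN0}) follows from the congruence $\diag(\sqrt{w}I_n,\mp w^{-1/2}I_n)\,N\,\diag(\sqrt{w}I_n,\mp w^{-1/2}I_n)\succcurlyeq 0$ combined with the standard fact that $\bigl[\begin{smallmatrix}\bA&\bB\\ \bB^{\rT}&\bC\end{smallmatrix}\bigr]\succcurlyeq 0$ implies $\pm(\bB+\bB^{\rT})\preccurlyeq \bA+\bC$.

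For (\ref{NNN}), under the additional assumption $N_{11}\succ 0$ I would use that $N_{11}^{-1}N_{22}$ is similar to the symmetric matrix $N_{11}^{-1/2}N_{22}N_{11}^{-1/2}\succcurlyeq 0$, so its eigenvalues are real and nonnegative and $N_{11}^{-1/2}N_{22}N_{11}^{-1/2}\preccurlyeq \br(N_{11}^{-1}N_{22})I_n$; congruence by $N_{11}^{1/2}$ then yields $N_{22}\preccurlyeq \br(N_{11}^{-1}N_{22})N_{11}$. Substituting this into the right-hand side of (\ref{NNN0}) produces $\pm(N_{12}+N_{21})\preccurlyeq\bigl(w+\tfrac{1}{w}\br(N_{11}^{-1}N_{22})\bigr)N_{11}$, and taking the minimizer $w=\sqrt{\br(N_{11}^{-1}N_{22})}$ (the arithmetic--geometric-mean optimum) collapses the prefactor to $2\sqrt{\br(N_{11}^{-1}N_{22})}$, which is exactly (\ref{NNN}).

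The only delicate point, and the one I would flag as the main obstacle, is the degenerate case $\br(N_{11}^{-1}N_{22})=0$, i.e.\ $N_{22}=0$, in which the optimal $w$ is not interior to $(0,+\infty)$. Here I would invoke that a positive semi-definite matrix with a vanishing diagonal block must have vanishing off-diagonal blocks as well, so $N_{12}=N_{21}=0$ and (\ref{NNN}) holds trivially as $0\preccurlyeq 0$; in the remaining case the minimization over $w>0$ is attained in the interior and the argument is routine. Beyond this bookkeeping I do not anticipate any serious difficulty.
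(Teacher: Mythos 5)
Your proof is correct and takes essentially the same approach as the paper: your test-vector expansion of $v^{\rT}Nv\>0$ for (\ref{NNN0}) is the quadratic-form version of the paper's compression of $N$ by the block row ${\scriptsize\begin{bmatrix}\sqrt{w}I_n & \pm\tfrac{1}{\sqrt{w}}I_n\end{bmatrix}}$, and your bound $N_{22}\preccurlyeq\br(N_{11}^{-1}N_{22})\,N_{11}$ followed by minimizing $w+\tfrac{1}{w}\br(N_{11}^{-1}N_{22})$ over $w>0$ is exactly the paper's estimate obtained by factoring through $\sqrt{N_{11}}$. Your explicit treatment of the degenerate case $N_{22}=0$, where the minimizing $w$ is not interior to $(0,+\infty)$, is a minor refinement that the paper passes over silently.
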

%%%%%%%%%%%%%%%%%%%%%%%%%%%%%%%%%%%%%%%%%%%%%%%%%%%%%%%%%%%%%%%%%%%%%%%%%%%%%%%%%%%%%%%%%%%%%%%%%%%%%%%%%%%%%%%%%%%%%%%%%
\proof
Positive semi-definiteness of the matrix $N$ implies that
$    0\preccurlyeq
    {\scriptsize\begin{bmatrix}
        \sqrt{w} I_n & \pm \tfrac{1}{\sqrt{w}} I_n
    \end{bmatrix}}
    N
    {\scriptsize\begin{bmatrix}
        \sqrt{w} I_n \\
        \pm \tfrac{1}{\sqrt{w}} I_n
    \end{bmatrix}}
    =
    w N_{11} + \tfrac{1}{w} N_{22} \pm (N_{12} + N_{21})
$
for any $w >0$, which proves (\ref{NNN0})
(similar inequalities are used, for example,  in the proof of \cite[Lemma 3]{SVP_2014}).
The parameter $w$ can be varied so as to ``tighten up'' the bound (\ref{NNN0}).
More precisely, from the additional assumption $N_{11}\succ 0$, it follows that
\begin{align}
\nonumber
    w N_{11} + \tfrac{1}{w} N_{22}
    & =
    \sqrt{N_{11}}
    \big(
        w I_n  + \tfrac{1}{w} N_{11}^{-1/2}N_{22}N_{11}^{-1/2}
    \big)
    \sqrt{N_{11}}\\
\label{NNN2}
     & \preccurlyeq
    \big(
        w + \tfrac{1}{w} \br(N_{11}^{-1}N_{22})
    \big)
    N_{11}.
\end{align}
The scalar coefficient on the right-hand side of this inequality achieves its minimum value
\begin{equation}
\label{NNN3}
    \min_{w >0}
    \big(
        w + \tfrac{1}{w} \br(N_{11}^{-1}N_{22})
    \big)
    =
    2\sqrt{\br(N_{11}^{-1}N_{22})}
\end{equation}
at $w = \sqrt{\br(N_{11}^{-1}N_{22})}$, in which case, a combination of (\ref{NNN0}), (\ref{NNN2}) and (\ref{NNN3}) leads to (\ref{NNN}).
\endproof
%%%%%%%%%%%%%%%%%%%%%%%%%%%%%%%%%%%%%%%%%%%%%%%%%%%%%%%%%%%%%%%%%%%%%%%%%%%%%%%%%%%%%%%%%%%%%%%%%%%%%%%%%%%%%%%%%%%%%%%%%

The following lemma will be used to give a more precise meaning to the property that the observer output  with relatively small mean square values has an appropriately weak effect on the covariance dynamics of the plant.

%%%%%%%%%%%%%%%%%%%%%%%%%%%%%%%%%%%%%%%%%%%%%%%%%%%%%%%%%%%%%%%%%%%%%%%%%%%%%%%%%%%%%%%%%%%%%%%%%%%%%%%%%%%%%%%%%%%%%%%%%
\begin{lem}
\label{lem:LMI}
Suppose %the plant energy matrix satisfies $K\succ 0$, and
the directly coupled observer is $\theta$-admissible, where
\begin{equation}
\label{varsig}
    \varsigma:= \tfrac{w\tau}{w+\tau}
    <
    \tau
    <
    \theta := \tfrac{m\tau}{m-\tau}
\end{equation}
are related to the effective time horizon $\tau$ through auxiliary parameters $w>0$ and  $m>\tau$.
Then the matrix $\cP_{11}$ from (\ref{cP}) satisfies
\begin{align}
\nonumber
    -\bL
    \big(
        A_{\varsigma},\,  \tfrac{1}{w} P_1 &+ w B\Re \bE_{\tau}(\eta\eta^{\rT})B^{\rT}
    \big)
    \preccurlyeq
    \cP_{11}-P_1\\
\label{Pdiff}
 &
     \preccurlyeq
    \bL
    \big(
        A_{\theta},\,  \tfrac{1}{m} P_1 + m B\Re \bE_{\tau}(\eta\eta^{\rT})B^{\rT}
    \big),
\end{align}
where $P_1$ is given by (\ref{PALE1}). 
\end{lem}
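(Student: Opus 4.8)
The plan is to reduce the two-sided estimate (\ref{Pdiff}) to a single algebraic Lyapunov equation (ALE) for the difference $\Delta:=\cP_{11}-P_1$ and then to squeeze its solution using Lemma~\ref{lem:NNN} and the monotonicity of $\bL$. First I would read off the top-left block of the ALE (\ref{cPALE}) for $\cP$. With the block form (\ref{cAtau}) of $\cA_{\tau}$ and the block-diagonal $\Sigma$ from (\ref{Sig}), this block is
\[
A_{\tau}\cP_{11}+\cP_{11}A_{\tau}^{\rT}+G+\tfrac{1}{\tau}\Sigma_1=0,\qquad G:=BL\cP_{21}+\cP_{12}L^{\rT}B^{\rT},
\]
where $\cP_{21}=\cP_{12}^{\rT}$ by symmetry of $\cP$. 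Subtracting the ALE (\ref{PALE1}) for $P_1$ removes the $\tfrac{1}{\tau}\Sigma_1$ term and leaves $A_{\tau}\Delta+\Delta A_{\tau}^{\rT}+G=0$: the plant-only matrix $A_{\tau}$ now drives $\Delta$ through the symmetric coupling term $G$.

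To control $G$, I would use positive semi-definiteness of $\cP$ (the real part of the positive semi-definite Hermitian matrix $\bE_{\tau}(\cX\cX^{\rT})$). The congruence $N:=T\cP T^{\rT}$ with $T:=\blockdiag(I_n,BL)$ is then positive semi-definite, with blocks $N_{11}=\cP_{11}$, $N_{22}=BL\cP_{22}L^{\rT}B^{\rT}=B\Re\bE_{\tau}(\eta\eta^{\rT})B^{\rT}$ (using $\eta=L\xi$ from (\ref{Yeta})) and $N_{12}+N_{21}=G$. Hence Lemma~\ref{lem:NNN} yields, for every $s>0$,
\[
\pm G\preccurlyeq s\,\cP_{11}+\tfrac{1}{s}\,B\Re\bE_{\tau}(\eta\eta^{\rT})B^{\rT}.
\]

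The key device is a self-referential rewriting of the ALE for $\Delta$ that absorbs a multiple of $\Delta$ into the shift of $A_{\tau}$. From (\ref{varsig}) one has $\tfrac{1}{\varsigma}=\tfrac{1}{\tau}+\tfrac{1}{w}$ and $\tfrac{1}{\theta}=\tfrac{1}{\tau}-\tfrac{1}{m}$, so $A_{\varsigma}=A_{\tau}-\tfrac{1}{2w}I_n$ and $A_{\theta}=A_{\tau}+\tfrac{1}{2m}I_n$, and the ALE for $\Delta$ is equivalent to $\Delta=\bL(A_{\theta},G-\tfrac{1}{m}\Delta)$ and to $\Delta=\bL(A_{\varsigma},G+\tfrac{1}{w}\Delta)$, all three shifted plant matrices being Hurwitz under the admissibility hypotheses so that $\bL$ is well defined. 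For the upper bound in (\ref{Pdiff}) I would take $s=\tfrac{1}{m}$ above and substitute $\cP_{11}=P_1+\Delta$; the $\Delta$-terms cancel \emph{exactly}, leaving $G-\tfrac{1}{m}\Delta\preccurlyeq\tfrac{1}{m}P_1+mB\Re\bE_{\tau}(\eta\eta^{\rT})B^{\rT}$, and monotonicity of $\bL(A_{\theta},\cdot)$ delivers the right inequality. Symmetrically, $s=\tfrac{1}{w}$ and the lower estimate give $G+\tfrac{1}{w}\Delta\succcurlyeq-\big(\tfrac{1}{w}P_1+wB\Re\bE_{\tau}(\eta\eta^{\rT})B^{\rT}\big)$, and the linearity $\bL(A_{\varsigma},-Z)=-\bL(A_{\varsigma},Z)$ together with monotonicity produces the left inequality.

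The crux I anticipate is precisely this cancellation: it succeeds only because the free parameter of Lemma~\ref{lem:NNN} is tuned to the reciprocal of the corresponding shift ($s=1/m$ against $+\tfrac{1}{2m}I_n$, and $s=1/w$ against $-\tfrac{1}{2w}I_n$), so that after writing $\cP_{11}=P_1+\Delta$ the implicit $\Delta$-dependence of the forcing term is annihilated rather than merely dominated. Keeping this two-parameter bookkeeping consistent with the definitions (\ref{varsig}) of $\varsigma$ and $\theta$, and checking that the shifted plant matrices remain Hurwitz, is where the care lies; the remaining manipulations reduce to the linearity and monotonicity of $\bL$.
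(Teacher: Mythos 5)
Your proposal is correct and takes essentially the same route as the paper's own proof: the same perturbed ALE for $\Delta=\cP_{11}-P_1$ (your $G$ is exactly the paper's $\Ups$ in (\ref{Ups})), Lemma~\ref{lem:NNN} applied to the same matrix (your congruence $T\cP T^{\rT}$ with $T=\blockdiag(I_n,BL)$ coincides with the paper's $\Re\bE_{\tau}(\zeta\zeta^{\rT})$ for $\zeta:={\scriptsize\begin{bmatrix}X\\ B\eta\end{bmatrix}}$), and the same tuning of the free parameter so that the $\Delta$-terms are absorbed into the shifted matrices $A_{\varsigma}$ and $A_{\theta}$. The only differences are presentational: you read the perturbed ALE off the top-left block of (\ref{cPALE}) instead of re-deriving it from the Heisenberg ODE (\ref{Xdot}) by discounted averaging, and you phrase the final comparison as monotonicity of $\bL$ applied to a fixed-point identity rather than as the Lyapunov inequality (\ref{LMI1}).
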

%%%%%%%%%%%%%%%%%%%%%%%%%%%%%%%%%%%%%%%%%%%%%%%%%%%%%%%%%%%%%%%%%%%%%%%%%%%%%%%%%%%%%%%%%%%%%%%%%%%%%%%%%%%%%%%%%%%%%%%%%
\proof
In view of the inequalities in (\ref{varsig}), the $\theta$-admissibility of the observer ensures that all three matrices $A_{\varsigma}$, $A_{\tau}$ and $A_{\theta}$ are Hurwitz.
Now, from the ODE (\ref{Xdot}), it follows that
$
    (XX^{\rT})^{^\centerdot}
    =
    AXX^{\rT} + XX^{\rT}A^{\rT} + B\eta X^{\rT} + X \eta^{\rT}B^{\rT}
$.
Application of the discounted averaging operator $\bE_{\tau}$ to the latter ODE and the integration by parts on its left-hand side lead to
$
    \tfrac{1}{\tau}(\cP_{11} - \Sigma_1) = A\cP_{11} + \cP_{11} A^{\rT} + \Ups
$,
and hence,
\begin{align}
\nonumber
    A_{\tau}\cP_{11}  &+ \cP_{11}A_{\tau}^{\rT} + \tfrac{1}{\tau}\Sigma_1  + \Ups\\
\label{cPZ}
     & = A_{\tau}(\cP_{11}-P_1) + (\cP_{11}-P_1)A_{\tau}^{\rT}   + \Ups = 0.
\end{align}
Here, the term
\begin{equation}
\label{Ups}
    \Ups :=
    \Re \bE_{\tau}(B\eta X^{\rT} + X \eta^{\rT}B^{\rT})
\end{equation}
originates from the plant-observer coupling and plays the role of a perturbation to the ALE $A_{\tau}
    P_1
    +
    P_1
    A_{\tau}^{\rT} + \tfrac{1}{\tau}\Sigma_1= 0
$ in (\ref{PALE1}). By %using positive semi-definiteness of quantum covariance matrices and
applying the inequalities (\ref{NNN0}) of Lemma~\ref{lem:NNN} to the matrix
$
    N:= \Re\bE_{\tau} (\zeta\zeta^{\rT})= {\scriptsize\begin{bmatrix}\cP_{11} & \Re \bE_{\tau}(X \eta^{\rT})B^{\rT}\\ B\Re \bE_{\tau}(\eta X^{\rT}) & B\Re \bE_{\tau}(\eta \eta^{\rT})B^{\rT}\end{bmatrix}}\succcurlyeq 0
$
of the real parts of the second-order moments of an auxiliary vector $\zeta:= {\scriptsize\begin{bmatrix}X\\ B\eta\end{bmatrix}}$, it follows that  the matrix $\Ups$ in (\ref{Ups}) satisfies
\begin{equation}
\label{ZZ}
    \Ups\preccurlyeq \tfrac{1}{w} \cP_{11}+ w B\Re \bE_{\tau}(\eta\eta^{\rT})B^{\rT}\succcurlyeq -\Ups
\end{equation}
for any $w>0$.
Substitution of the second inequality from (\ref{ZZ}) into (\ref{cPZ}) leads to
\begin{align}
\nonumber
0  \succcurlyeq &
A_{\tau}(\cP_{11}-P_1) + (\cP_{11}-P_1)A_{\tau}^{\rT}\\
\nonumber
& -
\tfrac{1}{w} \cP_{11}- w B\Re \bE_{\tau}(\eta\eta^{\rT})B^{\rT}\\
\nonumber
=&
A_{\varsigma}(\cP_{11}-P_1) + (\cP_{11}-P_1)A_{\varsigma}^{\rT}\\
\label{LMI1}
  & -
\tfrac{1}{w} P_1- w B\Re \bE_{\tau}(\eta\eta^{\rT})B^{\rT},
\end{align}
where use is also made of the relation $\tfrac{1}{\tau} + \tfrac{1}{w} = \tfrac{1}{\varsigma}$ which follows from the definition of $\varsigma$ in (\ref{varsig}) and implies that $A_{\tau}-\tfrac{1}{2w}I_n = A_{\varsigma}$ in view of (\ref{PALE1}). The Lyapunov inequality (\ref{LMI1}) leads to the lower bound for $\cP_{11}-P_1$ in (\ref{Pdiff}). The upper bound in (\ref{Pdiff}) is established in a similar fashion by combining the ALE (\ref{cPZ}) with the first inequality from (\ref{ZZ}), except that  the parameter $m:=w$ has to satisfy $m>\tau$ in order to ensure that $\theta>0$ in (\ref{varsig}), thus making the matrix $A_{\tau}+\tfrac{1}{2m}I_n = A_{\theta}$ Hurwitz.
\endproof
%%%%%%%%%%%%%%%%%%%%%%%%%%%%%%%%%%%%%%%%%%%%%%%%%%%%%%%%%%%%%%%%%%%%%%%%%%%%%%%%%%%%%%%%%%%%%%%%%%%%%%%%%%%%%%%%%%%%%%%%%

The parameters $w$ and $m$ in Lemma~\ref{lem:LMI} can be varied in order to tighten up the bounds (\ref{Pdiff}), similarly to the proof of (\ref{NNN}) of Lemma~\ref{lem:NNN}. Indeed, suppose the matrix  $B\Re \bE_{\tau}(\eta\eta^{\rT})B^{\rT}$ is small in comparison with $P_1$ in terms of the dimensionless quantity
\begin{equation}
\label{small}
    \kappa:=  \tau \sqrt{\br(P_1^{-1}B\Re \bE_{\tau}(\eta\eta^{\rT})B^{\rT})},
\end{equation}
provided $P_1\succ 0$. The latter condition is fulfilled, for example, if  $\Sigma_1\succ 0$ in (\ref{Sig}).  Then, by letting $w=m=\tfrac{\tau}{\kappa}$ in   (\ref{varsig}), the corresponding $\varsigma = \tfrac{\tau}{1+\kappa}$ and $\theta = \tfrac{\tau}{1-\kappa}$ become close to $\tau$ for small values of $\kappa$, and the bounds (\ref{Pdiff}) behave asymptotically as
\begin{equation}
\label{iter}
    \pm(\cP_{11}-P_1)\precsim 2\tfrac{\kappa}{\tau}\bL(A_{\tau}, P_1)\preccurlyeq 2\kappa \br(P_1\Sigma_1^{-1})P_1.
\end{equation}
The second inequality in (\ref{iter}) is obtained by applying (\ref{LL}) to (\ref{PALE1}). Therefore, Lemma~\ref{lem:LMI} guarantees that the deviation $\cP_{11}-P_1$ is small in comparison with $P_1$ (that is, $\br(\cP_{11}P_1^{-1}-I_n)\ll 1$) and the back-action effect is negligible, if the second moments of the observer output $\eta$ are small enough in the sense that the parameter $\kappa$ in (\ref{small}) satisfies $\kappa \max(1,\br(P_1\Sigma_1^{-1}))\ll 1$.

Note that Lemma~\ref{lem:LMI} does not employ the relation (\ref{Yeta}) between the observer output $\eta$ and the observer variables $\xi$. We will therefore  provide a more accurate bound for the deviation $\cP_{11}-P_1$, which takes into account the whole plant-observer dynamics (\ref{Xdot})--(\ref{Yeta}), including the fact that $\xi$ is driven by the plant output $Y$. The formulation of the following theorem employs auxiliary matrices
\begin{align}
\label{D1}
    D_1
     & :=
    \diag(A_{\tau} \op A_{\tau}, \alpha_{\tau} \op \alpha_{\tau}),\\
\label{D2}
    D_2
     & := \diag(A_{\tau} \op \alpha_{\tau}, \alpha_{\tau} \op A_{\tau}),\\
\label{E1}
    E_1
     &:=
{\scriptsize\begin{bmatrix}
    I_n\ox (BL) & (BL)\ox I_n\\
    (\beta L^{\rT})\ox I_{\nu} & I_{\nu}\ox (\beta L^{\rT})
\end{bmatrix}},\\
\label{E2}
    E_2
     & :=
   {\scriptsize\begin{bmatrix}
       I_n\ox (\beta L^{\rT}) & (BL)\ox I_{\nu}\\
       (\beta L^{\rT})\ox I_n& I_{\nu}\ox (BL)
   \end{bmatrix}},
\end{align}
where $N\op Q:= N\ox I + I\ox Q$ is the Kronecker sum of matrices.
The matrices $D_1$ and $D_2$ are associated with the $L$-independent diagonal blocks of the matrix $\cA_{\tau}$ from (\ref{cAtau}), while $E_1$ and $E_2$ depend linearly on the coupling matrix $L$ and  are associated with the $L$-dependent off-diagonal part of $\cA_{\tau}$.

%%%%%%%%%%%%%%%%%%%%%%%%%%%%%%%%%%%%%%%%%%%%%%%%%%%%%%%%%%%%%%%%%%%%%%%%%%%%%%%%%%%%%%%%%%%%%%%%%%%%%%%%%%%%%%%%%%%%%%
\begin{thm}
\label{th:small}
Suppose the observer is $\tau$-admissible, and both matrices $A_{\tau}$ and $\alpha_{\tau}$ in (\ref{PALE1}) and (\ref{PALE2}) are also Hurwitz. Furthermore, let the plant-observer system have the block-diagonal initial covariance condition (\ref{Sig}), and suppose the matrices
\begin{equation}
\label{Delta}
    \Delta_1 := D_1^{-1}E_1,
    \qquad
    \Delta_2:= D_2^{-1}E_2,
\end{equation}
defined in terms of (\ref{D1})--(\ref{E2}), satisfy the condition
\begin{equation}
\label{eps}
  \eps:= \|\Delta_1\|_{\infty} \|\Delta_2\|_{\infty}< 1.
\end{equation}
Then  the Frobenius norm  of the deviation of the matrix $\cP$ in (\ref{cP}) from its value $\cP_*$ for uncoupled plant and observer in (\ref{cP*}) admits upper bounds
\begin{align}
\label{PP1}
    \|\cP_{11}-P_1\|_2
    & \<
    \tfrac{\eps }{1-\eps}
    \|\cP_*\|_2,\\
\label{PP2}
    \|\cP-\cP_*\|_2
    &
 \<
\tfrac{\sqrt{1+\|\Delta_1\|_{\infty}^2}}{1-\eps}
\|\Delta_2\|_{\infty}\|\cP_*\|_2.
\end{align}
\end{thm}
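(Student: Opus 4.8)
The plan is to turn the block algebraic Lyapunov equation (ALE) for $\cP$ into a finite-dimensional linear system by vectorization, and then close it with a small-gain (Neumann series) argument controlled by $\eps$.

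First I would start from the ALE $\cA_{\tau}\cP + \cP \cA_{\tau}^{\rT} + \tfrac{1}{\tau}\Sigma = 0$ equivalent to (\ref{cPALE}), partition it into four blocks according to (\ref{cP}) and (\ref{cAtau}), and subtract the uncoupled ALEs (\ref{PALE1}) and (\ref{PALE2}) from the $(1,1)$ and $(2,2)$ blocks. Because $\Sigma$ is block diagonal by (\ref{Sig}), the diagonal-block equations become \emph{homogeneous} Lyapunov equations in the deviations $\cP_{11}-P_1$ and $\cP_{22}-P_2$ whose right-hand sides are driven only by the off-diagonal blocks $\cP_{12},\cP_{21}$ through the coupling $L$; conversely, the $(1,2)$ and $(2,1)$ blocks are Sylvester equations for $\cP_{12},\cP_{21}$ driven by the \emph{full} diagonal blocks $\cP_{11},\cP_{22}$.

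Next I would vectorize all four equations using $\col(FGH)=(H^{\rT}\ox F)\col G$ and the Kronecker-sum form of the Lyapunov and Sylvester operators. Writing $u:=\col(\cP_{11}-P_1)$ stacked over $\col(\cP_{22}-P_2)$, $v:=\col(\cP_{21})$ stacked over $\col(\cP_{12})$, and $u_*:=\col(P_1)$ stacked over $\col(P_2)$, the stacking order is arranged so that the diagonal-block equations read $D_1 u = -E_1 v$ and the off-diagonal equations read $D_2 v = -E_2(u+u_*)$, with $D_1,D_2,E_1,E_2$ exactly the matrices (\ref{D1})--(\ref{E2}); the Hurwitz property of $A_{\tau},\alpha_{\tau}$ (hence of their Kronecker sums) guarantees that $D_1,D_2$ are nonsingular. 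Eliminating $u=-\Delta_1 v$ via (\ref{Delta}) then yields the fixed-point relation $(I-\Delta_2\Delta_1)v=-\Delta_2 u_*$.

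Finally I would close the estimate with the small-gain bound $\|\Delta_2\Delta_1\|_{\infty}\<\|\Delta_1\|_{\infty}\|\Delta_2\|_{\infty}=\eps<1$, so that $I-\Delta_2\Delta_1$ is invertible with $\|(I-\Delta_2\Delta_1)^{-1}\|_{\infty}\<\tfrac{1}{1-\eps}$ by the Neumann series. This gives $\|v\|_2\<\tfrac{\|\Delta_2\|_{\infty}}{1-\eps}\|u_*\|_2$ and then $\|u\|_2\<\|\Delta_1\|_{\infty}\|v\|_2\<\tfrac{\eps}{1-\eps}\|u_*\|_2$. Since $\cP_*$ is block diagonal, $\|u_*\|_2=\|\cP_*\|_2$, and since $\col$ is a Frobenius isometry, $\|\cP_{11}-P_1\|_2\<\|u\|_2$ delivers (\ref{PP1}), while $\|\cP-\cP_*\|_2^2=\|u\|_2^2+\|v\|_2^2\<(1+\|\Delta_1\|_{\infty}^2)\|v\|_2^2$ delivers (\ref{PP2}) after substituting the bound on $\|v\|_2$. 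I expect the main obstacle to be purely organizational: arranging the Kronecker factors and the stacking orders of $u$ and $v$ so that the vectorized coefficients coincide precisely with the defined $D_1,D_2,E_1,E_2$, and recognizing that the off-diagonal equations are driven by the full blocks $\cP_{11}=P_1+(\cP_{11}-P_1)$ and $\cP_{22}=P_2+(\cP_{22}-P_2)$ rather than by the deviations alone, which is exactly what produces the inhomogeneous term $-\Delta_2 u_*$ and the factor $1/(1-\eps)$.
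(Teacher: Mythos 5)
Your proposal is correct and follows essentially the same route as the paper's proof: vectorize the block ALE, identify the coefficient blocks with $D_1,E_1,D_2,E_2$ from (\ref{D1})--(\ref{E2}), eliminate one group of unknowns via (\ref{Delta}), and close with a Neumann-series/small-gain bound under (\ref{eps}), finishing with the orthogonal (block-diagonal versus off-diagonal) decomposition of $\cP-\cP_*$. The only difference is organizational: you subtract the uncoupled ALEs first and eliminate the diagonal deviations, so your contraction is $\Delta_2\Delta_1$ acting on the off-diagonal vector, whereas the paper solves for the off-diagonal blocks first and works with $\Delta_1\Delta_2$ acting on the full diagonal vector (then applies a triangle inequality); both give identical bounds (\ref{PP1}) and (\ref{PP2}).
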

%%%%%%%%%%%%%%%%%%%%%%%%%%%%%%%%%%%%%%%%%%%%%%%%%%%%%%%%%%%%%%%%%%%%%%%%%%%%%%%%%%%%%%%%%%%%%%%%%%%%%%%%%%%%%%%%%%%%%%
\proof
Despite the symmetry of the matrix $\cP$, we will use its full (rather than half-) vectorization $\vect(\cP)\in \mR^{(n+\nu)^2}$ \cite{M_1988}. For brevity, the vectorization of a matrix will be written as $\vec{(\cdot)}$   throughout the proof. The vector $\vec{\cP}$ can be obtained by appropriately  permutating the entries of the vector ${\scriptsize\begin{bmatrix}
    \vec{\cP}_{11}\\
    \vec{\cP}_{21}\\
    \vec{\cP}_{12}\\
    \vec{\cP}_{22}
\end{bmatrix}}$. The latter satisfies the following vectorized form of the ALE (\ref{cPALE}) in view of (\ref{cAtau}):
\begin{align}
\label{cPALEvec}
{\scriptsize\begin{bmatrix}
    A_{\tau} \op A_{\tau} & I_n\ox (BL) & (BL)\ox I_n & 0\\
    I_n\ox (\beta L^{\rT}) & A_{\tau} \op \alpha_{\tau} & 0 & (BL)\ox I_{\nu}\\
    (\beta L^{\rT})\ox I_n& 0 & \alpha_{\tau} \op A_{\tau} & I_{\nu}\ox (BL)\\
    0 & (\beta L^{\rT})\ox I_{\nu} & I_{\nu}\ox (\beta L^{\rT}) & \alpha_{\tau} \op \alpha_{\tau}
\end{bmatrix}}
{\scriptsize\begin{bmatrix}
    \vec{\cP}_{11}\\
    \vec{\cP}_{21}\\
    \vec{\cP}_{12}\\
    \vec{\cP}_{22}
\end{bmatrix}}
 =
-\tfrac{1}{\tau}
{\scriptsize\begin{bmatrix}
    \vec{\Sigma}_1\\
    0\\
    0\\
    \vec{\Sigma}_2
\end{bmatrix}}.
\end{align}
The sparsity of the right-hand side of (\ref{cPALEvec}) results from the block-diagonal structure of the matrix $\Sigma$ in (\ref{Sig}) and splits the set of linear equations into the  non-homogeneous and homogeneous parts
\begin{align}
\label{cPALEvec1}
D_1
 {\scriptsize\begin{bmatrix}
    \vec{\cP}_{11}\\
    \vec{\cP}_{22}
\end{bmatrix}}
+
E_1
{\scriptsize\begin{bmatrix}
    \vec{\cP}_{21}\\
    \vec{\cP}_{12}
\end{bmatrix}}
 & =
-\tfrac{1}{\tau}
{\scriptsize\begin{bmatrix}
    \vec{\Sigma}_1\\
    \vec{\Sigma}_2
\end{bmatrix}},\\
\label{cPALEvec2}
D_2
 {\scriptsize\begin{bmatrix}
    \vec{\cP}_{21}\\
    \vec{\cP}_{12}
\end{bmatrix}}
+
E_2
{\scriptsize\begin{bmatrix}
    \vec{\cP}_{11}\\
    \vec{\cP}_{22}
\end{bmatrix}} & = 0,
\end{align}
where (\ref{D1})--(\ref{E2}) are used. With the matrices $A_{\tau}$ and $\alpha_{\tau}$ being Hurwitz, both $D_1$ and $D_2$ are nonsingular. Hence,
by solving (\ref{cPALEvec2}) for $ {\scriptsize\begin{bmatrix}
    \vec{\cP}_{21}\\
    \vec{\cP}_{12}
\end{bmatrix}}
$ and substituting the solution into (\ref{cPALEvec1}), it follows that
\begin{align}
\label{cPALEsol2}
 {\scriptsize\begin{bmatrix}
    \vec{\cP}_{21}\\
    \vec{\cP}_{12}
\end{bmatrix}}
& =
-\Delta_2
{\scriptsize\begin{bmatrix}
    \vec{\cP}_{11}\\
    \vec{\cP}_{22}
\end{bmatrix}},\\
\nonumber
 {\scriptsize\begin{bmatrix}
    \vec{\cP}_{11}\\
    \vec{\cP}_{22}
\end{bmatrix}}
& =
-\tfrac{1}{\tau}(D_1 - E_1 \Delta_2)^{-1}
{\scriptsize\begin{bmatrix}
    \vec{\Sigma}_1\\
    \vec{\Sigma}_2
\end{bmatrix}}\\
\nonumber
 & =
-\tfrac{1}{\tau}(I_{n^2+\nu^2} - \Delta_1\Delta_2)^{-1}
D_1^{-1}
{\scriptsize\begin{bmatrix}
    \vec{\Sigma}_1\\
    \vec{\Sigma}_2
\end{bmatrix}}\\
\label{cPALEsol1}
 & =
 (I_{n^2+\nu^2} - \Delta_1\Delta_2)^{-1}
{\scriptsize\begin{bmatrix}
    \vec{P_1}\\
    \vec{P_2}
\end{bmatrix}}.
\end{align}
Here, use is  made of  (\ref{Delta}) and (\ref{eps}) together with  the vectorized  representations $
    \vec{P_1} = -\tfrac{1}{\tau} (A_{\tau}\op A_{\tau})^{-1} \vec{\Sigma}_1$ and
$
    \vec{P_2} = -\tfrac{1}{\tau} (\alpha_{\tau}\op \alpha_{\tau})^{-1} \vec{\Sigma}_2
$
 for the solutions of the ALEs (\ref{PALE1}) and (\ref{PALE2}).
Since the matrix $\Delta:=\Delta_1\Delta_2$ is a contraction, with $\|\Delta\|_{\infty}\< \eps$, application of the perturbation  expansion for the matrix inverse \cite{H_2008} to (\ref{cPALEsol1}) yields
\begin{align}
\nonumber
    \left|
        {\scriptsize\begin{bmatrix}
    \vec{\cP}_{11}-\vec{P_1}\\
    \vec{\cP}_{22}-\vec{P_2}
\end{bmatrix}}
\right|
     & =
    \left|
        \Delta
        (I_{n^2+\nu^2} - \Delta)^{-1}
        {\scriptsize\begin{bmatrix}
    \vec{P_1}\\
    \vec{P_2}
    \end{bmatrix}}
    \right|
      =
    \left|
        \sum_{k=1}^{+\infty}\Delta^k
        {\scriptsize\begin{bmatrix}
    \vec{P_1}\\
    \vec{P_2}
    \end{bmatrix}}
        \right|\\
\label{PPP}
     & \<
        \sum_{k=1}^{+\infty}
        \eps^k
    \left|
        {\scriptsize\begin{bmatrix}
    \vec{P_1}\\
    \vec{P_2}
    \end{bmatrix}}
        \right|
     =
    \tfrac{\eps}{1-\eps}|\vec{\cP}_*|.
\end{align}
This implies (\ref{PP1}) since $|\vec{\cP}_*| = \sqrt{\sum_{k=1}^2\|P_k\|_2^2}=\|\cP_*\|_2$ in view of the preservation of the Frobenius norm under the vectorization. 
In order to prove (\ref{PP2}), we note that the triangle inequality  and (\ref{PPP}) lead to
\begin{equation}
\label{tri}
    \left|
            {\scriptsize\begin{bmatrix}
        \vec{\cP}_{11}\\
        \vec{\cP}_{22}
        \end{bmatrix}}
    \right|
    \<
    \left|
        {\scriptsize\begin{bmatrix}
            \vec{P_1}\\
            \vec{P_2}
        \end{bmatrix}}
    \right|
    +
    \left|
        {\scriptsize\begin{bmatrix}
            \vec{\cP}_{11}-\vec{P_1}\\
            \vec{\cP}_{22}-\vec{P_2}
        \end{bmatrix}}
    \right|
    \<
    \tfrac{\|\cP_*\|_2}{1-\eps},
\end{equation}
where the identity $    \left|
        {\scriptsize\begin{bmatrix}
            \vec{P_1}\\
            \vec{P_2}
        \end{bmatrix}}
    \right|
    =
    \|\cP_*\|_2
$ is used again. A combination of (\ref{tri}) with (\ref{cPALEsol2}) implies that
\begin{equation}
\label{off}
\left|
        {\scriptsize\begin{bmatrix}
    \vec{\cP}_{21}\\
    \vec{\cP}_{12}
    \end{bmatrix}}
\right| 
\< \|\Delta_2\|_{\infty}\tfrac{\|\cP_*\|_2}{1-\eps}.
\end{equation}
By using the orthogonal decomposition $\cP-\cP_*=\diag_{k=1,2}(\cP_{kk}-P_k)+{\scriptsize\begin{bmatrix}0 & \cP_{12}\\ \cP_{21} & 0\end{bmatrix}}$ together with (\ref{PPP}), (\ref{off}),  it follows that
$
    \|\cP-\cP_*\|_2
      =
    \sqrt{
    \left|
        {\scriptsize\begin{bmatrix}
    \vec{\cP}_{11}-\vec{P_1}\\
    \vec{\cP}_{22}-\vec{P_2}
\end{bmatrix}}
\right|^2
+
    \left|
        {\scriptsize\begin{bmatrix}
    \vec{\cP}_{21}\\
    \vec{\cP}_{12}
\end{bmatrix}}
\right|^2}
  \<
\tfrac{\sqrt{\eps^2 + \|\Delta_2\|_{\infty}^2}}{1-\eps}\|\cP_*\|_2
$,
which establishes (\ref{PP2}) in view of (\ref{eps}).
\endproof
%%%%%%%%%%%%%%%%%%%%%%%%%%%%%%%%%%%%%%%%%%%%%%%%%%%%%%%%%%%%%%%%%%%%%%%%%%%%%%%%%%%%%%%%%%%%%%%%%%%%%%%%%%%%%%%%%%%%%%%%%

Since Theorem~\ref{th:small} employs the standard (rather than weighted) Frobenius norm $\|\cdot\|_2$, it  would be physically more meaningful to apply the theorem to covariance dynamics of the rescaled plant and observer variables (\ref{scale}). Alternatively, Theorem~\ref{th:small} can be reformulated in terms of an appropriately weighted version of the norm. In the latter case, $\|\cP-\cP_*\|_2$ is replaced with $\|S(\cP-\cP_*)S\|_2$, where $S:= \diag(\sqrt{K}, \sqrt{M})$. We have used the standard Frobenius norm in (\ref{PP1}) and (\ref{PP2}) merely for simplicity of formulation.

There is a parallel between the proof of Theorem~\ref{th:small} and the arguments underlying the small-gain theorem (see, for example, \cite{DJ_2006} and references therein). Similar bounds for the observer back-action can be obtained in the frequency domain as outlined below. From (\ref{Xdot}), (\ref{xidot}) and (\ref{Yeta}), it follows that the Laplace transforms $\wt{X}$ and $\wt{\xi}$ of the plant and observer vectors $X$ and $\xi$, defined according to (\ref{wtX}), are related by
\begin{align}
\label{tX}
    \wt{X}(s) &= F(s) (X_0 + BL \wt{\xi}(s)),\\
\label{txi}
    \wt{\xi}(s) &= \Phi(s) (\xi_0 + \beta L^{\rT} \wt{X}(s)),
\end{align}
see Fig.~\ref{fig:small_gain}.
%==============================================================================
\begin{figure}[htpb]
\centering
\unitlength=0.9mm
\linethickness{0.2pt}
\begin{picture}(50.00,40.00)
    \put(20,30){\framebox(10,10)[cc]{{\small$F(s)$}}}
    \put(20,10){\framebox(10,10)[cc]{{\small$\Phi(s)$}}}
    \put(37,21.5){\framebox(6,6)[cc]{{\small$BL$}}}
    \put(40,27.5){\vector(0,1){4.5}}
    \put(40,15){\vector(0,1){6.5}}
    \put(30,15){\line(1,0){10}}

    \put(20,35){\line(-1,0){10}}
    \put(10,35){\vector(0,-1){7}}
    \put(10,22){\vector(0,-1){4}}
    \put(10,15){\circle{6}}
    \put(10,15){\makebox(0,0)[cc]{$+$}}
    \put(13,15){\vector(1,0){7}}
    \put(-3,15){\vector(1,0){10}}
    \put(-6,15){\makebox(0,0)[cc]{{\small$\xi_0$}}}
\put(42,15){\makebox(0,0)[lc]{{\small$\wt{\xi}(s)$}}}

\put(7,22){\framebox(6,6)[cc]{{\small$\beta\! L^{\rT}$}}}
    \put(37,35){\vector(-1,0){7}}
    \put(40,35){\circle{6}}
    \put(40,35){\makebox(0,0)[cc]{$+$}}
    \put(53,35){\vector(-1,0){10}}
    \put(55,35){\makebox(0,0)[lc]{{\small$X_0$}}}
    \put(8,35){\makebox(0,0)[rc]{{\small$\wt{X}(s)$}}}
\end{picture}\vskip-11mm
\caption{A block diagram of Eqs. (\ref{tX}) and (\ref{txi}), with the initial values $X_0$ and  $\xi_0$ shown as fictitious external inputs.  A small-gain-theorem argument applies when the coupling matrix $L$ is relatively small. 
}
\label{fig:small_gain}
\end{figure}
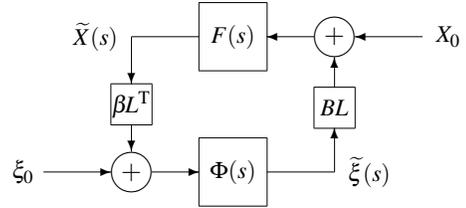
%==============================================================================
Here, $F$ and $\Phi$ are the plant and observer transfer functions, which are given by
\begin{equation}
\label{FPhi}
    F(s)
    := (sI_n-A)^{-1},
    \qquad
  \Phi(s):= (sI_{\nu}-\alpha)^{-1}
\end{equation}
in accordance with (\ref{F}) and do not depend on the coupling matrix $L$.
It follows from (\ref{tX})--(\ref{FPhi}) that
the Laplace transform of the combined vector $\cX$ of the plant and observer variables in (\ref{H}) is related to its initial value $\cX_0$ by
$
\wt{\cX}(s)
 =
 {\scriptsize\begin{bmatrix}
    \wt{X}(s)\\
    \wt{\xi}(s)
 \end{bmatrix}} = G(s) \cX_0
$
through the transfer function
\begin{equation}
\label{G}
 G(s)
 :=
 {\scriptsize\begin{bmatrix}
    I_n & -F(s)BL\\
    -\Phi(s)\beta L^{\rT} & I_{\nu}
 \end{bmatrix}}^{-1}
 {\scriptsize\begin{bmatrix}
    F(s)& 0\\
    0 & \Phi(s)
 \end{bmatrix}}.
\end{equation}
By applying (\ref{PLap}) to the plant-observer system, the matrix $\cP$ in (\ref{cP}) is represented as
\begin{equation}
\label{cPLap}
    \cP =
    \tfrac{1}{2\pi\tau}
    \Im
    \int_{\Re s=\tfrac{1}{2\tau} }
    G(s)
    (\Sigma +i\Theta)
    G(s)^*
    \rd s.
\end{equation}
The function $G$ in  (\ref{G}) differs from $\diag(F, \Phi)$ by the factor
$$
   {\scriptsize\begin{bmatrix}
    I_n & -T_1(s)\\
    -T_2(s) & I_{\nu}
 \end{bmatrix}^{\!\!-1}
 \!\!\!\!\!\!\!\!=\!\!\!\!
 \begin{bmatrix}
    (I_n-T_1(s)T_2(s))^{-1} & (I_n-T_1(s)T_2(s))^{-1}T_1(s)\\
    (I_{\nu}-T_2(s)T_1(s))^{-1}T_2(s) & (I_{\nu}-T_2(s)T_1(s))^{-1}
 \end{bmatrix}}
$$
which is close to $I_{n+\nu}$ for the relevant values of $s\in \mC$ (with $\Re s = \tfrac{1}{2\tau}$), provided the coupling matrix $L$ is small in the sense of the quantities
\begin{equation}
\label{gg}
    \gamma_k:=
    \sup_{\omega\in \mR}
    \|
        T_k(\tfrac{1}{2\tau}+i\omega)
    \|_{\infty},
    \qquad
    k = 1,2.
\end{equation}
The latter are ``discounted'' versions of the $H_{\infty}$  Hardy space norm  for
the transfer functions
$
    T_1(s):= F(s)BL$ and $
    T_2(s):= \Phi(s) \beta L^{\rT}
$
which depend linearly on $L$.
Therefore,
the frequency-domain representation  (\ref{cPLap}) can be used together with the parameters (\ref{gg}) in order to obtain bounds for the deviation of $\cP$ from the matrix
$
    \cP_*
    =
    \tfrac{1}{2\pi\tau}
    \Im
    \int_{\Re s = \tfrac{1}{2\tau}}
 \diag(F(s)\Gamma_1F(s)^*, \Phi(s)\Gamma_2\Phi(s)^*)
     \rd s
$
under the small-gain condition $\gamma_1\gamma_2<1$ for the loop in Fig.~\ref{fig:small_gain}.
Here, $\Gamma_k:= \Sigma_k +i\Theta_k$, with $k=1,2$, are the initial second moment matrices for the plant and observer variables, which, in accordance with (\ref{Sig}),  form the corresponding matrix for the closed-loop system:     $\Gamma
    :=
    \bE (\cX_0 \cX_0^\rT)
    =
    \diag_{k=1,2}(\Gamma_k)$.

%%%%%%%%%%%%%%%%%%%%%%%%%%%%%%%%%%%%%%%%%%%%%%%%%%%%%%%%%%%%%%%%%%%%%%%%%%%%%%%%%%%%%%%%%%%%%%%%%%%
\section{Discounted mean square optimal coherent quantum filtering problem}\label{sec:CQF}
%%%%%%%%%%%%%%%%%%%%%%%%%%%%%%%%%%%%%%%%%%%%%%%%%%%%%%%%%%%%%%%%%%%%%%%%%%%%%%%%%%%%%%%%%%%%%%%%%%%
%%%%%%%%%%%%%%%%%%%%%%%%%%%%%%%%%%%%%%%%%%%%%%%%%%%%%%%%%%%%%%%%%%%%%%%%%%%%%%%%

For what follows, let the plant energy matrix $K$ be fixed and satisfy $K\succ 0$. The latter is sufficient (but not necessary) for the set of $\tau$-admissible observers to be nonempty for any given $\tau>0$. In particular, this set contains observers with $L=0$ and arbitrary $M\succ 0$ (in which case $R=\diag(K,M)\succ 0$), or more generally, observers satisfying (\ref{Rpos}). 
Consider a CQF problem
\begin{equation}
\label{cZ}
    \cZ
    :=
    \bE_{\tau} Z
        \longrightarrow \min
\end{equation}
of minimizing  a quadratic cost functional
over the plant-observer coupling matrix $L$  and the observer energy matrix $M$ subject to the constraint (\ref{taugood}).
Here, $\tau>0$ is a given effective time horizon for the discounted averaging (\ref{bEtau}) which is applied to the quantum process
\begin{equation}
    \label{ZE}
    Z
    :=
    E^{\rT}E
    +
    \lambda
    \eta^{\rT} \Pi \eta
    =
    \cX^{\rT}
    \cC^{\rT}\cC
    \cX.
\end{equation}
The latter is a time-varying self-adjoint operator on the plant-observer space $\cH$ which is defined in terms of
the vectors $\cX$, $\eta$ from  (\ref{H}), (\ref{Yeta}), and
\begin{equation}
\label{ES}
    E
    :=
    S_1X-S_2\xi
    =
    S\cX,
    \
    S
    :=
    {\scriptsize\begin{bmatrix}
        S_1 & -S_2
    \end{bmatrix}},
    \
    \cC:= {\scriptsize\begin{bmatrix} S_1 & -S_2\\ 0 & \sqrt{\lambda \Pi} L\end{bmatrix}}.
\end{equation}
Here,
$S_1\in \mR^{p\x n}$, $S_2\in \mR^{p\x \nu}$ and $\Pi \in \mS_n$ are given matrices, with $\Pi\succ 0$,  which, together with a given scalar parameter $\lambda>0$, determine the matrix $\cC\in \mR^{(p+n)\x (n+\nu)}$ (with the first block-row $S \in \mR^{p \x (n+\nu)}$) and its dependence on the coupling matrix $L$.
 The matrix  $S_1$ specifies linear combinations of the plant variables of interest which are to be approximated by given linear functions of the observer variables specified by the matrix $S_2$. Accordingly, the vector $E$ in (\ref{ES}) (consisting of $p$ time-varying self-adjoint operators on $\cH$)
is interpreted as an \emph{estimation error}. In addition to the discounted mean square $\bE_{\tau} (E^{\rT} E)$ of the estimation error,  the cost functional $\cZ$ in (\ref{cZ}) involves a quadratic penalty $\bE_{\tau}(\eta^{\rT} \Pi\eta)$ for the observer back-action on the covariance dynamics of the plant (see Lemma~\ref{lem:LMI}), with $\lambda$ being the relative weight of this penalty in $\cZ$. In fact, $\cZ$ is organised as the Lagrange function for a related CQF problem of minimizing the discounted mean square of the estimation error subject to an additional weighted mean square constraint on the plant-observer coupling:
\begin{equation}
\label{CQF2}
    \bE_{\tau} (E^{\rT} E)
    \longrightarrow
    \min,
    \qquad
    \bE_{\tau} (\eta^{\rT}\Pi\eta)\< r.
\end{equation}
In this formulation,  $\lambda$ plays the role of a Lagrange multiplier which is found so as to make the solution of (\ref{cZ}) saturate the constraint in (\ref{CQF2}) for a given threshold $r> 0$.
In a particular case  $S_2=0$, the CQF problem  (\ref{cZ})--(\ref{ES}) is a quantum mechanical analogue of the LQR problem \cite{AM_1989,KS_1972} in view of the analogy between the observer output $\eta$ and classical actuation signals discussed in Section~\ref{sec:QODE}.  The presence of the  quantum expectation  of a nonlinear function of system variables in (\ref{cZ}) and the optimization requirement make this setting different from the time-averaged approach of \cite{P_2014,PH_2015}.

Substitution of  (\ref{ZE}) into (\ref{cZ}) allows the cost functional to be  expressed in terms of the matrix $\cP$ from (\ref{cP}) as
\begin{equation}\label{ZP}
  \cZ = \bra \cC^{\rT}\cC, \bE_{\tau}(\cX\cX^{\rT})\ket = \bra \cC^{\rT}\cC, \cP\ket,
\end{equation}
where $\bra\cdot, \cdot\ket$ is the Frobenius inner product of matrices. Under the assumptions of Theorem~\ref{th:small},  a combination of (\ref{ZP}) with the Cauchy-Bunyakovsky-Schwarz inequality and the bound (\ref{PP2}) leads to
\begin{align}
\nonumber
    |\bE_{\tau}&(E^{\rT}E) - \bra S^{\rT}S, \cP_*\ket|
      = |\bra S^{\rT}S, \cP-\cP_*\ket|\\
\label{CEI}
    \< &
    \|S^{\rT}S\|_2\,
    \|\cP-\cP_*\|_2
    \<
    \|S^{\rT}S\|_2
    \tfrac{\sqrt{1+\|\Delta_1\|_{\infty}^2}}{1-\eps}\|\Delta_2\|_{\infty}\|\cP_*\|_2,
\end{align}
which relates the discounted mean square of the estimation error with the plant-observer coupling strength quantified by $\|\Delta_1\|_{\infty}$, $\|\Delta_2\|_{\infty}$ and $\eps$ from (\ref{Delta}) and (\ref{eps}).
Here, $S$ is
the first block-row of the matrix $\cC$ in (\ref{ES}), so that
$
    \|S^{\rT}S\|_2
    =
    \|SS^{\rT}\|_2
    =
    \sqrt{\Tr((\sum_{k=1}^2 S_kS_k^{\rT})^2)}
$, and $\bra S^{\rT}S, \cP_*\ket = \sum_{k=1}^2\Tr (S_k P_k S_k^{\rT})$ in view of the block-diagonal structure of the matrix $\cP_*$ in (\ref{cP*}). Therefore, the inequality (\ref{CEI}) implies that
\begin{align}
\nonumber
    \bE_{\tau}(E^{\rT}E)
     \> &     \sum_{k=1}^2\Tr (S_k P_k S_k^{\rT})\\
\label{lower}
       -&
    \sqrt{\Tr\Big(\Big(\sum_{k=1}^2 S_kS_k^{\rT}\Big)^2\Big)}
    \tfrac{\sqrt{1+\|\Delta_1\|_{\infty}^2}}{1-\eps}\|\Delta_2\|_{\infty}\|\cP_*\|_2,
\end{align}
which becomes an equality if $L=0$. The right-hand side of (\ref{lower}) provides a lower bound for the mean square of the estimation error. This bound depends on the coupling matrix $L$ only through $\|\Delta_1\|_{\infty}$, $\|\Delta_2\|_{\infty}$, $\eps$ and shows that $L$ has to be sufficiently large in order to make $\bE_{\tau}(E^{\rT}E)$ smaller than $\sum_{k=1}^2\Tr (S_k P_k S_k^{\rT})$ by a given amount. At the same time, the plant-observer coupling should be weak enough to avoid severe back-action of the observer on the plant. Therefore, the parameter $\lambda$ in the CQF problem (\ref{cZ})--(\ref{ES}) quantifies a compromise  between these  conflicting requirements (of minimizing the estimation error and reducing the back-action).

%%%%%%%%%%%%%%%%%%%%%%%%%%%%%%%%%%%%%%%%%%%%%%%%%%%%%%%%%%%%%%%%%%%%%%%%%%%%%%%%
\section{First-order    necessary conditions of optimality}\label{sec:opt}
%%%%%%%%%%%%%%%%%%%%%%%%%%%%%%%%%%%%%%%%%%%%%%%%%%%%%%%%%%%%%%%%%%%%%%%%%%%%%%%%
%%%%%%%%%%%%%%%%%%%%%%%%%%%%%%%%%%%%%%%%%%%%%%%%%%%%%%%%%%%%%%%%%%%%%%%%%%%%%%%%%

The following theorem provides first-order necessary conditions of optimality for the CQF problem (\ref{cZ})--(\ref{ES}).  Their formulation employs
the Hankelian
\begin{equation}
\label{cE}
    \cE
    :=
    {\scriptsize\begin{bmatrix}
        \cE_{11} & \cE_{12}\\
        \cE_{21} & \cE_{22}
    \end{bmatrix}}
    :=
    \cQ\cP,
\end{equation}
associated with the matrix $\cP$ from (\ref{cP}) and the observability Gramian $\cQ$ of $(\cA_{\tau}, \cC)$ which is a unique solution of the corresponding ALE:
\begin{align}
\label{cQALE}
    \cQ :=
        {\scriptsize\begin{bmatrix}
        \cQ_{11} & \cQ_{12}\\
        \cQ_{21} & \cQ_{22}
    \end{bmatrix}}
    =
    \bL(\cA_{\tau}^{\rT}, \cC^{\rT}\cC).
\end{align}
 The matrices $\cE$ and $\cQ$ are split into appropriately dimensioned blocks $(\cdot)_{jk}$ similarly to the matrix $\cP$ in (\ref{cP}), with $(\cdot)_{j \bullet}$ the $j$th block-row  and  $(\cdot)_{\bullet k}$  the $k$th block-column of the  matrices.

%%%%%%%%%%%%%%%%%%%%%%%%%%%%%%%%%%%%%%%%%%%%%%%%%%%%%%%%%%%%%%%%%%%%%%%%%%%%%%%
\begin{thm}
\label{th:stat}
Suppose the plant energy matrix satisfies $K\succ 0$, and the directly coupled observer is $\tau$-admissible in the sense of (\ref{taugood}). Then the observer is a stationary point of the CQF problem (\ref{cZ})--(\ref{ES}) if and only if the Hankelian $\cE$ in (\ref{cE}) and the controllability Gramian $\cP$ in (\ref{cP}) satisfy
\begin{align}
\label{dcZdL0}
    \Theta_1 \cE_{12}
    -\cE_{21}^{\rT}\Theta_2
    & =\tfrac{\lambda}{2} \Pi L \cP_{22},\\
\label{dcZdM0}
    \Theta_2 \cE_{22}-\cE_{22}^{\rT}\Theta_2& =0.
\end{align}
\end{thm}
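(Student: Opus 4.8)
The plan is to compute the Fr\'echet differential of the cost functional $\cZ$ with respect to the decision variables $L$ and $M$ and equate it to zero, which is necessary (and, at an interior point, also sufficient for stationarity) since the $\tau$-admissibility constraint (\ref{taugood}) is an \emph{open} condition, so that all small variations $\delta L\in\mR^{n\x\nu}$ and $\delta M\in\mS_{\nu}$ are admissible. Starting from the representation $\cZ = \bra \cC^\rT\cC, \cP\ket$ in (\ref{ZP}), I would apply the product rule
\begin{equation*}
    \delta\cZ
    =
    \bra \delta(\cC^\rT\cC), \cP\ket
    +
    \bra \cC^\rT\cC, \delta\cP\ket,
\end{equation*}
noting that $\cC^\rT\cC$ depends on $L$ only through the penalty block $\lambda L^\rT\Pi L$, whereas $\cP$ depends on both $L$ and $M$ through $\cA_{\tau}$.

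First I would handle the implicit dependence of $\cP$. Differentiating the ALE (\ref{cPALE}), and using that $\Sigma$ and the shift $-\tfrac{1}{2\tau}I_{n+\nu}$ are fixed (so $\delta\cA_{\tau}=\delta\cA$), gives the perturbed equation $\cA_{\tau}\,\delta\cP + \delta\cP\,\cA_{\tau}^\rT + \delta\cA\,\cP + \cP\,\delta\cA^\rT = 0$, whence $\delta\cP = \bL(\cA_{\tau}, \delta\cA\,\cP + \cP\,\delta\cA^\rT)$. The key step is to transfer this variation onto the observability Gramian through the self-adjointness of the Lyapunov operator, $\bra Y, \bL(\alpha,X)\ket = \bra\bL(\alpha^\rT, Y), X\ket$ for symmetric $X,Y$, which follows from (\ref{ILO}) by cyclic invariance of the trace. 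Taking $Y=\cC^\rT\cC$ and recalling $\bL(\cA_{\tau}^\rT,\cC^\rT\cC)=\cQ$ from (\ref{cQALE}) yields $\bra\cC^\rT\cC, \delta\cP\ket = \bra\cQ, \delta\cA\,\cP + \cP\,\delta\cA^\rT\ket = 2\bra\cE, \delta\cA\ket$, where the Hankelian $\cE=\cQ\cP$ of (\ref{cE}) emerges and the symmetry of $\cP,\cQ$ is used.

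It then remains to substitute the block forms of the variations. From (\ref{cA}) and (\ref{AB})--(\ref{ab}), only the off-diagonal blocks and the $(2,2)$ block of $\cA$ vary, giving
\begin{equation*}
    \delta\cA
    =
    {\scriptsize\begin{bmatrix}
        0 & 2\Theta_1\delta L\\
        2\Theta_2\delta L^\rT & 2\Theta_2\delta M
    \end{bmatrix}},
\end{equation*}
while $\delta(\cC^\rT\cC)=\lambda\,\diag(0,\, \delta L^\rT\Pi L + L^\rT\Pi\,\delta L)$ depends on $\delta L$ alone. Expanding $2\bra\cE,\delta\cA\ket$ block-wise and $\bra\delta(\cC^\rT\cC),\cP\ket$ using symmetry of $\Pi$ and $\cP_{22}$, and collecting the coefficients of the free matrix $\delta L$, gives $\delta\cZ\big|_{\delta L}=\bra 2\lambda\Pi L\cP_{22} - 4(\Theta_1\cE_{12}-\cE_{21}^\rT\Theta_2),\,\delta L\ket$; setting this to zero for all $\delta L$ produces (\ref{dcZdL0}). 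For the observer energy matrix the only contribution is $4\,\Tr(\cE_{22}^\rT\Theta_2\,\delta M)$, where the essential subtlety is that $\delta M$ is constrained to be \emph{symmetric}, so only the symmetric part of $\cE_{22}^\rT\Theta_2$ enters the optimality condition; its vanishing, together with $\Theta_2^\rT=-\Theta_2$, gives $\Theta_2\cE_{22}-\cE_{22}^\rT\Theta_2=0$, i.e.\ (\ref{dcZdM0}).

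The main obstacle I anticipate is the bookkeeping in the two middle steps: correctly using the adjoint of $\bL$ so that the variation of $\cP$ is re-expressed through $\cE=\cQ\cP$ rather than $\cP\cQ$ (both orderings appear before the trace is made cyclic), and tracking the antisymmetry of $\Theta_1,\Theta_2$ together with the symmetry restriction on $\delta M$, so that the multiplier $\tfrac{\lambda}{2}$ and the signs in (\ref{dcZdL0})--(\ref{dcZdM0}) come out exactly right.
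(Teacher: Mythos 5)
Your proposal is correct and follows essentially the same route as the paper's proof: both rest on the Lyapunov-operator duality $\bL(\cA_{\tau},\cdot)^{\dagger}=\bL(\cA_{\tau}^{\rT},\cdot)$ to bring in the observability Gramian $\cQ$, arrive at $\delta\cZ = 2\bra \cE,\delta\cA\ket + 2\bra \cC\cP,\delta\cC\ket$ with the Hankelian $\cE=\cQ\cP$, and then expand blockwise while restricting $\delta M$ to symmetric matrices so that only $\bS(\Theta_2\cE_{22})$ must vanish. The only (immaterial) difference is the direction of the duality: you differentiate the ALE for $\cP$ and transfer the variation onto $\cQ$, whereas the paper first rewrites $\cZ=\tfrac{1}{\tau}\bra\cQ,\Sigma\ket$ and differentiates the ALE for $\cQ$ -- the two computations are adjoint forms of one another and produce identical terms.
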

%%%%%%%%%%%%%%%%%%%%%%%%%%%%%%%%%%%%%%%%%%%%%%%%%%%%%%%%%%%%%%%%%%%%%%%%%%%%%%%
\proof
By using (\ref{cPALE}) and the duality $\bL(\cA_{\tau}, \cdot)^{\dagger} = \bL(\cA_{\tau}^{\rT}, \cdot)$, it follows that the cost $\cZ$ in (\ref{ZP}) is representable in terms of the observability Gramian $\cQ$ from (\ref{cQALE}) as
\begin{equation}
\label{ZQ}
    \cZ
    =
    \tfrac{1}{\tau}
    \Bra
        \cC^{\rT}\cC,
        \bL(\cA_{\tau}, \Sigma)
    \Ket
    =
    \tfrac{1}{\tau}
    \Bra
        \bL(\cA_{\tau}^{\rT}, \cC^{\rT}\cC),
        \Sigma
    \Ket
    =
    \tfrac{1}{\tau}
    \bra \cQ, \Sigma\ket.
\end{equation}
Here, the adjoint $(\cdot)^{\dagger}$ of linear operators on matrices is in the sense of the Frobenius inner product.
With the matrix $\cA_{\tau}$ in (\ref{cAtau}) being Hurwitz due to the $\tau$-admissibility constraint (\ref{taugood}), the representation (\ref{ZQ}) shows that $\cZ$ inherits  a smooth dependence on  $L$ and $M$ from $\cQ$. The latter is a composite function $(L,M)\mapsto (\cA, \cC)\mapsto \cQ$ whose first variation is
\begin{equation}
\label{dcQ}
    \delta \cQ = \bL(\cA_{\tau}^{\rT},\, (\delta \cA)^{\rT} \cQ + \cQ \delta \cA + (\delta \cC)^{\rT}\cC + \cC^{\rT} \delta \cC),
\end{equation}
where use is made of the ALE in (\ref{cQALE}), and  the first variations of the matrices $\cA$ in (\ref{cA}) and $\cC$ in (\ref{ES}) with respect to $L$ and $M$ are
\begin{equation}
\label{dcAdcC}
    \delta\cA
    =
    2\Theta
    {\scriptsize\begin{bmatrix}
        0 & \delta L \\
        \delta L^{\rT} & \delta M
    \end{bmatrix}},
    \qquad
    \delta\cC
    =
    {\scriptsize\begin{bmatrix}
        0 & 0\\
        0 & \sqrt{\lambda \Pi}\delta L
    \end{bmatrix}}.
\end{equation}
By combining the duality argument above with (\ref{dcQ}) and (\ref{dcAdcC}), it follows that the first variation of $\cZ$ in (\ref{ZQ}) can be computed as
\begin{align}
\nonumber
    \delta\cZ
    =&
    \tfrac{1}{\tau}
    \Bra
        \bL(\cA_{\tau}^{\rT}, (\delta \cA)^{\rT} \cQ + \cQ \delta \cA + (\delta \cC)^{\rT}\cC + \cC^{\rT} \delta \cC),
        \Sigma
    \Ket\\
\nonumber
    =&
    \Bra
        (\delta \cA)^{\rT} \cQ + \cQ \delta \cA + (\delta \cC)^{\rT}\cC + \cC^{\rT} \delta \cC,
        \cP
    \Ket\\
\nonumber
    =&
    2\Bra
        \cE, \delta \cA
    \Ket
    +
    2
    \Bra
        \cC\cP, \delta \cC
    \Ket    \\
\nonumber
    =&
    -
    4
    \Bra
     \Theta \cE,
     { \scriptsize\begin{bmatrix} 0 & \delta L \\ \delta L^{\rT} & \delta M\end{bmatrix}}
    \Ket
    +
    2 \Bra \cC\cP, { \scriptsize\begin{bmatrix} 0 & 0\\ 0 & \sqrt{\lambda \Pi} \delta L\end{bmatrix}}\Ket\\
\nonumber
    =&
    -
    4
    \Bra
     \bS(\Theta \cE),
     { \scriptsize\begin{bmatrix} 0 & \delta L \\ \delta L^{\rT} & \delta M\end{bmatrix}}
    \Ket
    +
    2 \Bra (\cC\cP)_{22}, \sqrt{\lambda \Pi}\delta L\Ket\\
\nonumber
    =&
    -8\Bra
        \bS(\Theta \cE)_{12},
        \delta L
    \Ket
    -4\Bra
        \bS(\Theta \cE)_{22},
        \delta M
    \Ket\\
\nonumber
    & +
2 \bra \sqrt{\lambda \Pi} L\cP_{22}, \sqrt{\lambda \Pi}\delta L\ket\\
\label{delcZ}
    =&
2 \Bra \lambda \Pi L\cP_{22} - 4\bS(\Theta \cE)_{12}, \delta L\Ket
     -
    4
    \Bra
     \bS(\Theta \cE)_{22},
     \delta M
    \Ket
\end{align}
(similar calculations can be found, for example, in \cite{VP_2013a}). Here,  $\bS(N):= \tfrac{1}{2}(N+N^{\rT})$ denotes the symmetrizer of matrices, so that
\begin{equation}
\label{bSTE}
    \bS(\Theta \cE)
     =
    \tfrac{1}{2}
    (\Theta \cE - \cE^{\rT}\Theta)
    =
    \tfrac{1}{2}
    {\scriptsize\begin{bmatrix}
        \Theta_1 \cE_{11}-\cE_{11}^{\rT}\Theta_1  &
        \Theta_1 \cE_{12}-\cE_{21}^{\rT}\Theta_2\\
        \Theta_2 \cE_{21}-\cE_{12}^{\rT}\Theta_1 &
        \Theta_2 \cE_{22}-\cE_{22}^{\rT}\Theta_2\end{bmatrix}}.
\end{equation}
A combination of (\ref{delcZ}) with (\ref{bSTE}) leads to the  partial Frechet derivatives of $\cZ$ on the corresponding Hilbert spaces of matrices $\mR^{n\x \nu}$ and $\mS_{\nu}$:
\begin{align}
\nonumber
    \d_L \cZ
    & =
    2(\lambda \Pi L\cP_{22} - 4\bS(\Theta \cE)_{12})\\
\label{dcZdL}
     & = 2 (\lambda \Pi L\cP_{22} -
    2
    (\Theta_1 \cE_{12}-\cE_{21}^{\rT}\Theta_2)),\\
\label{dcZdM}
    \d_M \cZ
    & = -4\bS(\Theta_2 \cE_{22})
     = -2(\Theta_2 \cE_{22}-\cE_{22}^{\rT}\Theta_2).
\end{align}
By equating the Frechet derivatives (\ref{dcZdL}) and (\ref{dcZdM}) to zero, it follows that the stationarity of $\cZ$ with respect to $L$ and $M$ is equivalent to (\ref{dcZdL0}) and (\ref{dcZdM0}). \endproof
%%%%%%%%%%%%%%%%%%%%%%%%%%%%%%%%%%%%%%%%%%%%%%%%%%%%%%%%%%%%%%%%%%%%%%%%%%%%%%%

The relation (\ref{bSTE}) implies that  the fulfillment of the first-order optimality conditions (\ref{dcZdL0}) and (\ref{dcZdM0}) for the observer is equivalent to the existence of a matrix $N\in \mS_n$ such that
\begin{equation}
\label{STE}
    \Theta \cE - \cE^{\rT} \Theta
    =
    \tfrac{1}{2}
    {\scriptsize\begin{bmatrix}
    N & \lambda \Pi L\cP_{22}\\
    \lambda \cP_{22}L^{\rT}\Pi & 0
    \end{bmatrix}}.
\end{equation}
Here, the zero block corresponds to (\ref{dcZdM0}), which means that the matrix $\cE_{22}$ is skew-Hamiltonian in the sense of the symplectic structure specified by $\Theta_2^{-1}$, that is, $\cE_{22} \in \Theta_2^{-1} \mA_{\nu}$.

A quantum probabilistic interpretation of the optimality conditions (\ref{dcZdL0}) and (\ref{dcZdM0}) is that, for any such observer, the process $\vartheta$, given by
\begin{equation}
\label{vart}
    \vartheta
    :=
    {\scriptsize
    \begin{bmatrix}
      \vartheta_1\\
      \vartheta_2
    \end{bmatrix}}
    :=
    \Theta
    \cQ \cX,
    \quad
    \vartheta_j
    :=
    \Theta_j \cQ_{j\bullet} \cX,
    \quad
    j=1,2,
\end{equation}
and consisting of $n+\nu$ self-adjoint operators (which are special linear combinations of the plant and observer variables),
satisfies the covariance relations
\begin{align}
\nonumber
  \bE_\tau(\vartheta &\xi^{\rT} + \cX \vartheta_2^{\rT})
  =
  \Theta \cQ\bE_\tau(\cX \xi^{\rT})
  -
    \bE_\tau(\cX \cX^{\rT}) \cQ_{\bullet 2}\Theta_2\\
\nonumber
  = &
  \Theta \cQ
  {\scriptsize\begin{bmatrix}
    \cP_{12}\\
    \cP_{22} + i\Theta_2
  \end{bmatrix}}
  -
  (\cP + i\Theta)
  \cQ_{\bullet 2}\Theta_2
  =
    \Theta \cE_{\bullet 2} - \cE_{2\bullet}^\rT\Theta_2\\
\label{covrel}
    =&
    {\scriptsize
    \begin{bmatrix}
        \tfrac{\lambda}{2} \Pi L \cP_{22}\\
        0
    \end{bmatrix}}
    =
    {\scriptsize
    \begin{bmatrix}
        \tfrac{\lambda}{2} \Pi \Re \bE_\tau (\eta \xi^\rT)\\
        0
    \end{bmatrix}}.
\end{align}
Here, use is made of the identities $\cE_{jk} = \cQ_{j\bullet} \cP_{\bullet k}$ and $\cE_{jk}^{\rT} = \cP_{k\bullet} \cQ_{\bullet j}$, which follow from (\ref{cE}) and the symmetry of the Gramians $\cP$ and $\cQ$ in (\ref{cP}) and (\ref{cQALE}). In particular, (\ref{covrel}) implies that $\vartheta_2$ in (\ref{vart}) and $\xi$ are uncorrelated in the sense that
\begin{equation}
\label{uncorr}
    \bE_\tau(\vartheta_2 \xi^{\rT} + \xi\vartheta_2^{\rT}) = 0.
\end{equation}
This is a quantum counterpart of the corresponding property for the state estimation error and the state estimate in the classical Kalman filter \cite{AM_1979}.

If $\cP_{22}\succ 0$, then, in view of the assumption $\Pi\succ 0$,  (\ref{dcZdL0}) implies that the optimal coupling matrix is representable as
\begin{equation}
\label{L}
    L
    =
    \tfrac{2}{\lambda}
    \Pi^{-1}
    (\Theta_1 \cE_{12}
    -\cE_{21}^{\rT}\Theta_2)
    \cP_{22}^{-1}.
\end{equation}
In order to close the ALEs (\ref{cPALE}) and (\ref{cQALE}), the relation (\ref{L}) 
needs to be complemented with an appropriate equation for the optimal observer matrix $M$. The latter step is less straightforward and will be considered in the next section.

%%%%%%%%%%%%%%%%%%%%%%%%%%%%%%%%%%%%%%%%%%%%%%%%%%%%%%%%%%%%%%%%%%%%%%%%%%%%%%%%
\section{Lie-algebraic representation of optimality conditions}\label{sec:lie}
%%%%%%%%%%%%%%%%%%%%%%%%%%%%%%%%%%%%%%%%%%%%%%%%%%%%%%%%%%%%%%%%%%%%%%%%%%%%%%%%

For what follows, we associate with the Gramians $\cP$ and $\cQ$ from (\ref{cP}) and (\ref{cQALE}) the matrices
\begin{equation}
\label{PQ}
    P:= \cP \Theta^{-1},
    \qquad
    Q:= \Theta \cQ
\end{equation}
belonging to the same subspace $\Theta \mS_{n+\nu}$ of Hamiltonian matrices as $\cA$ in (\ref{cA}). Here, the property $P \in \Theta \mS_{n+\nu}$ follows from $\Theta^{-1}\cP \Theta^{-1}\in \mS_{n+\nu}$.  The linear space $\Theta \mS_{n+\nu}$, equipped with the commutator $[\cdot, \cdot]$, 
 is a Lie algebra \cite{D_2006,O_1993,V_1984}, in terms of which the ALEs and the optimality conditions above will be reformulated by the following lemma. Its formulation employs the Hamiltonian matrix
\begin{equation}
\label{D}
  D
  :=
  [Q,P]
 =
  \Theta \cQ \cP\Theta^{-1}
  -
  \cP \cQ
  =
    (\Theta \cE
  -
  \cE^\rT \Theta)
  \Theta^{-1},
\end{equation}
which (for any $\tau$-admissible observer) is related to the left-hand side of (\ref{STE}) due to
(\ref{cE}), (\ref{PQ}) and the symmetry of the Gramians $\cP$, $\cQ$.

%%%%%%%%%%%%%%%%%%%%%%%%%%%%%%%%%%%%%%%%%%%%%%%%%%%%%%%%%%%%%%%%%%%%%%%%%%%%%%%%
\begin{lem}
\label{lem:lie}
The ALEs (\ref{cPALE}), (\ref{cQALE}) and the optimality conditions (\ref{dcZdL0}), (\ref{dcZdM0}) for the CQF problem (\ref{cZ})--(\ref{ES}) are representable in a Lie-algebraic form through the Hamiltonian matrices $P$, $Q$ from (\ref{PQ}):
\begin{align}
\label{cPQALElie1}
    [\cA, P] &= \tfrac{1}{\tau}(P-\Sigma\Theta^{-1}),\\
\label{cPQALElie2}
    [\cA, Q] & = \Theta \cC^{\rT}\cC - \tfrac{1}{\tau} Q,\\
\label{optlie1}
    D_{12}
    & =
    \tfrac{\lambda}{2}
    \Pi LP_{22},\\
\label{optlie2}
    D_{22}
    & =
    0,
\end{align}
where $D_{12}$ and $D_{22}$ are the corresponding blocks of the matrix $D\in \Theta  \mS_{n+\nu}$ in (\ref{D}).
\end{lem}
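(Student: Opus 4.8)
The plan is to exploit the Hamiltonian structure $\cA\Theta + \Theta\cA^\rT = 0$ of the matrix $\cA$ in (\ref{cA}) (which holds since $\cA = 2\Theta R$ with $R\in\mS_{n+\nu}$ and $\Theta^\rT = -\Theta$), written equivalently as $\cA^\rT = -\Theta^{-1}\cA\Theta$. First I would record that the matrices $P = \cP\Theta^{-1}$ and $Q = \Theta\cQ$ of (\ref{PQ}) do lie in the Lie algebra $\Theta\mS_{n+\nu}$: since $\cP,\cQ\in\mS_{n+\nu}$ and $\Theta$ is antisymmetric, both $\Theta^{-1}\cP\Theta^{-1}$ and $\cQ = \Theta^{-1}Q$ are symmetric, so $P,Q\in\Theta\mS_{n+\nu}$ as claimed. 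This makes the commutators $[\cA,P]$, $[\cA,Q]$ and $[Q,P]$ well-defined elements of the same algebra.

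Next I would convert the two ALEs into commutator form. For (\ref{cPQALElie1}) I would start from the controllability ALE $\cA_\tau\cP + \cP\cA_\tau^\rT + \tfrac{1}{\tau}\Sigma = 0$ equivalent to (\ref{cPALE}), substitute $\cA_\tau = \cA - \tfrac{1}{2\tau}I_{n+\nu}$ so that the scalar shifts collapse into a single $-\tfrac{1}{\tau}\cP$ term, replace $\cA^\rT$ by $-\Theta^{-1}\cA\Theta$, and then right-multiply the whole identity by $\Theta^{-1}$. Using $\cA^\rT\Theta^{-1} = -\Theta^{-1}\cA$, the term $\cP\cA^\rT\Theta^{-1}$ becomes $-P\cA$, so $\cA\cP\Theta^{-1} + \cP\cA^\rT\Theta^{-1}$ collapses to $\cA P - P\cA = [\cA,P]$, and the remaining terms give $\tfrac{1}{\tau}(P-\Sigma\Theta^{-1})$, which is exactly (\ref{cPQALElie1}). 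The derivation of (\ref{cPQALElie2}) is the mirror image: starting from the observability ALE $\cA_\tau^\rT\cQ + \cQ\cA_\tau + \cC^\rT\cC = 0$ of (\ref{cQALE}), I would eliminate $\cA^\rT$ via the same identity and left-multiply by $\Theta$, so that $\Theta\cA^\rT = -\cA\Theta$ turns $\Theta\cA^\rT\cQ + \Theta\cQ\cA$ into $-\cA Q + Q\cA = -[\cA,Q]$ and produces (\ref{cPQALElie2}).

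For the optimality conditions I would work through the matrix $D = [Q,P]$ of (\ref{D}). The key observation is that antisymmetry of $\Theta$ turns the transpose-symmetrizer into $D\Theta = \Theta\cE - \cE^\rT\Theta = 2\bS(\Theta\cE)$, whose block form is already recorded in (\ref{bSTE}). Because $\Theta = \diag(\Theta_1,\Theta_2)$ is block-diagonal, each block satisfies $(D\Theta)_{jk} = D_{jk}\Theta_k$, and likewise $P_{22} = \cP_{22}\Theta_2^{-1}$ from (\ref{PQ}). Reading off the $(1,2)$ block of the optimality condition (\ref{dcZdL0}) gives $D_{12}\Theta_2 = \tfrac{\lambda}{2}\Pi L\cP_{22}$, and right-multiplying by $\Theta_2^{-1}$ yields (\ref{optlie1}); reading off the $(2,2)$ block of (\ref{dcZdM0}) gives $D_{22}\Theta_2 = 0$, whence nonsingularity of $\Theta_2$ yields (\ref{optlie2}).

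The computations are all direct algebraic consequences of the symplectic identity and the definitions, so I do not anticipate a genuine conceptual obstacle; the only thing that needs care is consistent bookkeeping of left- versus right-multiplication by $\Theta^{\pm1}$ and the sign changes produced by $\Theta^\rT = -\Theta$ each time a transpose is eliminated. As a consistency check I would confirm that every right-hand side lands back in $\Theta\mS_{n+\nu}$ (for instance $\Sigma\Theta^{-1}$ and $\Theta\cC^\rT\cC$ are Hamiltonian), which is guaranteed by symmetry of $\Sigma$ and $\cC^\rT\cC$.
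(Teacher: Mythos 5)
Your proposal is correct and follows essentially the same route as the paper's own proof: converting the ALEs into commutator form via the Hamiltonian identity $\cA^{\rT}=-\Theta^{-1}\cA\Theta$ (the paper's relations (\ref{AAPP})--(\ref{AAQQ}) are exactly your right-multiplication by $\Theta^{-1}$ and left-multiplication by $\Theta$), and then reading off the blocks of $D\Theta=\Theta\cE-\cE^{\rT}\Theta$ using the block-diagonality of $\Theta$ and $P_{22}=\cP_{22}\Theta_2^{-1}$, just as the paper does by substituting (\ref{STE}) into (\ref{D}). The bookkeeping of signs and of left/right factors of $\Theta^{\pm 1}$ in your argument is accurate throughout.
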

%%%%%%%%%%%%%%%%%%%%%%%%%%%%%%%%%%%%%%%%%%%%%%%%%%%%%%%%%%%%%%%%%%%%%%%%%%%%%%%%
\proof
The Hamiltonian structure of the matrix $\cA$  in (\ref{cA}) implies that
$    \cA^{\rT} = -\Theta^{-1}\cA \Theta
$,
and hence, 
\begin{align}
\nonumber
    \cA_{\tau}\cP & + \cP \cA_{\tau}^{\rT}
      =
    \cA\cP + \cP \cA^{\rT}     - \tfrac{1}{\tau}\cP\\
\label{AAPP}
      = &
    \cA\cP - \cP \Theta^{-1}\cA \Theta     - \tfrac{1}{\tau}\cP
    =
    \big(
        [\cA, P] - \tfrac{1}{\tau}P
    \big)\Theta,\\
\nonumber
    \cA_{\tau}^{\rT}\cQ  & + \cQ \cA_{\tau}
      =
    \cA^{\rT}\cQ + \cQ \cA     - \tfrac{1}{\tau}\cQ\\
\label{AAQQ}
       = &
    -\Theta^{-1}\cA \Theta\cQ +\cQ \cA - \tfrac{1}{\tau}\cQ
    =
    -\Theta^{-1}
    \big([\cA, Q] + \tfrac{1}{\tau}Q\big),
\end{align}
where use is also made of (\ref{cAtau}) and (\ref{PQ}). Substitution of (\ref{AAPP}) and (\ref{AAQQ}) into the ALEs (\ref{cPALE}), (\ref{cQALE})
leads to their Lie-algebraic representations (\ref{cPQALElie1}), (\ref{cPQALElie2}). 
Furthermore, by substituting (\ref{STE}) into (\ref{D}), considering the second block-column
$
    D_{\bullet 2}
    =
    \tfrac{1}{2}
    {\scriptsize\begin{bmatrix}
    \lambda \Pi L\cP_{22}\\
    0
    \end{bmatrix}}
    \Theta_2^{-1}
$
and using the relation $\cP_{22}\Theta_2^{-1} = P_{22}$, it follows that 
the optimality conditions (\ref{dcZdL0}) and (\ref{dcZdM0}) admit the Lie-algebraic representations (\ref{optlie1}) and (\ref{optlie2}). 
\endproof
%%%%%%%%%%%%%%%%%%%%%%%%%%%%%%%%%%%%%%%%%%%%%%%%%%%%%%%%%%%%%%%%%%%%%%%%%%%%%%%%%%%%%%%%%%%%%%%%%%%%%%%%%%%%%%%%%%%%%%%%%%

The solutions of (\ref{cPQALElie1}) and (\ref{cPQALElie2}) admit the representation 
\begin{align}
\label{Pres}
    P    & = (\cI - \tau \ad_{\cA})^{-1}(\Sigma\Theta^{-1}),\\
\label{Qres}
    Q    & = \tau (\cI + \tau \ad_{\cA})^{-1}(\Theta \cC^{\rT}\cC),
\end{align}
where $\cI$ is the identity operator on the space $\Theta \mS_{n+\nu}$. 
Here, the resolvents  $(\cI \pm \tau \ad_{\cA})^{-1}$ are well-defined since the $\tau$-admissibility (\ref{taugood}) implies that  the spectrum of the linear operator $\ad_{\cA}$ on $\Theta \mS_{n+\nu}$ is contained in the strip $\{z \in \mC:\ |\Re z|< \tfrac{1}{\tau}\}$. 

%%%%%%%%%%%%%%%%%%%%%%%%%%%%%%%%%%%%%%%%%%%%%%%%%%%%%%%%%%%%%%%%%%%%%%%%%%%%%%
\begin{lem}
\label{lem:LM}
The %auxiliary matrix $N$ in (\ref{STE}) and the
optimal coupling matrix $L$ in (\ref{L}) can be expressed in terms of the matrices $P$ and $Q$ from (\ref{PQ}) as
\begin{align}
\label{Lopt}
    L
    & =
    \tfrac{2}{\lambda}
    \Pi^{-1}
    D_{12}
    P_{22}^{-1},
\end{align}
provided $\cP_{22}\succ 0$, where the matrix $D$ is given by (\ref{D}). Furthermore, the optimal energy matrix $M$ of the observer satisfies
\begin{align}
\nonumber
    \tfrac{1}{2}\Big(\tfrac{1}{\tau}[\Sigma \Theta^{-1}, Q]_{12}&+[\Theta \cC^{\rT}\cC, P]_{12}\Big)\\
\label{eq12}
    + D_{11}\Theta_1 L &-
    \Theta_1%(
    KD_{12}%+LD_{22})
 + D_{12}\Theta_2 M = 0.
\end{align}
\end{lem}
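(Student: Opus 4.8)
The plan is to handle the two assertions separately: equation (\ref{Lopt}) follows at once from the optimality conditions of Lemma~\ref{lem:lie}, whereas (\ref{eq12}) requires the Jacobi identity. First I would dispose of (\ref{Lopt}). Since $\Theta$ is block diagonal, the $(2,2)$-block of $P=\cP\Theta^{-1}$ in (\ref{PQ}) is $P_{22}=\cP_{22}\Theta_2^{-1}$, which is nonsingular whenever $\cP_{22}\succ 0$ because $\Theta_2$ is invertible; as $\Pi\succ 0$ is invertible too, solving the optimality condition (\ref{optlie1}), $D_{12}=\tfrac{\lambda}{2}\Pi L P_{22}$, for $L$ gives (\ref{Lopt}) directly. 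This is consistent with (\ref{L}) of Theorem~\ref{th:stat}, since $D_{12}=(\Theta_1\cE_{12}-\cE_{21}^{\rT}\Theta_2)\Theta_2^{-1}$ by (\ref{D}) and (\ref{bSTE}).

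The substantive part is (\ref{eq12}), whose key device is the Jacobi identity applied to the Hamiltonian matrix $D=[Q,P]$ from (\ref{D}). I would write $[\cA,D]=[[\cA,Q],P]+[Q,[\cA,P]]$ and substitute the Lie-algebraic ALEs (\ref{cPQALElie1}) and (\ref{cPQALElie2}). The crucial observation is that the two $\tfrac{1}{\tau}[Q,P]$ contributions, arising from the homogeneous $-\tfrac{1}{\tau}Q$ term of (\ref{cPQALElie2}) and the $+\tfrac{1}{\tau}P$ term of (\ref{cPQALElie1}), cancel, leaving only the inhomogeneous source terms:
\begin{equation*}
    [\cA,D]=[\Theta\cC^{\rT}\cC,P]+\tfrac{1}{\tau}[\Sigma\Theta^{-1},Q].
\end{equation*}
The $(1,2)$-block of the right-hand side is precisely twice the parenthesised expression in (\ref{eq12}).

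It then remains to expand the $(1,2)$-block of the left-hand side by block multiplication. Using the partition of $\cA$ in (\ref{cA}), with $\cA_{11}=2\Theta_1 K$, $\cA_{12}=2\Theta_1 L$ and $\cA_{22}=2\Theta_2 M$, I obtain
\begin{equation*}
    [\cA,D]_{12}=2\Theta_1 K D_{12}+2\Theta_1 L D_{22}-2D_{11}\Theta_1 L-2D_{12}\Theta_2 M.
\end{equation*}
Here I would invoke the second optimality condition (\ref{optlie2}), namely $D_{22}=0$, to delete the middle term. Equating the resulting $\tfrac{1}{2}[\cA,D]_{12}=\Theta_1 K D_{12}-D_{11}\Theta_1 L-D_{12}\Theta_2 M$ to $\tfrac{1}{2}\big([\Theta\cC^{\rT}\cC,P]_{12}+\tfrac{1}{\tau}[\Sigma\Theta^{-1},Q]_{12}\big)$ and transposing all terms to one side reproduces (\ref{eq12}).

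I expect the main obstacle to be purely bookkeeping: maintaining the sign conventions of the commutator and of the two ALEs throughout the Jacobi expansion, and matching the block structure of $\cA$, $\Theta$ and $D$ correctly. The conceptual content is light once one notices that the identity for $[\cA,D]$ above holds for \emph{any} $\tau$-admissible observer, while the stationarity condition $D_{22}=0$ is exactly what removes the $\cA_{12}D_{22}$ term; consequently the explicit dependence of (\ref{eq12}) on the observer energy matrix $M$ survives only through $\cA_{22}=2\Theta_2 M$ in the $D_{12}\Theta_2 M$ term, yielding the sought algebraic equation for $M$.
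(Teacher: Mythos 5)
Your proposal is correct and follows essentially the same route as the paper: solving the first optimality condition (\ref{optlie1}) for $L$, then applying the Jacobi identity to $\cA$, $P$, $Q$ with the ALEs (\ref{cPQALElie1})--(\ref{cPQALElie2}) so that the $\tfrac{1}{\tau}[Q,P]$ terms cancel, extracting the $(1,2)$-block via the partition of $\cA$, and finally invoking $D_{22}=0$ to remove the $\Theta_1 L D_{22}$ term. The only cosmetic difference is that you use the derivation form $[\cA,[Q,P]]=[[\cA,Q],P]+[Q,[\cA,P]]$ of the Jacobi identity, whereas the paper uses the cyclic form $[[P,\cA],Q]+[[\cA,Q],P]+[[Q,P],\cA]=0$; these are equivalent and lead to the same computation.
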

%%%%%%%%%%%%%%%%%%%%%%%%%%%%%%%%%%%%%%%%%%%%%%%%%%%%%%%%%%%%%%%%%%%%%%%%%%%%%%
\proof
The representation 
(\ref{Lopt}) follows directly from the first optimality condition (\ref{optlie1}) under the assumption $\cP_{22}\succ 0$. 
In order to establish (\ref{eq12}), we note that
the  left-hand sides of (\ref{cPQALElie1})--(\ref{optlie2}) involve pairwise commutators of the Hamiltonian matrices $\cA, P, Q \in \Theta \mS_{n+\nu}$. Application of the Jacobi identity \cite{V_1984} and the antisymmetry of the commutator leads to the relations
\begin{align}
\nonumber
0   & =
[[P,\cA],Q] +[[\cA,Q], P]+[[Q,P], \cA]\\
\nonumber
& =
\tfrac{1}{\tau}
[\Sigma\Theta^{-1}-P,Q]
+
\big[\Theta \cC^{\rT}\cC-\tfrac{1}{\tau} Q, P\big]
 +[D, \cA]\\
\label{jacob}
& =  \tfrac{1}{\tau}
[\Sigma\Theta^{-1},Q]
+
[\Theta \cC^{\rT}\cC, P]+[D, \cA]
\end{align}
for any $\tau$-admissible observer, where use is made of (\ref{D}) (here, neither of the optimality conditions (\ref{optlie1}) and (\ref{optlie2}) has been used). By substituting the matrix $\cA$ from (\ref{cA}) into the right-hand side of (\ref{jacob}) and considering the $(\cdot)_{12}$ block of the resulting Hamiltonian matrix, it follows that  \begin{align}
\nonumber
    \tfrac{1}{\tau}[&\Sigma \Theta^{-1}, Q]_{12}+[\Theta \cC^{\rT}\cC, P]_{12}\\
\label{jacob12}
    +& 2\big(D_{11}\Theta_1 L +D_{12}\Theta_2 M
    -
    \Theta_1 (
    KD_{12}+LD_{22})\big) = 0.
\end{align}
Now, the second optimality condition  (\ref{optlie2}) makes the corresponding term in (\ref{jacob12}) vanish, thus leading to (\ref{eq12}). 
\endproof

As can be seen from the proof of Lemma~\ref{lem:LM}, the relation (\ref{eq12}) holds for any $\tau$-admissible stationary point of the CQF problem regardless of the assumption $\cP_{22}\succ 0$.   Furthermore, (\ref{eq12}) is a linear equation with respect to $M$. This allows the optimal observer energy matrix  $M$ to be  expressed in terms of $P$, $Q$ from (\ref{PQ}) in the case of equal plant and observer dimensions $n=\nu$. 
In this case, the  observer will be called \emph{nondegenerate} if the matrices  $P$ and  $D$  from (\ref{PQ}) and (\ref{D})   satisfy
\begin{equation}
\label{nondeg}
  \cP_{22}\succ 0,
  \qquad
  \det D_{12}\ne 0.
\end{equation}
The above results 
lead to the following necessary conditions of optimality for such observers.

%%%%%%%%%%%%%%%%%%%%%%%%%%%%%%%%%%%%%%%%%%%%%%%%%%%%%%%%%%%%%%%%%%%%%%%%%%%%%%%%
\begin{thm}
\label{th:LM}
Suppose the plant and observer dimensions are equal: $n=\nu$. Then for any nondegenerate observer, which  is a stationary point of the CQF problem (\ref{cZ})--(\ref{ES}) under the assumptions of Theorem~\ref{th:stat}, the coupling and energy matrices are related by (\ref{Lopt}) and
\begin{align}
\nonumber
    \!M = &
    \Theta_2^{-1}D_{12}^{-1}\big(\Theta_1 KD_{12}\\
\label{Mopt}
    &  - D_{11}\Theta_1 L- \tfrac{1}{2}\big(\tfrac{1}{\tau}[\Sigma \Theta^{-1}, Q]_{12}+[\Theta \cC^{\rT}\cC, P]_{12}\big)\big)
\end{align}
to the matrices $P$ and $Q$ from (\ref{PQ}) satisfying the ALEs (\ref{cPQALElie1}) and (\ref{cPQALElie2}).
\end{thm}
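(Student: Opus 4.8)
The plan is to derive Theorem~\ref{th:LM} almost immediately from Lemma~\ref{lem:LM}, the genuinely new input being only the equal-dimension hypothesis $n=\nu$ combined with the nondegeneracy conditions (\ref{nondeg}).

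I would begin by invoking Lemma~\ref{lem:LM}. For any $\tau$-admissible stationary point of the CQF problem (\ref{cZ})--(\ref{ES}) with $\cP_{22}\succ 0$, that lemma already supplies the representation (\ref{Lopt}) for the optimal coupling matrix $L$, together with the linear relation (\ref{eq12}) for the optimal observer energy matrix $M$. Since (\ref{Lopt}) is verbatim one of the two claims, the only remaining task is to solve (\ref{eq12}) explicitly for $M$.

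To this end I would rewrite (\ref{eq12}) in the form
\begin{equation*}
    D_{12}\Theta_2 M
    =
    \Theta_1 K D_{12}
    - D_{11}\Theta_1 L
    - \tfrac{1}{2}\Big(\tfrac{1}{\tau}[\Sigma\Theta^{-1}, Q]_{12} + [\Theta\cC^{\rT}\cC, P]_{12}\Big),
\end{equation*}
and note that, when $n=\nu$, the block $D_{12}\in\mR^{n\times\nu}$ is square, so the second nondegeneracy condition $\det D_{12}\ne 0$ in (\ref{nondeg}) renders $D_{12}$ invertible; the observer CCR matrix $\Theta_2\in\mA_{\nu}$ is nonsingular by standing assumption. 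Left-multiplying the displayed identity in turn by $D_{12}^{-1}$ and $\Theta_2^{-1}$ then produces exactly (\ref{Mopt}), with $P$ and $Q$ solving the ALEs (\ref{cPQALElie1}) and (\ref{cPQALElie2}) as guaranteed by Lemma~\ref{lem:lie}.

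The one point deserving care is the invertibility of the coefficient $D_{12}\Theta_2$ multiplying $M$, which is precisely where both hypotheses are used: for rectangular $D_{12}$ (the case $n\ne\nu$) equation (\ref{eq12}) is generically over- or under-determined and admits no such closed-form inversion. Beyond this rearrangement there is no further analytic obstacle, the substantive work having been completed in Theorem~\ref{th:stat} and Lemmas~\ref{lem:lie} and~\ref{lem:LM}. I would also remark that no symmetry verification for (\ref{Mopt}) is required: here $M$ is the already-symmetric energy matrix of the given stationary observer, and (\ref{Mopt}) merely records an explicit expression for it.
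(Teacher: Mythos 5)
Your proposal is correct and follows essentially the same route as the paper: invoke Lemma~\ref{lem:LM} to obtain (\ref{Lopt}) and the linear relation (\ref{eq12}), then use $n=\nu$ together with the nondegeneracy conditions (\ref{nondeg}) to invert $D_{12}\Theta_2$ and solve (\ref{eq12}) uniquely for $M$ in the form (\ref{Mopt}). Your explicit identification of the invertibility of the coefficient $D_{12}\Theta_2$ as the precise point where both hypotheses enter matches the paper's (terser) argument exactly.
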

%%%%%%%%%%%%%%%%%%%%%%%%%%%%%%%%%%%%%%%%%%%%%%%%%%%%%%%%%%%%%%%%%%%%%%%%%%%%%%%%
\proof
The first of the conditions (\ref{nondeg}) makes the  representation (\ref{Lopt}) applicable, which leads to a nonsingular coupling matrix $L$ in view of the second condition in (\ref{nondeg}). The latter allows (\ref{eq12}) to be 
uniquely solved for the observer energy matrix $M$ in the form (\ref{Mopt}). 
\endproof
%%%%%%%%%%%%%%%%%%%%%%%%%%%%%%%%%%%%%%%%%%%%%%%%%%%%%%%%%%%%%%%%%%%%%%%%%%%%%%%%

The first line of (\ref{Mopt}) is organised as a similarity transformation which would relate the Hamiltonian matrices $\Theta_1 K$ and $\Theta_2 M$ if there were no additional terms on the right-hand side of the equation. 
In that case, the transformation matrix $D_{12}$ in (\ref{Mopt}) would preserve the Hamiltonian structure if  it were symplectic in the generalized sense that $D_{12}\Theta_2 D_{12}^{\rT} = \Theta_1$.

In combination with the ALEs (\ref{cPALE}) and (\ref{cQALE}) (or their Lie-algebraic form (\ref{PQ})--(\ref{cPQALElie2}), the relations (\ref{Lopt}) and (\ref{Mopt}) of Lemma~\ref{lem:LM} and Theorem~\ref{th:LM} provide a set of algebraic equations for finding the matrices $L$ and $M$ of a nondegenerate observer among stationary points in the CQF problem (\ref{cZ})--(\ref{ES}).

%%%%%%%%%%%%%%%%%%%%%%%%%%%%%%%%%%%%%%%%%%%%%%%%%%%%%%%%%%%%%%%%%%%%%%%%%%%%%%%
\section{Observers with autonomous estimation error dynamics}
\label{sec:sub}
%%%%%%%%%%%%%%%%%%%%%%%%%%%%%%%%%%%%%%%%%%%%%%%%%%%%%%%%%%%%%%%%%%%%%%%%%%%%%%%%

In view of the complicated structure of the equations of Sections~\ref{sec:opt} and \ref{sec:lie} 
for an optimal observer, consider a suboptimal solution of the CQF problem in a special class of observers which lead to autonomous dynamics of the estimation error $E$   in (\ref{ES}). More precisely, suppose the observer is such that
\begin{equation}
\label{SAAS}
    S \cA = \wh{\cA} S
\end{equation}
for some $\wh{\cA} \in \mR^{p\x p}$,
where the matrix $S$ is given by (\ref{ES}). In combination with (\ref{cXdot}), the relation (\ref{SAAS}) leads to the ODE
\begin{equation}
\label{Edot}
  \dot{E}
  =
  S \dot{\cX}
  =
  S \cA \cX
  =
  \wh{\cA}S \cX
  =
  \wh{\cA} E.
\end{equation}
These autonomous dynamics preserve the CCRs for the estimation error:
\begin{equation}
\label{EEcomm}
    [E,E^{\rT}]
    =
    2i \wh{\Theta},
    \qquad
    \wh{\Theta}:= S \Theta S^\rT
    =
    \sum_{k=1}^2
    S_k \Theta_k S_k^\rT.
\end{equation}
Indeed, from (\ref{SAAS}), (\ref{EEcomm}) and the Hamiltonian property $\cA \in \Theta \mS_{n+\nu}$,  it follows that
\begin{equation}
\label{PRE}
    \wh{\cA} \wh{\Theta} + \wh{\Theta} \wh{\cA}^\rT
     =
    \wh{\cA} S \Theta S^\rT + S \Theta S^\rT \wh{\cA}^\rT     \\
    =
    S (\cA \Theta + \Theta \cA^\rT) S^\rT = 0.
\end{equation}
Therefore, if the CCR matrix $\wh{\Theta} \in \mA_p$ in (\ref{EEcomm}) is nonsingular, then (\ref{PRE}) implies that $\wh{\cA}$ is Hamiltonian in the sense that $ \wh{\cA} \in \wh{\Theta} \mS_p$.

Now, let the plant and the observer have equal dimensions $n=\nu$ and identical CCR matrices
\begin{equation}
\label{TTT}
    \Theta_0 := \Theta_1 = \Theta_2,
\end{equation}
with $\Theta_0 \in \mA_n$ and $\det \Theta_0\ne 0$.
Also, suppose the estimation error $E$ in (\ref{ES}) has the same dimension $p = n$ and is specified by equal nonsingular matrices
\begin{equation}
\label{SSS}
    S_0 := S_1 = S_2,
\end{equation}
with $S_0 \in \mR^{n\x n}$ and $\det S_0\ne 0$. Then the process $E$  reduces to
\begin{equation}
\label{ES0}
    E = S_0(X-\xi),
\end{equation}
and its CCR matrix in (\ref{EEcomm}) is nonsingular:
\begin{equation}
\label{Thetahat}
    \wh{\Theta}
    =
    2 S_0 \Theta_0 S_0^\rT.
\end{equation}
Since $E^{\rT} E = (X-\xi)^{\rT}S_0^\rT S_0 (X-\xi)$ in view of (\ref{ES0}), the matrix $ S_0^\rT S_0 \succ 0$ specifies the relative importance of the plant variables in the CQF problem (\ref{cZ}), (\ref{ZE}).
%%%%%%%%%%%%%%%%%%%%%%%%%%%%%%%%%%%%%%%%%%%%%%%%%%%%%%%%%%%%%%%%%%%%%%%%%%%%%%%%%%%%%%
\begin{lem}
\label{lem:KML}
Under the conditions (\ref{TTT}) and (\ref{SSS}), the estimation error (\ref{ES0}) acquires the autonomous dynamics (\ref{Edot}) due to (\ref{SAAS}) for some matrix $\wh{\cA} \in \mR^{n\x n}$ if and only if the observer has the same energy matrix as the plant and a symmetric coupling matrix:
\begin{equation}
\label{KML}
    K = M,
    \qquad
    L = L^\rT.
\end{equation}
For any such observer, the matrix $\wh{\cA}$ is found uniquely as
\begin{equation}
\label{Ahat}
    \wh{\cA}
    =
    2\wh{\Theta} \wh{R},
\end{equation}
where $\wh{\Theta}$ is the CCR matrix of the estimation error in (\ref{Thetahat}), and
\begin{equation}
\label{Rhat}
    \wh{R}
    :=
    \tfrac{1}{2}
    S_0^{-\rT}
    (K-L)
    S_0^{-1}
\end{equation}
is a real symmetric matrix of order $n$.
\end{lem}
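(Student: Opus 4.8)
The plan is to impose the intertwining relation (\ref{SAAS}) directly and extract conditions on $K$, $L$, $M$ block by block. Under (\ref{TTT}) and (\ref{SSS}), the structure (\ref{cA}), (\ref{AB})--(\ref{ab}) collapses to $\cA = 2\Theta_0 R$ with $B = \beta = 2\Theta_0$, while $S = \begin{bmatrix} S_0 & -S_0 \end{bmatrix}$. First I would compute the two block-columns of $S\cA$, which are $2S_0\Theta_0(K - L^{\rT})$ and $2S_0\Theta_0(L - M)$, and compare them with the two block-columns $\wh{\cA}S_0$ and $-\wh{\cA}S_0$ of $\wh{\cA}S$. Thus (\ref{SAAS}) is equivalent to the pair of equations $2S_0\Theta_0(K - L^{\rT}) = \wh{\cA}S_0$ and $2S_0\Theta_0(L - M) = -\wh{\cA}S_0$.

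The crux is then a short symmetry argument. Adding the two block-column equations cancels the $\wh{\cA}S_0$ terms and, since $S_0$ and $\Theta_0$ are nonsingular, yields $K - M = L^{\rT} - L$. The left-hand side lies in $\mS_n$ while the right-hand side lies in $\mA_n$; as $\mS_n \cap \mA_n = \{0\}$, both sides must vanish, giving exactly the claimed conditions $K = M$ and $L = L^{\rT}$. This establishes the ``only if'' direction, and no genuine obstacle arises here beyond observing that the cancellation forces a simultaneously symmetric and antisymmetric quantity to be zero.

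Conversely, assuming $K = M$ and $L = L^{\rT}$, I would solve the first block-column equation for $\wh{\cA}$ using the invertibility of $S_0$, obtaining $\wh{\cA} = 2S_0\Theta_0(K - L)S_0^{-1}$, which is thereby determined uniquely and automatically satisfies the second equation since $L - M = -(K - L)$. It then remains to match this against (\ref{Ahat})--(\ref{Rhat}): substituting $\wh{\Theta} = 2S_0\Theta_0 S_0^{\rT}$ from (\ref{Thetahat}) together with $\wh{R}$ from (\ref{Rhat}) gives $2\wh{\Theta}\wh{R} = 2S_0\Theta_0 S_0^{\rT} S_0^{-\rT}(K - L)S_0^{-1} = 2S_0\Theta_0(K - L)S_0^{-1} = \wh{\cA}$, and the symmetry $\wh{R} = \wh{R}^{\rT}$ is immediate from the symmetry of $K - L$. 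This closes the lemma.
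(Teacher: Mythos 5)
Your proposal is correct and follows essentially the same route as the paper's proof: both reduce (\ref{SAAS}) under (\ref{TTT}), (\ref{SSS}) to the block identity $2S_0\Theta_0\bigl[\,K-L^{\rT}\ \ L-M\,\bigr]=\wh{\cA}S_0\bigl[\,I_n\ \ -I_n\,\bigr]$, eliminate $\wh{\cA}$ to obtain $K-M=L^{\rT}-L$, invoke $\mS_n\cap\mA_n=\{0\}$, and then solve the first block column for $\wh{\cA}=2S_0\Theta_0(K-L)S_0^{-1}$, identifying it with (\ref{Ahat})--(\ref{Rhat}). The only cosmetic difference is that you make the elimination explicit by adding the two block-column equations, whereas the paper phrases it as consistency of the two expressions for $\wh{\cA}$; the content is identical.
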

%%%%%%%%%%%%%%%%%%%%%%%%%%%%%%%%%%%%%%%%%%%%%%%%%%%%%%%%%%%%%%%%%%%%%%%%%%%%%%%%%%%%%%
\begin{proof}
A combination of (\ref{cA}) with (\ref{TTT}) and (\ref{SSS}) leads to
\begin{equation}
\label{SAAS1}
    S\cA
     =
    2S_0 \Theta_0
    {\scriptsize\begin{bmatrix}
        K - L^\rT &  L - M
    \end{bmatrix}},
    \quad
    \wh{\cA} S
    =
    \wh{\cA} S_0
    {\scriptsize\begin{bmatrix}
         I_n &  -I_n
    \end{bmatrix}}.
\end{equation}
Therefore, since $\det S_0\ne 0$, the fulfillment of (\ref{SAAS}) for some matrix $\wh{\cA} \in \mR^{n\x n}$  is equivalent to
$
    K - L^\rT = M-L
$, that is,
\begin{equation}
\label{LKM}
  M - K = L-L^{\rT}.
\end{equation}
Since the left-hand side of (\ref{LKM}) is a symmetric matrix, while its right-hand side is antisymmetric, and only the zero matrix has these properties simultaneously ($\mS_n \bigcap \mA_n = \{0\}$), then (\ref{LKM}) holds if and only if $L$ and $M$ satisfy (\ref{KML}). In this case,  (\ref{SAAS}), (\ref{Thetahat}) and (\ref{SAAS1}) imply that
$
    \wh{\cA} = 2S_0 \Theta_0 (K-L) S_0^{-1} = \wh{\Theta} S_0^{-\rT}(K-L) S_0^{-1}
$,
which leads to (\ref{Ahat}), with $\wh{R}$ given by (\ref{Rhat}).
\end{proof}
%%%%%%%%%%%%%%%%%%%%%%%%%%%%%%%%%%%%%%%%%%%%%%%%%%%%%%%%%%%%%%%%%%%%%%%%%%%%%%%%%%%%%%

The observer, described in Lemma~\ref{lem:KML}, replicates the quantum plant, except that it is endowed with a different initial space and, in general, different initial covariance conditions in (\ref{Sig}).  The structure (\ref{KML}) of such observers does not depend on particular matrices  $\Theta_0$ and $S_0$.
In view of (\ref{Edot}) and (\ref{Ahat}), %in the framework of Lemma~\ref{lem:KML},
the entries of the estimation error $E$ in (\ref{ES0})  evolve in time as system variables of a QHO with the CCR matrix $\wh{\Theta}$ in (\ref{Thetahat}) and the energy matrix $\wh{R}$ in (\ref{Rhat}). Without additional constraints on the coupling matrix $L$ (apart from its symmetry in (\ref{KML})), $\wh{R}$ can be ascribed any given value in $\mS_n$ by an appropriate choice of $L$. However, large values of $L$ are penalized by the second term of the cost functional in (\ref{ZE}). A solution of the CQF problem (\ref{cZ}) in this class of observers is as follows.

%%%%%%%%%%%%%%%%%%%%%%%%%%%%%%%%%%%%%%%%%%%%%%%%%%%%%%%%%%%%%%%%%%%%%%%%%%%%%%%
\begin{thm}
\label{th:Lopt}
In the framework of Lemma~\ref{lem:KML} under the conditions (\ref{TTT}), (\ref{SSS}) and $\cP_{22}\succ 0$, an optimal coupling matrix $L \in \mS_n$ for the observer with autonomous estimation error dynamics satisfies
\begin{equation}
\label{LLL}
  L
  =
  -\tfrac{8}{\lambda}
  \Pi^{-1}
  \bL(\cP_{22}\Pi^{-1}, \bS(\bS(\Theta \cE)_{12}))
  \Pi^{-1}.
\end{equation}
\end{thm}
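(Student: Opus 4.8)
The plan is to use Lemma~\ref{lem:KML}, which shows that within the class of observers with autonomous estimation error dynamics (under (\ref{TTT}) and (\ref{SSS})) the energy matrix is frozen at $M=K$ and the coupling matrix is constrained to be symmetric, $L=L^\rT$. Hence the only free parameter is $L\in\mS_n$, and the CQF problem (\ref{cZ})--(\ref{ES}) becomes the minimization of $\cZ$ over the subspace $\mS_n$ with $M$ held fixed. Consequently, an interior stationary point is no longer governed by both conditions of Theorem~\ref{th:stat} (which presuppose independent variations of $L$ and $M$); rather, it is characterized by the vanishing of the directional derivative $\bra\d_L\cZ,\delta L\ket$ for every symmetric perturbation $\delta L$, with $\delta M=0$. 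Since the orthogonal complement of $\mS_n$ in $\mR^{n\x n}$ relative to the Frobenius inner product is the space $\mA_n$ of antisymmetric matrices, this is equivalent to $\bS(\d_L\cZ)=0$.

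First I would insert the partial Frechet derivative (\ref{dcZdL}), that is $\d_L\cZ=2(\lambda\Pi L\cP_{22}-4\bS(\Theta\cE)_{12})$, and symmetrize it. Exploiting $L=L^\rT$ together with $\Pi,\cP_{22}\in\mS_n$ gives $\bS(\Pi L\cP_{22})=\tfrac{1}{2}(\Pi L\cP_{22}+\cP_{22}L\Pi)$, so that $\bS(\d_L\cZ)=0$ reduces to the algebraic (Sylvester/Lyapunov-type) equation
\[
    \Pi L\cP_{22}+\cP_{22}L\Pi=\tfrac{8}{\lambda}\bS(\bS(\Theta\cE)_{12})
\]
in the unknown symmetric $L$; the double symmetrizer arises precisely from symmetrizing the off-diagonal block $\bS(\Theta\cE)_{12}$ of (\ref{bSTE}).

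To solve this, I would verify the claimed formula directly. Writing $\gamma:=\bL(\cP_{22}\Pi^{-1},\bS(\bS(\Theta\cE)_{12}))$ for the unique solution of the ALE $\cP_{22}\Pi^{-1}\gamma+\gamma\Pi^{-1}\cP_{22}+\bS(\bS(\Theta\cE)_{12})=0$ and setting $L:=-\tfrac{8}{\lambda}\Pi^{-1}\gamma\Pi^{-1}$, a substitution using $(\cP_{22}\Pi^{-1})^\rT=\Pi^{-1}\cP_{22}$ turns the left-hand side above into $-\tfrac{8}{\lambda}(\cP_{22}\Pi^{-1}\gamma+\gamma\Pi^{-1}\cP_{22})=\tfrac{8}{\lambda}\bS(\bS(\Theta\cE)_{12})$, as required, which is exactly (\ref{LLL}). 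Symmetry of $L$, needed for admissibility in $\mS_n$, follows from symmetry of $\gamma$: transposing the ALE and using that its forcing term is symmetric shows $\gamma^\rT$ solves the same ALE, whence $\gamma=\gamma^\rT$ by uniqueness.

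The main obstacle is conceptual rather than computational, in two respects. First, one must recognize that the constraints of the autonomous-error class collapse the two-equation optimality system of Theorem~\ref{th:stat} to the single condition $\bS(\d_L\cZ)=0$: the condition $\d_M\cZ=0$ is dropped because $M=K$ is frozen, and the $L$-gradient is projected onto $\mS_n$ rather than set to zero in $\mR^{n\x n}$. Second, the operator $\bL$ in (\ref{LLL}) is applied to $\cP_{22}\Pi^{-1}$, which under $\cP_{22}\succ 0$ and $\Pi\succ 0$ is similar to $\Pi^{-1/2}\cP_{22}\Pi^{-1/2}\succ 0$ and hence has spectrum in the open right half-plane; this matrix is not Hurwitz, so the integral formula (\ref{ILO}) diverges. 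The resolution is that $\bL(\cP_{22}\Pi^{-1},\cdot)$ must be read through its defining ALE, which still admits a unique solution because no two eigenvalues of $\cP_{22}\Pi^{-1}$ sum to zero (equivalently, $\bL(\cP_{22}\Pi^{-1},\cdot)=-\bL(-\cP_{22}\Pi^{-1},\cdot)$ with $-\cP_{22}\Pi^{-1}$ Hurwitz).
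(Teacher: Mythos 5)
Your proposal is correct and follows essentially the same route as the paper's proof: freeze $M=K$ by Lemma~\ref{lem:KML}, restrict the first variation (\ref{delcZ}) to symmetric $\delta L$ (equivalently, project the gradient (\ref{dcZdL}) onto $\mS_n$), and reduce stationarity to the Sylvester equation $\Pi L\cP_{22}+\cP_{22}L\Pi=\tfrac{8}{\lambda}\bS(\bS(\Theta\cE)_{12})$, whose unique symmetric solution is (\ref{LLL}); the paper merely packages this via the substitution $\wt{L}=\Pi L\Pi$ and the ALE (\ref{LALE}), whereas you verify the formula by direct substitution and separately check symmetry via uniqueness. Your closing observation is also apt and goes slightly beyond the paper: since $\cP_{22}\Pi^{-1}$ is isospectral to $\Pi^{-1/2}\cP_{22}\Pi^{-1/2}\succ 0$, it is anti-Hurwitz, so $\bL(\cP_{22}\Pi^{-1},\cdot)$ in (\ref{LLL}) must be understood as the ALE-solution operator (well defined because no two eigenvalues of $\cP_{22}\Pi^{-1}$ sum to zero), not through the integral (\ref{ILO}) --- a point the paper's proof leaves implicit when it invokes the isospectrality.
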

%%%%%%%%%%%%%%%%%%%%%%%%%%%%%%%%%%%%%%%%%%%%%%%%%%%%%%%%%%%%%%%%%%%%%%%%%%%%%%%
\begin{proof}
In view of (\ref{KML}), the observer energy matrix $M=K$ remains fixed, and, due to the symmetry of  $L$,  the first variation (\ref{delcZ}) of the  cost functional in the proof of Theorem~\ref{th:stat} reduces to
$
    \delta \cZ = 2 \Bra \bS(\lambda \Pi L\cP_{22} - 4\bS(\Theta \cE)_{12}), \delta L\Ket
$. Hence,
\begin{align}
\nonumber
    \d_L \cZ
    & =
    2 \bS(\lambda \Pi L\cP_{22} - 4\bS(\Theta \cE)_{12})\\
\nonumber
    & =
    \lambda
    (\cP_{22} L \Pi
    +
    \Pi L\cP_{22}
    )
    -
    8 \bS(\bS(\Theta \cE)_{12})\\
\label{dZdLsym}
    & =
    \lambda
    (\cP_{22}\Pi^{-1}\wt{L}
    +
    \wt{L}\Pi^{-1}\cP_{22}
    )
    -
    8 \bS(\bS(\Theta \cE)_{12}),
\end{align}
where
\begin{equation}
\label{LPi}
    \wt{L}:= \Pi L \Pi
\end{equation}
inherits its symmetry from $L$ and $\Pi$. From (\ref{dZdLsym}),  it follows  that $\d_L \cZ =0$ is equivalent to $\wt{L}$ being a unique solution of an appropriate ALE:
\begin{equation}
\label{LALE}
    \wt{L}
    =
  -\tfrac{8}{\lambda}
  \bL(\cP_{22}\Pi^{-1}, \bS(\bS(\Theta \cE)_{12})),
\end{equation}
where $\cP_{22}\Pi^{-1}$ is isospectral to $\Pi^{-1/2}\cP_{22}\Pi^{-1/2}\succ 0$. A combination of (\ref{LPi}) with (\ref{LALE}) leads to (\ref{LLL}).
\end{proof}
%%%%%%%%%%%%%%%%%%%%%%%%%%%%%%%%%%%%%%%%%%%%%%%%%%%%%%%%%%%%%%%%%%%%%%%%%%%%%%%

The right-hand side of the equation (\ref{LLL}) is a nonlinear composite function of the coupling matrix $L$ and a scalar parameter
\begin{equation}
\label{mu}
    \mu:= \tfrac{1}{\lambda}>0
\end{equation}
(which is assumed to be sufficiently small), and can be represented as
\begin{equation}
\label{LF}
    L_{\mu} = \mu f(\mu, L_{\mu}).
\end{equation}
The computation of the function $f$ involves the solution of the ALEs (\ref{cPALE}) and (\ref{cQALE}) for the Gramians   $\cP$ and $\cQ$ with the matrix
\begin{equation}
\label{cAL}
    \cA
    =
    2
    {\scriptsize\begin{bmatrix}
        \Theta_0 K & \Theta_0 L\\
        \Theta_0 L & \Theta_0 K
    \end{bmatrix}},
\end{equation}
followed by computing the Hankelian $\cE$ in (\ref{cE}) and solving the ALE (\ref{LALE}). 
The parameter $\mu$ in (\ref{mu}) enters $f$ only through the matrix
\begin{equation}
\label{CCL}
    \cC^\rT \cC
    =
    S^\rT S +
    {\scriptsize\begin{bmatrix}
        0 & 0\\
        0 & \frac{1}{\mu}L \Pi L
    \end{bmatrix}}
\end{equation}
in the ALE (\ref{cQALE}).  The smallness of $\mu$ corresponds to large values of $\lambda$ (that is, high penalization of the observer back-action on the plant). For all sufficiently small $\mu>0$ and $L \in \mS_n$, the function $f$ is Frechet differentiable, and this  smoothness is inherited by $L_\mu$ in (\ref{LF}). The differentiation of (\ref{LF}) with respect to $\mu$ (as fictitious time) leads to the ODE 
\begin{equation}
\label{Ldot}
    \d_\mu L_\mu = (\cI -\mu \d_L f )^{-1}(f + \mu \d_\mu f),
\end{equation}
with the initial condition $L_0 = 0$,
where $\cI$ is the identity operator on the space $\mS_n$, and $\d_L f$ is the appropriate partial Frechet derivative of $f$. The initial-value problem (\ref{Ldot}) describes a homotopy method for numerical solution of the CQF problem, similar to \cite{MB_1985} (see also, \cite{VP_2010b}). The right-hand side of (\ref{Ldot}) is well-defined for all $(\mu,L)$ in a small neighbourhood of $(0,0)$.  Its computation can be implemented by using the vectorised representations of the Frechet derivatives of solutions of ALEs \cite{SIG_1998,VP_2013a}  in application to the ALEs (\ref{cPALE}), (\ref{cQALE}) (or their Lie-algebraic forms (\ref{Pres}), (\ref{Qres})) and (\ref{LALE}). The details of these calculations are tedious and omitted for brevity. The weak-coupling (or high-penalization) asymptotic behaviour of the matrix $L_{\mu}$ is described below.

%%%%%%%%%%%%%%%%%%%%%%%%%%%%%%%%%%%%%%%%%%%%%%%%%%%%%%%%%%%%%%%%%%%%%%%%%%%%%%%%
\begin{thm}
\label{th:asy}
Suppose the uncoupled observer has a positive definite matrix $P_2$ in (\ref{PALE2}).  Then, for large values of the parameter $\lambda$ in (\ref{ZE}), the optimal coupling  matrix in Theorem~\ref{th:Lopt} satisfies the asymptotic relation
\begin{equation}
\label{Lasy}
  L_\mu \sim \mu L',
  \qquad
  {\rm as}\
  \mu \to 0+,
\end{equation}
where the matrix $L' \in \mS_n$ is a unique solution of the ALE
\begin{equation}
\label{L'}
  L'
  =
  2
  \Pi^{-1}
  \bL
  \big(
  P_2 \Pi^{-1},
  \Theta_0 \cQ_0 (P_1+P_2)
  -
  (P_1+P_2) \cQ_0  \Theta_0
  \big)
  \Pi^{-1}.
\end{equation}
Here, $P_1$ and $P_2$ are the second-moment matrices (\ref{PALE1}) and (\ref{PALE2}) for the uncoupled plant and observer variables given by
\begin{align}
\label{PALE0}
    P_k &= \tfrac{1}{\tau}\bL(A_{\tau}, \Sigma_k),
    \qquad
    k = 1,2,
\end{align}
with a common matrix
\begin{equation}
\label{A0}
    A_{\tau}:= A - \tfrac{1}{2\tau}I_n,
    \qquad
    A= 2\Theta_0 K.
\end{equation}
Also,
\begin{equation}
\label{cQ0}
  \cQ_0
  :=
  S_0^\rT \wh{\cQ} S_0
\end{equation}
in (\ref{L'}) is associated with a unique solution $\wh{\cQ}$ of the ALE
\begin{equation}
\label{Qhat}
  \wh{\cQ}
  :=
  \bL(\wh{\cA}_{\tau}^\rT, I_n),
  \quad
  \wh{\cA}_{\tau}:= \wh{\cA} - \tfrac{1}{2\tau}I_n,
  \quad
  \wh{\cA}= 2S_0 \Theta_0 K S_0^{-1}.
\end{equation}
\end{thm}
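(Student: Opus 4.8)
The plan is to treat the optimality equation (\ref{LLL}), rewritten as the fixed point $L_\mu = \mu f(\mu, L_\mu)$ in (\ref{LF}), as an implicitly defined smooth curve through the origin and to extract its first-order Taylor coefficient at $\mu = 0$. Since $\mu$ enters $f$ only through $\cC^\rT\cC$ in (\ref{CCL}), and $\bL(\cA_\tau^\rT, \cdot)$ is linear, the Gramian $\cQ$ from (\ref{cQALE}) splits as $\cQ = \cQ^{(0)}(L) + \tfrac{1}{\mu}\cQ^{(1)}(L)$ with $\cQ^{(0)}(L) := \bL(\cA_\tau^\rT, S^\rT S)$ and $\cQ^{(1)}(L) := \bL(\cA_\tau^\rT, \diag(0, L\Pi L))$. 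Because the second source term is quadratic in $L$ and vanishes at $L = 0$, the induced map $f^{(1)}(L)$ is $O(\|L\|^2)$, so that $g(\mu, L) := \mu f(\mu, L) = \mu f^{(0)}(L) + f^{(1)}(L)$ is genuinely smooth at $(\mu, L) = (0,0)$ with $g(0,0) = 0$ and $\d_L g(0,0) = 0$. This is the crucial observation that tames the apparent $\tfrac{1}{\mu}$ singularity of $f$.

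Granting the smooth solution branch $L_\mu$ with $L_0 = 0$ supplied by the homotopy argument preceding the theorem, I would differentiate $L_\mu = g(\mu, L_\mu)$ at $\mu = 0$ and use $\d_L g(0,0) = 0$ to obtain $L' = \d_\mu g(0,0) = f^{(0)}(0)$. Thus the leading coefficient is computed entirely at $L = 0$, i.e.\ for the \emph{uncoupled} plant and observer; the hypothesis $P_2 \succ 0$ guarantees $\cP_{22} = P_2 \succ 0$ in this limit, so that the inner operator $\bL(\cP_{22}\Pi^{-1}, \cdot)$ in (\ref{LLL}) remains well-defined by continuity (the associated ALE stays uniquely solvable).

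The remaining work is block bookkeeping at $L = 0$. There $\cA$ is block-diagonal with both diagonal blocks equal to $A = 2\Theta_0 K$ (since $M = K$ by Lemma~\ref{lem:KML}), $\cP$ reduces to $\diag(P_1, P_2)$ as in (\ref{cP*}), and $S^\rT S$ carries the $\pm S_0^\rT S_0$ block pattern coming from $S = \begin{bmatrix} S_0 & -S_0\end{bmatrix}$. The decoupled Lyapunov equations then give $\cQ^{(0)}|_{L=0}$ with all four blocks equal to $\pm \cQ_0$, where $\cQ_0 := \bL(A_\tau^\rT, S_0^\rT S_0)$; the similarity $\wh{\cA}_\tau = S_0 A_\tau S_0^{-1}$ together with $\re^{t\wh{\cA}_\tau} = S_0 \re^{t A_\tau} S_0^{-1}$ identifies this $\cQ_0$ with $S_0^\rT \wh{\cQ} S_0$ of (\ref{cQ0}) and (\ref{Qhat}). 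Forming $\cE = \cQ\cP$, reading off $\cE_{12} = -\cQ_0 P_2$ and $\cE_{21} = -\cQ_0 P_1$, and passing through the two symmetrizers in (\ref{bSTE}) using symmetry of $\cQ_0, P_1, P_2$ and antisymmetry of $\Theta_0$, I expect $\bS(\bS(\Theta\cE)_{12})|_{L=0} = \tfrac{1}{4}\big((P_1+P_2)\cQ_0\Theta_0 - \Theta_0\cQ_0(P_1+P_2)\big)$. Substituting into $f^{(0)}(0)$ obtained from (\ref{LLL}), the constant $-8 \cdot \tfrac14 = -2$ and linearity of $\bL$ in its second argument flip the sign of the bracket and yield exactly the ALE (\ref{L'}).

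The main obstacle is the first step: making rigorous sense of the $\mu \to 0$ limit despite the explicit $\tfrac{1}{\mu}$ in $f$ coming from (\ref{CCL}). The resolution — noting that the penalty contribution $f^{(1)}$ is quadratic in $L$, so that $\mu f$ is smooth and the singular term drops out of the first-order coefficient — is precisely what certifies that the weak-coupling asymptotics are governed by the uncoupled Gramians $P_1, P_2$ and $\cQ_0$ alone. The subsequent block computation is routine but sign-sensitive, so I would track the two symmetrizers and the $\Theta_0$ antisymmetry with care.
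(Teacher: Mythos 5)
Your proposal is correct and follows essentially the same route as the paper's proof: differentiate the fixed-point relation $L_\mu=\mu f(\mu,L_\mu)$ at $\mu=0$ to obtain $L'=f(0,0)$ evaluated at the uncoupled configuration, identify the observability Gramian via the lower-rank form $\cQ=S^\rT\wh{\cQ}S$ (equivalently $\cQ_0=\bL(A_\tau^\rT,S_0^\rT S_0)$ through the similarity $\wh{\cA}_\tau=S_0A_\tau S_0^{-1}$), and carry out the same block and symmetrizer computation yielding (\ref{SST}) and hence (\ref{L'}), with all signs matching. Your decomposition $g(\mu,L)=\mu f^{(0)}(L)+f^{(1)}(L)$ with $f^{(1)}(L)=O(\|L\|^2)$, giving $\d_L g(0,0)=0$, is a slightly tidier justification that the $\tfrac{1}{\mu}$ penalty term drops out of the first-order coefficient than the paper's a posteriori remark that $\tfrac{1}{\mu}L_\mu\Pi L_\mu\sim\mu f(0,0)\Pi f(0,0)\to 0$, but it is the same underlying idea rather than a different method.
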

%%%%%%%%%%%%%%%%%%%%%%%%%%%%%%%%%%%%%%%%%%%%%%%%%%%%%%%%%%%%%%%%%%%%%%%%%%%%%%%%
\begin{proof}
From the representation (\ref{LF}) of (\ref{LLL}) (or from (\ref{Ldot})), it follows that (\ref{Lasy}) holds with
\begin{equation}
\label{L''}
    L':= \d_\mu L_{\mu}\big|_{\mu = 0} = f(0,0),
\end{equation}
where we have also used the initial condition $L_0=0$. Here,
\begin{equation}
\label{f00}
    f(0,0)
    =
  -8
  \Pi^{-1}
  \bL(P_2 \Pi^{-1}, \bS(\bS(\Theta \cE)_{12}))
  \Pi^{-1}
\end{equation}
is associated with the uncoupled plant and observer, in which case they have the block-diagonal controllability Gramian in (\ref{cP*}), where the matrices $P_1$ and $P_2$ are given by (\ref{PALE0}), (\ref{A0}) since the matrix (\ref{cAL}) reduces to $    \cA
    =
    2I_2 \ox (\Theta_0 K)
$ with a purely imaginary spectrum due to $K\succ 0$.
In the limit of uncoupled plant and observer,
$\tfrac{1}{\mu }L_{\mu} \Pi L_\mu \sim \mu f(0,0)\Pi f(0,0)\to 0 $ as $\mu\to 0+$, whereby
(\ref{CCL}) leads to $\cC^\rT\cC=S^\rT S$ at $\mu=0$, and the ALE (\ref{cQALE}) for the observability Gramian $\cQ$  takes the form
\begin{equation}
\label{cQ0ALE}
    \cA_\tau^\rT \cQ
    +
    \cQ \cA_\tau
    +
    S^\rT S = 0.
\end{equation}
The property (\ref{SAAS}) of the observers under consideration implies that $S\cA_{\tau} = S\cA  - \tfrac{1}{2\tau} S = \wh{\cA} S  - \tfrac{1}{2\tau} S = \wh{\cA}_{\tau}S$ and hence,
 (\ref{cQ0ALE}) admits a lower-rank solution
\begin{equation}
\label{SQS}
    \cQ = S^\rT \wh{\cQ} S.
\end{equation}
Indeed, its substitution into the left-hand side of (\ref{cQ0ALE}) yields
$
    \cA_\tau^\rT \cQ
    +
    \cQ \cA_\tau
    +
    S^\rT S
     =
    S^\rT(\wh{\cA}_\tau^\rT \wh{\cQ}
    +
    \wh{\cQ} \wh{\cA}_\tau
    +
    I_n)
    S$. Therefore, (\ref{Qhat}) makes (\ref{SQS}) a unique solution of the ALE (\ref{cQ0ALE}), since the matrix $\wh{\cA}$ is isospectral to $2\Theta_0 K$ with a purely imaginary spectrum (so that $\wh{\cA}_\tau$ is Hurwitz). In view of $S = {\scriptsize\begin{bmatrix} 1 & -1\end{bmatrix}}\ox S_0$, the Hankelian takes the form
    $
        \cE
        =
        S^\rT \wh{\cQ} S \diag_{k=1,2}(P_k)
        =
        {\scriptsize
            \begin{bmatrix}
            \cQ_0 P_1 & - \cQ_0 P_2\\
            -\cQ_0P_1 & \cQ_0P_2
            \end{bmatrix}}
    $,
    with the matrix $\cQ_0$ given by (\ref{cQ0}),
    and hence,
    \begin{equation}
    \label{SST}
    \bS(\bS(\Theta \cE)_{12})
    =
    \tfrac{1}{4}
    ((P_1+P_2)\cQ_0 \Theta_0 - \Theta_0 \cQ_0 (P_1+P_2)).
    \end{equation}
    Substitution of  (\ref{SST}) into  (\ref{f00}) and (\ref{L''}) leads to (\ref{L'}).
\end{proof}
%%%%%%%%%%%%%%%%%%%%%%%%%%%%%%%%%%%%%%%%%%%%%%%%%%%%%%%%%%%%%%%%%%%%%%%%%%%%%%%%

\noindent{\bf Example 2.}
Let the plant and observer be one-mode QHOs ($n=\nu=2$), with the CCR matrix  $\Theta_0 := \frac{1}{2} \bJ$ (corresponding to the position-momentum pair, with $\bJ$ given by (\ref{bJ}), and a positive definite energy matrix
    $$
    K :=
        {\scriptsize\begin{bmatrix}
         2.7604 &  -1.7564\\
        -1.7564 &   2.4982
        \end{bmatrix}}.
    $$
The frequencies of such a QHO are $\pm 1.9522$, and the corresponding margin (\ref{tau*}) is   $\tau_* = 0.2561$.
The initial covariance conditions in (\ref{Sig})  for the plant and observer (prepared independently) are
   $$
    \Sigma_1
    ={\scriptsize\begin{bmatrix}
    4.1400 &  -2.4687\\
   -2.4687 &   4.3641
    \end{bmatrix}},
    \qquad
    \Sigma_2
    ={\scriptsize\begin{bmatrix}
    2.2174  &  1.3387\\
    1.3387  &  2.4695
    \end{bmatrix}}
    $$
    and satisfy the uncertainty relation constraints $\Sigma_k+i\Theta_0 \succcurlyeq 0$.
    With the effective time horizon chosen to be    $\tau=4.0614\gg \tau_*$,
the second-moment matrices of the uncoupled plant and observer variables in (\ref{PALE1}), (\ref{PALE2}) are
    $$
    P_1 =
    {\scriptsize\begin{bmatrix}
    9.7049 &     7.0975\\
    7.0975 &   11.6664
    \end{bmatrix}},
    \qquad
    P_2 =
    {\scriptsize\begin{bmatrix}
    2.4681 &    1.7476\\
    1.7476 &   2.7674
    \end{bmatrix}}.
    $$
    For the     CQF problem (\ref{cZ}),
    the observer back-action penalty matrix $\Pi$ in (\ref{ZE}) and the weighting matrix $S_0$ in the estimation error (\ref{ES0}) are given by
    $$
    \Pi =
    {\scriptsize\begin{bmatrix}
     1.2907 &   0.9694\\
    0.9694  &  3.7716
    \end{bmatrix}},
    \qquad
    S_0 =
    {\scriptsize\begin{bmatrix}
       -1.7389  &  0.2192\\
    0.0170  &  1.0458
    \end{bmatrix}}.
    $$
    The mean square of the estimation error for the uncoupled observer is
    $
        \Tr(S_0(P_1 + P_2)S_0^\rT) = 46.8634$.
    The mean square value $\bE_{\tau}(E^\rT E)$ for the optimal observer in the CQF problem (\ref{cZ}) (subject to the autonomous estimation error dynamics) is shown in Fig.~\ref{fig:EEE}
     \begin{figure}[thpb]
      \centering
%      \framebox{\parbox{82mm}{
      %\includegraphics[width=85mm,height=55mm]{EEEmu2.pdf}
      %\includegraphics[width=85mm,height=55mm]{EEEmu2_red.pdf}
      \includegraphics[width=85mm,height=55mm]{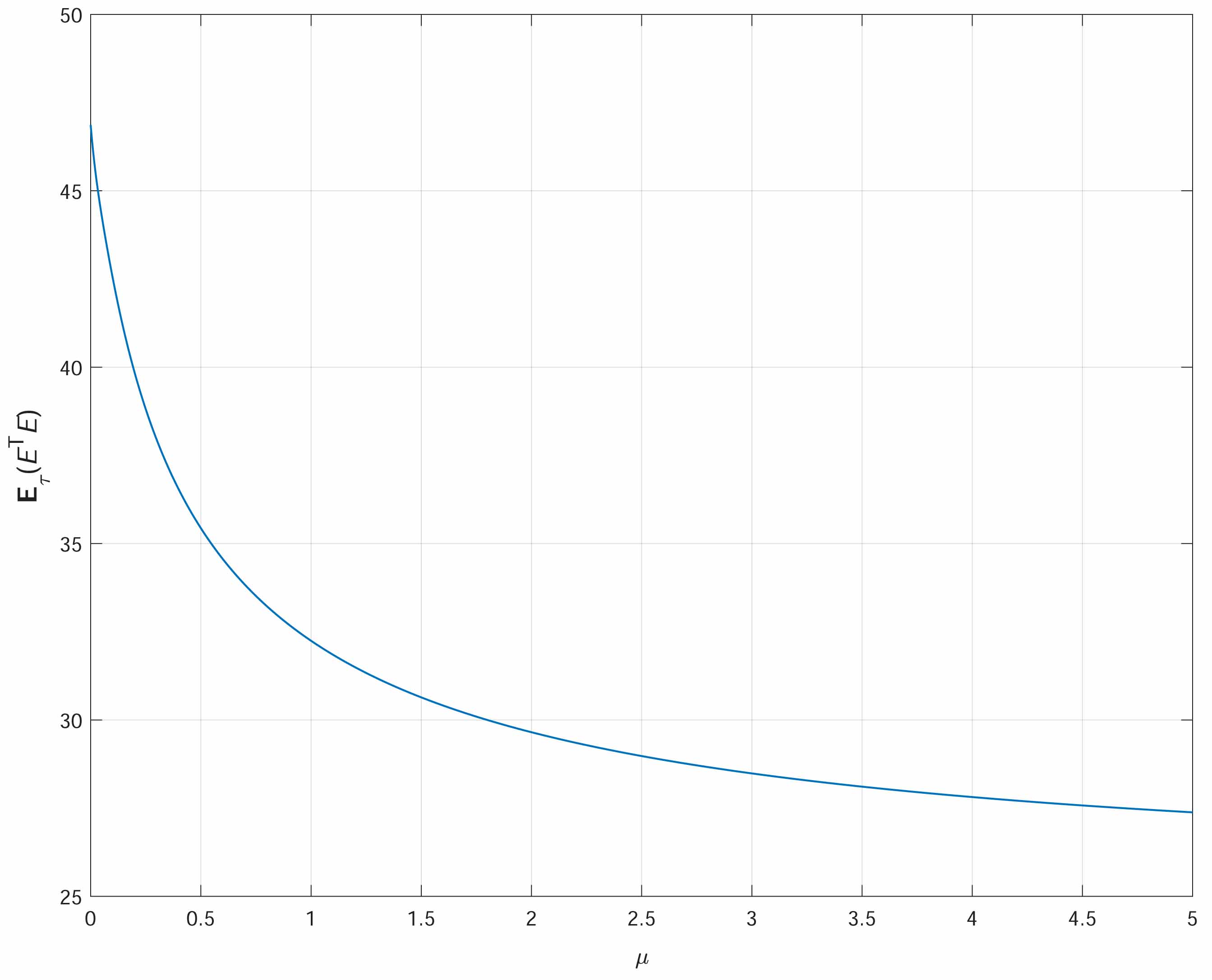}
 %     }}
      \caption{The dependence of the mean square value $\bE_\tau(E^\rT E)$ of the estimation error  on the parameter $\mu$ in Example 2.}
      \label{fig:EEE}
   \end{figure}
for a range of values of the parameter $\mu$ in (\ref{mu}). This is a monotonically decreasing function of $\mu$, whose computation (along with the optimal observers) was carried  out  using the homotopy method starting from the uncoupled observer (at $\mu = 0$). The observer back-action penalty term $    \lambda \bE_\tau
    (\eta^{\rT} \Pi \eta)$ is shown in Fig.~\ref{fig:pen}.
\begin{figure}[thpb]
      \centering
%      \framebox{\parbox{82mm}{
      %\includegraphics[width=85mm,height=55mm]{lampenmu1_red.pdf}
      \includegraphics[width=85mm,height=55mm]{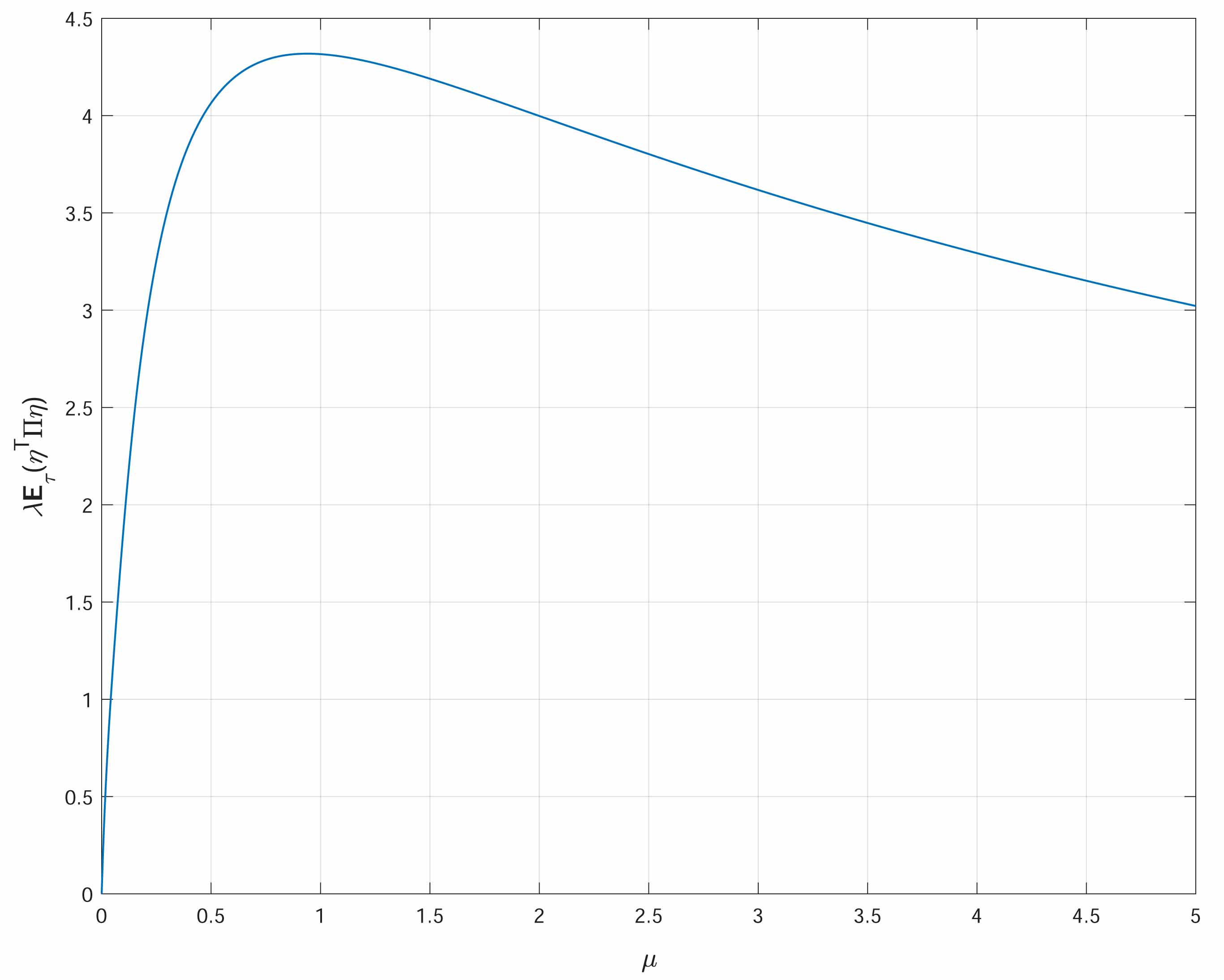}
 %     }}
      \caption{The dependence of the back-action penalty term $    \lambda \bE_\tau
    (\eta^{\rT} \Pi \eta)$ on the parameter $\mu$ in Example 2.}
      \label{fig:pen}
   \end{figure}
    The entries of the corresponding optimal coupling matrix $L_\mu$  are presented in Fig.~\ref{fig:LLL}.
\begin{figure}[thpb]
      \centering
%      \framebox{\parbox{82mm}{
      %\includegraphics[width=85mm,height=55mm]{LLLmu1.pdf}
%\includegraphics[width=85mm,height=55mm]{LLLmu1_red.pdf}      
\includegraphics[width=85mm,height=55mm]{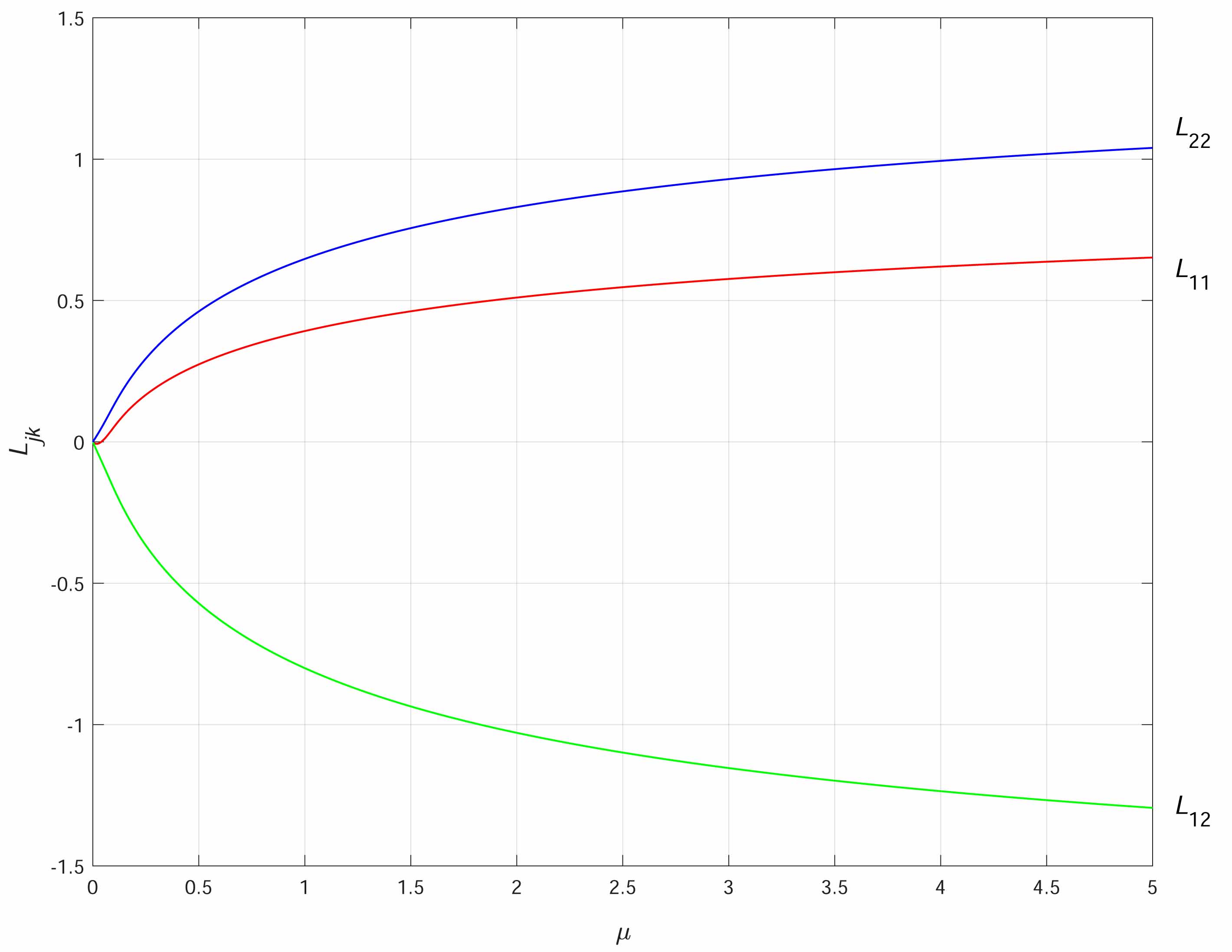}      
      
 %     }}
      \caption{The dependence of the entries of the coupling matrix $L_{\mu}$ on the parameter $\mu$ in Example 2.}
      \label{fig:LLL}
   \end{figure}
   The calculation of the matrix $L'$, which specifies their asymptotic behaviour as $\mu\to 0+$
    according to Theorem~\ref{th:asy}, yielded
    $$
    L'
    =
    {\scriptsize\begin{bmatrix}
    -0.7297 &  -1.7445\\
   -1.7445  &  1.1737
   \end{bmatrix}}.
    $$
    For the range $0\< \mu \< 5$  (that is, $\lambda \>0.2$), the plant-observer energy matrix ${\scriptsize\begin{bmatrix}K & L_\mu\\ L_\mu & K\end{bmatrix}}$ remained positive definite, so that the system variables retained oscillatory behaviour, which justifies the discounted averaging approach.
\hfill$\blacktriangle$

%%%%%%%%%%%%%%%%%%%%%%%%%%%%%%%%%%%%%%%%%%%%%%%%%%%%%%%%%%%%%%%%%%%%%%%%%%%%%%%
\section{Conclusion}\label{sec:conc}
%%%%%%%%%%%%%%%%%%%%%%%%%%%%%%%%%%%%%%%%%%%%%%%%%%%%%%%%%%%%%%%%%%%%%%%%%%%%%%%

We have considered the computation of discounted averages with exponentially decaying weights for moments of system variables for QHOs, including the mean square functionals, both in the state space and frequency domain.  For a quantum plant and a quantum observer in the form of directly coupled  QHOs,  we have obtained small-gain-theorem  bounds  for the back-action of the observer on the covariance dynamics of the plant in terms of the plant-observer coupling. We have considered a CQF problem of minimizing the discounted mean square value of the estimation error together with a penalty on the observer back-action.  First-order necessary conditions of optimality have been obtained for this problem in the form of a set of algebraic matrix equations involving two coupled ALEs. We have applied Lie-algebraic %and symplectic geometric
techniques  to these equations and discussed a solution of the CQF problem in the case of autonomous estimation error  dynamics, including the homotopy method for its implementation.  These results  have been illustrated by numerical experiments. 

%%%%%%%%%%%%%%%%%%%%%%%%%%%%%%%%%%%%%%%%%%%%%%%%%%%%%%%%%%%%%%%%%%%%%%%%%%%%%%%%

%===============================================================================


\begin{thebibliography}{99}
%==============================================================================
\bibitem{AM_1979}
B.D.O.Anderson, and J.B.Moore,
\emph{Optimal Filtering},
Prentice Hall, New York, 1979.
%==============================================================================
\bibitem{AM_1989}
B.D.O.Anderson, and J.B.Moore,
\emph{Optimal Control: Linear Quadratic Methods},
Prentice Hall, London, 1989.
%==============================================================================
\bibitem{A_1989}
V.I.Arnold, \emph{Mathematical Methods of Classical Mechanics}, 2nd Ed., Springer-Verlag, New York, 1989.
%==============================================================================
\bibitem{BH_1989}
D.S.Bernstein, and W.M.Haddad,
LQG control with an
$H^{\infty}$ performance bound: a Riccati equation approach,
\textit{IEEE Trans.
Automat. Contr.}, vol. 34, no. 3, 1989, pp. 293--305.
%==============================================================================
\bibitem{B_1954}
A.S.Besicovitch,
\emph{Amost Periodic Functions}, Dover, New York, 1954.
%==============================================================================
\bibitem{B_1968}
P.Billingsley, \emph{Convergence of Probability Measures}, John Wiley \& Sons, New York, 1968.
%%==============================================================================
\bibitem{B_1965}
    D.Blackwell, Discounted dynamic programming, \emph{Ann. Math. Statist.},
    vol. 36, no. 1, 1965, pp. 226--235.
%==============================================================================
\bibitem{BP_2006}
H.-P.Breuer, and F.Petruccione,
\textit{The Theory of Open Quantum Systems},
Clarendon Press, Oxford, 2006.
%%%==============================================================================
\bibitem{CL_1981}
A.O.Caldeira, and A.J.Leggett, Influence of dissipation on quantum tunneling in macroscopic systems, \textit{Phys. Rev. Lett.}, vol. 46, no. 4, 1981, p. 211--214.
%%==============================================================================
\bibitem{CH_1971}
C.D.Cushen, and R.L.Hudson, A quantum-mechanical central limit theorem,
\emph{J. Appl. Prob.}, vol. 8, no. 3, 1971, pp. 454--469.
%%==============================================================================
\bibitem{DJ_2006}
C.D'Helon,  and M.R.James,
Stability, gain, and robustness in quantum feedback networks,
\emph{Phys. Rev. A.}, vol. 73, no. 5, 2006, p. 053803.
%%==============================================================================
\bibitem{D_2006}
M. de Gosson,
\emph{Symplectic Geometry and Quantum Mechanics},
Birkh\"{a}user, Basel, 2006.
%%==============================================================================
\bibitem{EB_2005}
S.C.Edwards, and V.P.Belavkin,
Optimal quantum filtering and
quantum feedback control,
arXiv:quant-ph/0506018v2, August 1,  2005.
%%%==============================================================================
\bibitem{F_1971}
W.Feller, \emph{An Introduction to Probability Theory and Its Applications. Vol. II}, 2nd Ed., John Wiley \& Sons, New York, 1971.
%%==============================================================================


\bibitem{GZ_2004}
C.W.Gardiner, and P.Zoller,
\textit{Quantum Noise}, 3rd Ed.,
Springer, Berlin, 2004.
%%==============================================================================
\bibitem{GJ_2009}
J.Gough, and M.R.James,
Quantum feedback networks: Hamiltonian
formulation,
\emph{Commun. Math. Phys.},  vol. 287, 2009, pp. 1109--1132.
%%==============================================================================
\bibitem{H_2008}
N.J.Higham,
\emph{Functions of Matrices}, SIAM, Philadelphia, 2008.
%%%==============================================================================

\bibitem{H_1991}
A.S.Holevo, Quantum stochastic calculus,
\textit{J. Math. Sci.}, vol. 56, no. 5, 1991, pp. 2609--2624.
%==============================================================================
\bibitem{H_2001}
A.S.Holevo, \textit{Statistical Structure of Quantum Theory}, Springer, Berlin, 2001.
%==============================================================================
\bibitem{HJ_2007}
R.A.Horn, and C.R.Johnson,
\textit{Matrix Analysis},
Cambridge
University Press, New York, 2007.
%%==============================================================================
\bibitem{HP_1984}
R.L.Hudson,  and K.R.Parthasarathy,
Quantum Ito's Formula and Stochastic Evolutions,
\emph{Commun. Math. Phys.}, vol.  93, 1984, pp. 301--323.
%--------------------------------------------------------------------
\bibitem{I_1918}
L.Isserlis,
On a formula for the product-moment coefficient of any order of a normal frequency
distribution in any number of variables,
     \textit{Biometrika}, vol. 12,  1918, pp. 134--139.

%%==============================================================================

\bibitem{JNP_2008}
M.R.James, H.I.Nurdin, and I.R.Petersen,
$H^{\infty}$ control of
linear quantum stochastic systems,
\textit{IEEE Trans.
Automat. Contr.}, vol. 53, no. 8, 2008, pp. 1787--1803.
%==================================================================================================
\bibitem{J_1997}
S.Janson,
\textit{Gaussian Hilbert Spaces},
Cambridge University Press, Cambridge, 1997.

%%==============================================================================
\bibitem{KS_1972}
H.Kwakernaak, and R.Sivan,
\textit{Linear Optimal Control Systems},
Wiley, New York, 1972.
%%%==============================================================================
\bibitem{L_2000}
S.Lloyd, Coherent quantum feedback, \emph{Phys. Rev. A}, vol. 62, no. 2, 2000, pp. 022108.
%%==============================================================================
\bibitem{M_1988}
J.R.Magnus,
\textit{Linear Structures},
Oxford University Press, New York, 1988.
%%%==============================================================================
\bibitem{MB_1985}
M.Mariton,  and P.Bertrand, A homotopy algorithm for solving coupled Riccati equations,
\emph{Optim. Contr. Appl. Meth.}, vol. 6, no. 4, 1985, pp. 351--357.

%%==============================================================================
\bibitem{M_1998}
E.Merzbacher,
\textit{Quantum Mechanics}, 3rd Ed.,
Wiley, New York, 1998.
%%==============================================================================
\bibitem{M_1995}
P.-A.Meyer, \emph{Quantum Probability for Probabilists}, 2nd Ed., Springer, Berlin, 1995.
%%==============================================================================
\bibitem{MJ_2012}
Z.Miao, and M.R.James,
Quantum observer for linear quantum
stochastic systems, Proc. 51st IEEE Conf. Decision Control, Maui,
Hawaii, USA, December 10-13, 2012, pp. 1680--1684.

%%==============================================================================
\bibitem{NC_2000}
M.A.Nielsen, and I.L.Chuang,
\textit{Quantum Computation and Quantum Information},
Cambridge University Press, Cambridge, 2000.

%%==============================================================================
\bibitem{NJP_2009}
H.I.Nurdin, M.R.James, and I.R.Petersen,
Coherent quantum LQG
control,
\textit{Automatica}, vol.  45, 2009, pp. 1837--1846.

%%==============================================================================
\bibitem{O_1993}
P.J.Olver,
\textit{Applications of Lie Groups to Differential Equations}, 2nd Ed.,
Springer, New York, 1993.

%%==============================================================================
\bibitem{P_1992}
K.R.Parthasarathy,
\textit{An Introduction to Quantum Stochastic Calculus},
Birk\-h\"{a}user, Basel, 1992.
%%==============================================================================
\bibitem{KRP_2010}
K.R.Parthasarathy,
What is a Gaussian state?
\emph{Commun. Stoch. Anal.}, vol. 4, no. 2, 2010, pp. 143--160.
%%==============================================================================
\bibitem{P_2010}
I.R.Petersen,
Quantum linear systems theory,
Proc. 19th Int. Symp. Math. Theor. Networks Syst., Budapest, Hungary, July 5--9, 2010, pp.  2173--2184.
%%==============================================================================
\bibitem{P_2014}
I.R.Petersen,
A direct coupling coherent quantum observer, Proc.
IEEE MSC 2014, Nice/Antibes, France, 8--10 October 2014, pp. 1960--1963.
%%==============================================================================
\bibitem{P_2017}
I.R.Petersen,
Time averaged consensus in a direct coupled coherent quantum observer network,
\textit{Contr. Theory Techn.}, vol. 15, no. 3, pp. 163--176.
%%==============================================================================
\bibitem{PH_2015}
I.R.Petersen, and E.H.Huntington,
A possible implementation of a direct coupling coherent quantum
observer,
Proc. Australian Control Conference, 5-6 November 2015, Gold Coast, Australia, pp. 105--107
(arXiv:1509.01898v2 [quant-ph], 10 September 2015).
%%==============================================================================
\bibitem{PBGM_1962}
L.S.Pontryagin, V.G.Boltyanskii,
R.V.Gamkrelidze, and E.F. Mishchenko,
\emph{The Mathematical Theory of Optimal Processes},
Wiley, New York, 1962.
%==============================================================================

\bibitem{S_1994}
J.J.Sakurai,
\textit{Modern Quantum Mechanics},
 Addison-Wesley, Reading, Mass., 1994.
%==============================================================================

\bibitem{SP_2009}
A.J.Shaiju, and I.R.Petersen,
On the physical realizability of
general linear quantum stochastic differential equations with
complex coefficients,
Proc. Joint 48th IEEE Conf. Decision Control \&
28th Chinese Control Conf.,
Shanghai, P.R. China, December 16--18, 2009, pp. 1422--1427.
%==============================================================================
\bibitem{SP_2012}
A.J.Shaiju, and I.R.Petersen,
A frequency domain condition for the physical realizability of linear quantum systems,
\emph{IEEE Trans. Automat. Contr.},  vol. 57, no. 8, 2012, pp. 2033--2044.
%%==============================================================================
\bibitem{SVP_2014}
A.K.Sichani, I.G.Vladimirov, and I.R.Petersen,
Robust mean square stability of open quantum stochastic systems with Hamiltonian perturbations in a Weyl quantization form,
Proc. Australian Control Conference
Canberra, 17-18 November 2014, pp. 83--88 (arXiv:1503.02122 [quant-ph], 7 March 2015).

%%==============================================================================
\bibitem{SVP_2017}
A.K.Sichani, I.G.Vladimirov, and I.R.Petersen,
A numerical approach to optimal coherent quantum LQG controller design using gradient descent, \textit{Automatica}, vol. 85, 2017, pp. 314--326.


%%==============================================================================
\bibitem{SIG_1998}
R.E.Skelton, T.Iwasaki, and K.M.Grigoriadis,
\textit{A Unified Algebraic Approach to Linear Control Design},
Taylor \& Francis, London, 1998.
%%==============================================================================
\bibitem{SW_1997}
H.J.Sussmann, and J.C.Willems,
300 years of optimal control: from the brachystochrone to the maximum principle,
\textit{Control Systems}, vol. 17, no. 3, 1997, pp. 32--44.
%%==============================================================================
\bibitem{V_1984}
V.S.Varadarajan, \emph{Lie Groups, Lie Algebras, and Their Representations},
Springer-Verlag, New York, 1984.

%%==============================================================================
\bibitem{VP_2010b}
I.G.Vladimirov, and I.R.Petersen,
Hardy-Schatten norms of systems, output energy cumulants and linear quadro-quartic  Gaussian control,
Proc. 19th Int. Symp. Math. Theor. Networks Syst., Budapest, Hungary, July 5--9,  2010, pp.  2383--2390.
%%==============================================================================
\bibitem{VP_2011b}
I.G.Vladimirov, and I.R.Petersen,
A dynamic programming approach to finite-horizon coherent quantum LQG control, 	
Proc. Australian Control Conference, Melbourne, 10--11 November, 2011, pp. 357--362
(preprint:  	arXiv:1105.1574v1 [quant-ph], 9 May 2011).
%%==============================================================================
\bibitem{VP_2013a}
I.G.Vladimirov, and I.R.Petersen,
A quasi-separation principle and Newton-like scheme for coherent quantum LQG control,
\emph{Syst. Contr. Lett.}, vol. 62, no. 7, 2013, pp. 550--559.
%%==============================================================================
\bibitem{VP_2013b}
I.G.Vladimirov, and I.R.Petersen,
Coherent quantum filtering for physically realizable linear quantum plants,
Proc. European Control Conference, Zurich, Switzerland, 17-19 July 2013,  pp. 2717--2723.
%%==============================================================================
\bibitem{V_2015a}
I.G.Vladimirov,
A transverse Hamiltonian  variational technique for open quantum stochastic systems and its application to coherent quantum control, Proc. IEEE Multi-Conference on Systems and Control, 21-23 September 2015, Sydney, Australia, pp. 29--34
(arXiv:1506.04737v2 [quant-ph], 7 August 2015).
%%==============================================================================
\bibitem{V_2015b}
I.G.Vladimirov,
Weyl variations and local sufficiency of linear observers in the mean square optimal coherent quantum filtering problem,
Proc. Australian Control Conference, 5-6 November 2015, Gold Coast, Australia, pp. 93--98 (arXiv:1506.07653 [quant-ph], 25 June 2015).
%%==============================================================================
\bibitem{VP_2016}
I.G.Vladimirov, and I.R.Petersen,
Directly coupled observers for quantum harmonic oscillators with discounted
mean square cost functionals and penalized back-action,
2016 IEEE Conference on Norbert Wiener in the 21st Century (21CW)
July 13-16, 2016. University of Melbourne, Australia, pp. 78--83 (arXiv:1602.06498 [cs.SY], 21 February 2016).
%%==============================================================================
\bibitem{V_2017}
I.G.Vladimirov,
A phase-space formulation and Gaussian approximation of the filtering equations for nonlinear quantum stochastic systems,
\textit{Contr. Theory Techn.},
vol. 15, no. 3, 2017, pp. 177--192.
%%==============================================================================
\bibitem{WM_1994_book}
D.F.Walls, and G.J.Milburn,
\emph{Quantum Optics},
Springer, Berlin, 1994.
%%==============================================================================
\bibitem{W_1936}
J.Williamson,
On the algebraic problem concerning the normal forms of linear dynamical systems,
\textit{Am. J. Math.}, vol. 58, no. 1, 1936, pp. 141--163.
%==============================================================================
\bibitem{W_1937}
J.Williamson,
On the normal forms of linear canonical transformations in dynamics,
\textit{Am. J. Math.}, vol. 59, no. 3, 1937, pp. 599--617.
%%==============================================================================
\bibitem{WM_1994_paper}
H.M.Wiseman, and G.J.Milburn, All-optical versus electro-optical quantum limited
feedback, \emph{Phys. Rev. A}, vol. 49, no. 5, 1994, pp. 4110--4125.
%%==============================================================================
\end{thebibliography}
\end{document}